\newtheorem{theorem}{Theorem}
\newtheorem{property}{Property}
\newtheorem{lemma}{Lemma}
\newtheorem{prop}{Proposition}
\newtheorem{corollary}{Corollary}
\begin{document}

\title{MIMO Multiway Relaying with Pairwise Data Exchange: A Degrees of Freedom Perspective}
\author{\authorblockA {Rui Wang, \emph{Member}, \emph{IEEE} and Xiaojun Yuan, \emph{Member}, \emph{IEEE}}
\thanks{R. Wang is with the Institute of Network Coding, The Chinese University of
Hong Kong, Hong Kong, Email: ruiwang@inc.cuhk.edu.hk.}
\thanks{X. Yuan is with School of Information Science and Technology, ShanghaiTech University, Shanghai, 200031, China, Email: yuanxj@shanghaitech.edu.cn.}}
\maketitle

\begin{abstract}
In this paper, we study achievable degrees of freedom (DoF) of a multiple-input multiple-output (MIMO) multiway relay channel (mRC) where $K$ users, each equipped with $M$ antennas, exchange messages in a pairwise manner via a common $N$-antenna relay node.
A novel and systematic way of joint beamforming design at the users and at the relay is proposed to align signals for efficient implementation of physical-layer network coding (PNC).
It is shown that, when the user number $K=3$, the proposed beamforming design can achieve the DoF capacity of the considered mRC for any $(M,N)$ setups.
For the scenarios with $K>3$,
we show that the proposed signaling scheme can be improved by disabling a portion of relay antennas so as to align signals more efficiently. Our analysis reveals that the obtained achievable DoF is always piecewise linear, and is bounded either by the number of user antennas $M$ or by the number of relay antennas $N$.
Further, we show that the DoF capacity can be achieved for $\frac{M}{N} \in \left(0,\frac{K-1}{K(K-2)} \right]$ and $\frac{M}{N} \in \left[\frac{1}{K(K-1)}+\frac{1}{2},\infty \right)$, which
provides a broader range of the DoF capacity than the existing results.
Asymptotic DoF as $K\rightarrow \infty$ is also derived based on the proposed signaling scheme.
\end{abstract}

\begin{IEEEkeywords}
Multiple-input multiple-output (MIMO), physical-layer network coding (PNC), multiway relay channel (mRC), signal space alignment.
\end{IEEEkeywords}

\section{Introduction}

Recently, an exponential increase in the demands of wireless service has imposed a significant challenge on the design of wireless networks. Advanced techniques, such as physical-layer network coding (PNC), has been developed to achieve high spectrum efficiency \cite{ZhangMobicom06,Petar07}.
The simplest model for PNC is the two-way relay channel (TWRC) where two users exchange messages with the help of a relay node. With the well-known two-phase PNC protocol, the relay node receives a combination of the signals transmitted from the two users in the first phase, and then broadcasts a network-coded message in the second phase. The desired message is then extracted at each user end by exploiting the knowledge of the self-message. As compared to conventional one-way relaying where four phases are required in one round of information exchange,
PNC potentially achieves $100\%$ improvement in spectrum efficiency over TWRCs.

Abundant progresses have been made on the PNC design for TWRCs; see \cite{ZhangJSAC09,NamIT10,HJYangIT11,YangIT11,YuanIT13} and the references therein. In particular, it was shown in \cite{NamIT10} that, with nested lattice coding, the capacity of the TWRC can be achieved within $\frac{1}{2}$ bit. Later, the authors in \cite{HJYangIT11,YangIT11,YuanIT13} introduced the multiple-input multiple-output (MIMO) technique into TWRCs. It was revealed that the space-division based network coding scheme proposed in \cite{YuanIT13} achieves the asymptotic capacity of the MIMO TRWC at high signal-to-noise ratio (SNR) within $\frac{1}{2}\log(\frac{5}{4})$ bit per relay spatial dimension for an arbitrary antenna configuration.

A natural generalization of the TWRC is the multiway relay channel (mRC), where multiple users exchange messages with the help of a single relay.
Several mRC models have been studied in the literature recently.
Specifically, the authors in \cite{Jang_CL2010} studied a cellular two-way relaying model
where a base station exchanges private messages with multiple mobile users via a relay node;
the authors in \cite{Sezgin_ISIT2009,Chen_TWC2009} investigated mRCs in which the users are grouped into pairs and the two users in each pair exchange information with each other;
more generally, the authors in \cite{gunduz51} studied clustered mRCs, in which the users in the network are grouped into clusters and each user in a cluster wants to exchange information with the other users in the same cluster.
Approximate capacities of these mRC models were studied in \cite{gunduz51} and \cite{Chaaban_IT2013}, while the exact capacity characterizations still remain open.
Also, these initial works on mRC are limited to the single-antenna setup, i.e., each node in the network is equipped with one antenna.

The MIMO technique has been introduced into mRCs to allow spatial multiplexing. In a MIMO mRC, as each user in general transmits multiple spatial streams, a new challenge to be addressed is to mitigate the inter-stream interference at the relay and at the user ends.
Degrees of freedom (DoF) is a critical metric in characterizing the fundamental capacity of a wireless network \cite{Jafar08,Wang_IT2011}. The DoF of the MIMO mRC
has been previously studied in \cite{Lee_TWC2013, Lee10,Lee12,Chaaban_ISIT2013,Wang_IT2014,Tian_IT2013,Yuan_IWC2013}. For example,
the authors in \cite{Lee10} investigated the DoF capacity of the MIMO Y channel (a special case of the MIMO mRC with three users)
and showed that the DoF capacity can be achieved when $\frac{M}{N}\geq \frac{2}{3}$, where $M$ denotes the number of antennas at each user, and $N$ denotes the number of antennas at the relay.
The authors in \cite{Chaaban_ISIT2013} generalized the results in \cite{Lee10} by considering a three-user asymmetric MIMO Y channel with different numbers of antennas at the users, and proved that the DoF capacity can be achieved for arbitrary antenna setups. Recently, as parallel to the work in this paper, the work in \cite{Wang_IT2014} established the DoF capacity of the four-user symmetric MIMO Y channel for arbitrary antenna setups.
Further, the authors
in \cite{Tian_IT2013,Yuan_IWC2013}
studied more general data exchange models in which the users in the network are grouped into clusters, and each user in a cluster exchanges information only with the other users in the same cluster. In particular, the authors in \cite{Tian_IT2013} derived sufficient conditions on the antenna configuration to achieve the DoF capacity of a clustered mRC with pairwise data exchange, in which each user in a cluster sends a different message to each of the other users in the same cluster. Note that the data exchange models considered in \cite{Lee10} and \cite{Lee12} can be regarded as
the one-cluster case of the model studied in \cite{Tian_IT2013}.
Moreover, the author in \cite{Yuan_IWC2013} derived an achievable DoF for a clustered MIMO mRC with full data exchange, i.e., each user in a cluster delivers a common message to all the other users in the same cluster.

In this work, we study a symmetric MIMO mRC with pairwise data exchange, and derive an achievable DoF for an arbitrary setup of the antenna numbers $(M,N)$ and the user number $K$.
Roughly speaking, the DoF of a network is the number of independent spatial streams that can be supported by the network.
In the MIMO mRC of concern, multiple users are simultaneously served by a common relay.
To ensure that multiple spatial streams are still separable at every user end,
the number of relay antennas is usually the bottleneck of the network to achieve a higher DoF.
Therefore, the challenge is how to align the user and relay signals to efficiently utilize the relay's signal space.
To this end, we propose a novel and systematic beamforming design to achieve efficient signal alignment.
Specifically, we refer to a bunch of spatial streams as a \emph{unit}, in which each pair of users who want to exchange information contribute two spatial streams, one from each user; each spatial stream impinges upon (or is emitted from) the relay's antenna array at a certain direction, and these directions
form a spatial structure, referred to as a \emph{pattern}. The dimension of the space spanned by the spatial streams in a pattern gives a metric to evaluate the efficiency of this pattern. Then, the signal alignment problem is to construct units with the most efficient patterns to occupy the overall relay's signal space. An achievable DoF can be obtained by counting the number of units that can be constructed for any given antenna setup of $(M,N)$.

The main contributions of this work are summarized as follows:
\begin{itemize}
  \item We show that, for the considered MIMO mRC with $K=3$, the proposed signal alignment scheme achieves the DoF capacity for an arbitrary  $(M,N)$ setup, which coincides with the DoF result of \cite{Chaaban_ISIT2013}, and improves the existing DoF capacity result in \cite{Lee10} by including $\frac{M}{N}\in \left(0,\frac{2}{3}\right)$.
  \item For the case of $K>3$, we derive the DoF capacity of the MIMO mRC for
   $\frac{M}{N} \in \left(0,\frac{K-1}{K(K-2)} \right]$ and $\frac{M}{N} \in \left[\frac{1}{K(K-1)}+\frac{1}{2},\infty \right)$. This result is stronger than the previous result obtained in \cite{Tian_IT2013}, where the achievability of the DoF capacity is limited in the ranges of $\frac{M}{N} \in \left(0,\frac{1}{K} \right]$ and $\frac{M}{N} \in \left[\frac{1}{K(K-1)}+\frac{1}{2},\infty \right)$.
  \item
   For $K>3$, we also derive an achievable DoF for an arbitrary setup of antenna numbers $(M,N)$ satisfying $\frac{M}{N} \in \left(\frac{K-1}{K(K-2)}, \frac{1}{K(K-1)}+\frac{1}{2} \right)$. Our analysis reveals that the achievable DoF is piecewise linear
   and is bounded either by the number of antennas at each user or by the number of antennas at the relay. This piecewise linearity is similar to the DoF capacity obtained for the MIMO interference channel in \cite{Wang_IT2011}.
  \item Finally, we derive an asymptotic achievable DoF when $K$ tends to infinity. We show that the derived achievable total DoF is upper bounded by $\min\left(\frac{N^2}{N-M},2N\right)$ for arbitrary values of $M$ and $N$.
\end{itemize}

The rest of the paper is organized as follows. In Section II, we present the system model.
In Section III, a DoF upper bound is introduced as the benchmark of the system design.
The DoF capacity of the considered MIMO mRC with three users is presented in Section IV.
In Section V, we generalize the results to the case of an arbitrary number of users.
In Section VI, an improved DoF result is presented by disabling a portion of relay antennas.
Finally, we conclude the paper in Section VII.

\emph{Notation}:
Scalars, vectors, and  matrices are denoted by lowercase
regular letters, lowercase bold letters,
and uppercase bold letters, respectively. For a matrix $\mathbf{A}$, $\mathbf{A}^T$ and $\mathbf{A}^H$
denote the transpose and the Hermitian transpose of $\bf A$, respectively;
${\rm tr}({\bf A})$ and ${\bf A}^{-1}$ stand for the trace and the inverse of ${\bf A}$, respectively;
${\rm diag}({\bf A}_1,{\bf A}_2,\cdots,{\bf A}_n)$ denotes a block-diagonal matrix with the $i$-th diagonal block specified by ${\bf A}_i$ where $n$ is an integer;
$\mathrm{span}(\mathbf{A})$ and $\mathrm{null}(\mathbf{A})$ denote
the column space and  the nullspace of $\bf A$, respectively;
${\bf I}_n$ denotes an $n \times n$ identity matrix;
$\mathrm{dim}(\mathcal{S})$ denotes the dimension of a space $\mathcal{S}$;
$\mathcal{S}\cap \mathcal{U}$ and $\mathcal{S}\oplus \mathcal{U}$ denote the intersection and the direct sum of two spaces $\mathcal{S}$ and $\mathcal{U}$, respectively; $\mathcal{\mathbb{R}}^{n\times m}$ and $\mathcal{
\mathbb{C}}^{n\times m}$ denote the $n \times m$ dimensional real space and complex
space, respectively; $\log (\cdot )$ denotes the logarithm
with base $2$;
$[\cdot ]^{+}$ denotes $\max \{\cdot ,0\}$;
$\mathcal{CN}(\mu ,\sigma ^{2})$ denotes the distribution of a circularly symmetric complex Gaussian random variable with mean $\mu$ and variance $\sigma ^{2}$; $\left(
                                                                               \begin{smallmatrix}
                                                                                 n \\
                                                                                 m \\
                                                                               \end{smallmatrix}
                                                                             \right)=\frac{n!}{m!(n-m)!}$ denotes the binomial coefficient indexed by $n$ and $m$.

\section{System Model}

\begin{figure}[tp]
\begin{centering}
\includegraphics[scale=0.8]{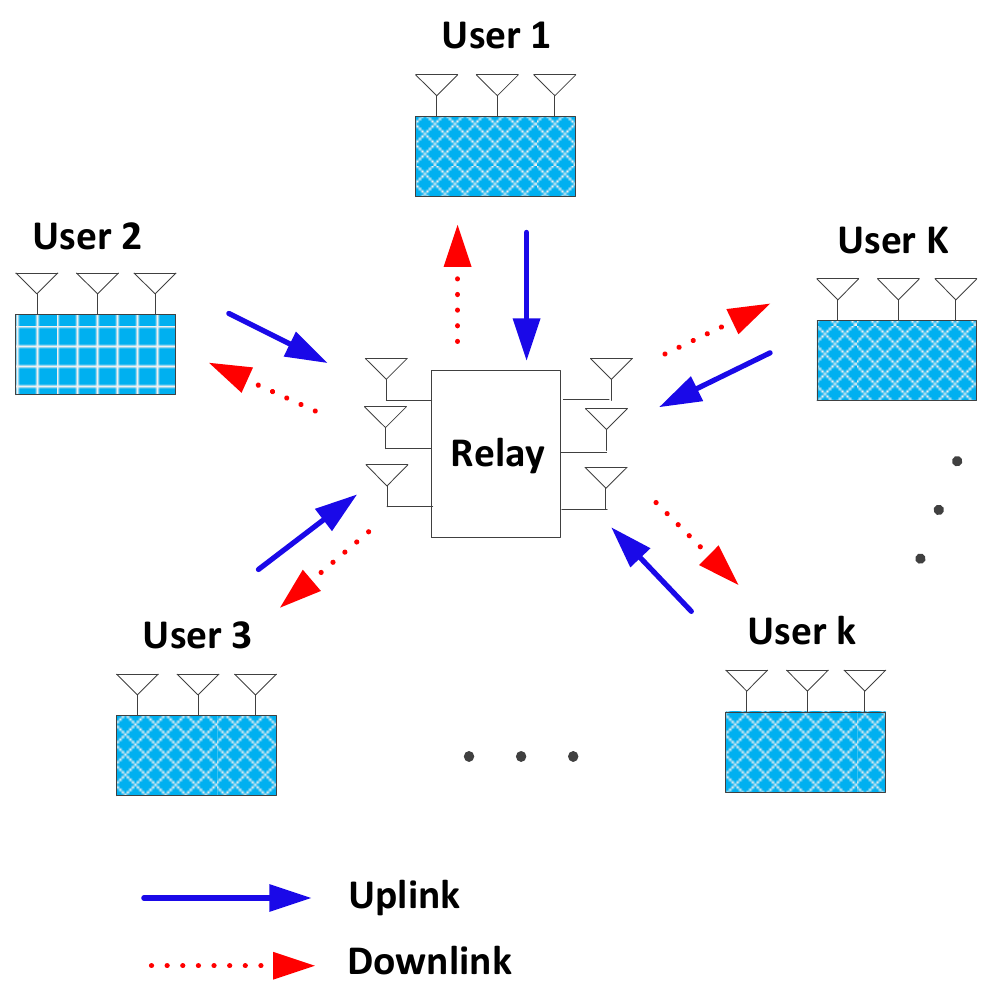}
\vspace{-0.1cm}
\caption{An illustration of the MIMO mRC with $K$ users operating in pairwise exchange.} \label{Fig_Config_MIMOTWRC1}
\end{centering}
\vspace{-0.3cm}
\end{figure}

\subsection{Channel Model}
Consider a discrete memoryless symmetric MIMO mRC $(M,N,K)$, where $K$ users, each equipped with $M$ antennas, exchange messages in a pairwise manner with the help of a common $N$-antenna relay node, as illustrated in Fig.~\ref{Fig_Config_MIMOTWRC1}.
Full-duplex communication is assumed, i.e., all the nodes transmit and receive signal simultaneously.\footnote{All the DoF results obtained in this paper directly hold for the half-duplex case by including a multiplicative factor of $\frac{1}{2}$.}
The direct links between users are ignored due to physical impairments such as shadowing and path loss of wireless fading channels.

Each round of information exchange is implemented in two phases with equal time duration $T$. In the first phase (termed the \emph{uplink phase}), all the users simultaneously transmit signals to the relay node. The received signal at the relay node can be written as
\begin{equation}\label{System-1}
\mathbf{Y}_R = \sum_{k=1}^{K}\mathbf{H}_{k}\mathbf{X}_{k}+\mathbf{Z}_R,~k \in \mathcal{I}_K \triangleq \{1,2,\cdots,K\}
\end{equation}
where $\mathbf{H}_{k}\in \mathcal{\mathbb{C}}^{N\times M}$ denotes the channel matrix from user $k$ to the relay; $\mathbf{X}_{k}
\in \mathcal{\mathbb{C}}^{M\times T}$ is the transmit signal from user $k$;
similarly, $\mathbf{Y}_{R}
\in \mathcal{\mathbb{C}}^{N\times T}$ denotes the received signal at the relay node; $\mathbf{Z}_R \in \mathcal{\mathbb{C}}^{N\times T}$ is the additive white Gaussian noise (AWGN) matrix at the relay node and the elements are independently drawn from the distribution of $\mathcal{CN}(0, \sigma^2_R)$. The transmit signal $\mathbf{X}_{k}$ at user $k$ satisfies the power constraint of
\begin{equation}\label{System-3} \nonumber
\frac{1}{T} \mathrm{tr}(\mathbf{X}_{k}\mathbf{X}_{k}^H) \leq P_k, \ \ k \in \mathcal{I}_K
\end{equation}
where $P_k$ is the maximum transmission power allowed at user $k$.

In the second phase (termed the \emph{downlink phase}), the relay sends the processed signals to all user ends. The received signal at user $k$ is denoted by
\begin{equation}\label{System-2}
\mathbf{Y}_{k} = \mathbf{G}_{k}\mathbf{X}_{R}+\mathbf{Z}_{k},  \ \ k \in \mathcal{I}_K
\end{equation}
where $\mathbf{G}_{k}\in \mathcal{\mathbb{C}}^{M\times N}$ denotes the channel matrix from the relay to user $k$;
$\mathbf{X}_{R}\in \mathcal{\mathbb{C}}^{N\times T}$ is the transmit signal at the relay node; $\mathbf{Z}_{k} \in \mathcal{\mathbb{C}}^{N\times T}$ is the AWGN noise matrix at user $k$ with the elements independently drawn from $\mathcal{CN}(0,\sigma^2_k)$. The transmit signal $\mathbf{X}_{R}$ satisfies the power constraint of
\begin{equation}\label{System-4} \nonumber
\frac{1}{T} \mathrm{tr}(\mathbf{X}_{R}\mathbf{X}_{R}^H) \leq P_R,
\end{equation}
where $P_R$ is the maximum transmission power allowed at the relay.

We assume that the elements of the channel matrices $\mathbf{H}_{k}$ and $\mathbf{G}_{k}$, $\forall k$, are draw from a continuous distribution, which implies that these channel matrices are of full column or row rank, whichever is smaller, with probability one. The channel state information is assumed to be perfectly known at all nodes, following the convention in \cite{Lee_TWC2013, Lee10,Lee12,Chaaban_ISIT2013,Wang_IT2014,Tian_IT2013,Yuan_IWC2013}.
It is worth noting that the considered MIMO mRC reduces to the MIMO two-way relay channel (TWRC) when $K=2$, and to the MIMO Y channel when $K=3$. As the DoF capacity of the MIMO TWRC is well understood, we henceforth focus on the case of $K\geq3$.

\subsection{Linear Signaling Scheme}
In the considered mRC, each user $k$, $k \in \mathcal{I}_K$, intends to send a private message $W^{(k,k^\prime)}$ to user $k^\prime$, $\forall k^\prime \neq k$.
The message $W^{(k,k^\prime)}$ is then encoded as $f(W^{(k,k^\prime)})=\{{\bf s}^{(k,k^\prime)}_{1}, {\bf s}^{(k,k^\prime)}_2,\cdots,{\bf s}^{(k,k^\prime)}_L\}$, where $f(\cdot)$ is an encoding function; ${\bf s}^{(k,k^\prime)}_l \in \mathbb{C}^{1 \times T}$ denotes the spatial stream transmitted in unit $l$; $L$ is the number of the units which can be supported by the network.
The goal of this work is to analyze the achievable DoF of the considered MIMO mRC.
Linear processing is assumed to be applied at the transmitter, relay, and receiver sides.
The transmit signal at user $k$ is denoted as
\begin{equation}\label{Sys_model_3_1}\nonumber
{\bf X}_{k} = \sum^{L}_{l=1} {\bf U}_{k,l} {\bf S}_{k,l},
\end{equation}
where $k$ denotes the user index; $l$ denotes the unit index;
${\bf U}_{k,l} =[{\bf u}^{(k,1)}_{l},{\bf u}^{(k,2)}_{l},\cdots,{\bf u}^{(k,k-1)}_{l},{\bf u}^{(k,k+1)}_{l}, \cdots,$ ${\bf u}^{(k,K)}_{l}] \in \mathbb{C}^{M \times (K-1)}$ denotes the beamforming matrix applied at user $k$ for the $l$-th unit; ${\bf S}_{k,l} = [{\bf s}^{(k,1)T}_{l},{\bf s}^{(k,2)T}_{l},\cdots,{\bf s}^{(k,k-1)T}_{l},{\bf s}^{(k,k+1)T}_{l},\cdots, {\bf s}^{(k,K)T}_{l}]^T \in \mathbb{C}^{(K-1) \times T}$ denotes the transmit spatial streams over $T$ channel uses;
${\bf u}^{(k,k^\prime)}_{l}$ corresponds to the beamformer of spatial stream ${\bf s}^{(k,k^\prime)}_{l}$.
Note that the maximum number of spatial streams in a unit is $K(K-1)$.
But this number can be reduced to $K^\prime(K^\prime-1)$, where $K^\prime(\leq K)$ is the number of active users in the unit.

During the uplink, the equivalent channel matrix from user $k$ to the relay can be expressed by
\begin{equation}\label{Sys_model_3_2}
{\bf H}_{k}{\bf U}_{k,l} = \left[{\bf h}^{(k,1)}_{l}, {\bf h}^{(k,2)}_{l}, \cdots,{\bf h}^{(k,k-1)}_{l},{\bf h}^{(k,k+1)}_{l},\cdots,{\bf h}^{(k,K)}_{l}\right].
\end{equation}
Note that the equivalent channel vectors of unit $l$, i.e., $\{{\bf h}^{(k,k^\prime)}_l| \forall k, k^\prime, k\neq k^\prime \}$, form a spatial structure, referred to as a pattern.

The transmit signal at the relay node can be written as
\begin{equation}\label{Sys_model_3_3}
{\bf X}_R  = {\bf F} {\bf Y}_R,
\end{equation}
where ${\bf F}$ denotes the linear beamforming matrix used at the relay. Similar to the uplink, by using linear receive matrix ${\bf V}_{k,l} =[{\bf v}^{(k,1)}_l,{\bf v}^{(k,2)}_l,\cdots,{\bf v}^{(k,k-1)}_l,{\bf v}^{(k,k+1)}_l, \cdots,{\bf v}^{(k,K)}_l]^T \in \mathbb{C}^{(K-1) \times M}$, the equivalent channel matrix in the downlink is given by
\begin{equation}\label{Sys_model_3_4}
{\bf V}_{k,l}{\bf G}_{k} = \left[{\bf g}^{(k,1)}_l, {\bf g}^{(k,2)}_l, \cdots,{\bf g}^{(k,k-1)}_l,{\bf g}^{(k,k+1)}_l,\cdots,{\bf g}^{(k,K)}_l\right]^T.
\end{equation}
Later, we will show that due to symmetry between the uplink and the downlink, the uplink design straightforwardly carries over to the downlink. Thus, we mostly focus on the uplink design in this paper.

In what follows, we will see that by dividing the spatial streams into a number of units, the signal alignment can be realized in a unit-by-unit fashion, which facilitates the system design.

\subsection{Degrees of Freedom}
Let $R^{(k,k^\prime)}$ be the information rate carried in $W^{(k,k^\prime)}$, and $\hat{W}^{(k,k^\prime)}$ be the estimate of $W^{(k,k^\prime)}$ at user $k^\prime$. We say that user $k$ achieves a sum rate of $C_{k} = \sum_{k^\prime = 1, k^\prime\neq k}^{K} R^{(k,k^\prime)}$, if $\mathrm{Pr}\{\hat{W}^{(k,k^\prime)}\neq W^{(k,k^\prime)}\}$ tends to zero as $T \rightarrow \infty$.

We assume a symmetric mRC with $P_1=P_2=\cdots=P_K=P_R=P$ and $\sigma^2_1 = \sigma^2_2 = \cdots = \sigma^2_K = \sigma^2_R = \sigma^2 $.
Denote ${\rm SNR} =\frac{P}{\sigma^2}$. Let $C_{k}({\rm SNR})$, $k \in \mathcal{I}_K$, be an achievable rate of user $k$. The total DoF of the mRC is defined as
\begin{eqnarray}\nonumber
d_{\mathrm{sum}} \triangleq \lim_{{\rm SNR} \rightarrow \infty} \frac{\sum_{k=1}^{K}C_{k}({\rm SNR})}{\log{{\rm SNR}}}.
\end{eqnarray}
Also, we define the DoF per user and the DoF per relay dimension respectively as
\begin{eqnarray}\label{d_relay}
d_{\mathrm{user}} \triangleq \frac{1}{K} d_{\mathrm{sum}}~~{\rm and}~~d_{\mathrm{relay}} \triangleq \frac{1}{N} d_{\mathrm{sum}}.
\end{eqnarray}

\section{A DoF Outer Bound}\label{Outer_Bound}
An outer bound on the total DoF of the MIMO mRC is given in \cite{Tian_IT2013} as
\begin{subequations}\label{bound_1}
\begin{eqnarray}
d_{\mathrm{sum}} \leq  \min (KM,2N),
\end{eqnarray}
or equivalently
\begin{eqnarray}\label{bound_2}
d_{\mathrm{user}} \leq  \min \left(M,\frac{2N}{K} \right).
\end{eqnarray}
\end{subequations}
The above outer bound can be intuitively explained as follows. On one hand, the total number of independent spatial data streams supported by the MIMO mRC cannot exceed $2N$, as the relay's signal space has $N$ dimensions and thus the relay can only decode and forward $N$ network-coded messages.
On the other hand, the number of independent spatial data streams transmitted or received by each user cannot exceed $M$, as each user only has $M$ antennas. The outer bound in \eqref{bound_1} will be used as a benchmark in the following system design.

\section{MIMO mRC with $K=3$}

In this section, we focus the DoF of the MIMO mRC with $K = 3$. We propose a signal alignment scheme to achieve the DoF capacity of the MIMO mRC with $K=3$ for an arbitrary antenna setup of $(M,N)$.

\subsection{Preliminaries}\label{OneCluster_Preliminiary}

We give some intuitions of the signal alignment by considering only one unit.
For brevity, we omit the unit index $l$ in this subsection.
Recall that $\mathbf{s}^{(k,k^\prime)}$ and $\mathbf{s}^{(k^\prime, k)}$ are exchanged in a pairwise manner for any $k\neq k^\prime$. For convenience, we refer to $\mathbf{s}^{(k,k^\prime)}$ and $\mathbf{s}^{(k^\prime, k)}$ as the signal pair ($k,k^\prime$).
Denote by $\mathbf{h}^{(k,k^\prime)} $ and $\mathbf{g}^{(k,k^\prime)}$ the equivalent channels in the uplink and the downlink, respectively. The system model in \eqref{System-1} and \eqref{System-2} reduces to
\begin{subequations} \label{System11}
\begin{eqnarray}
\mathbf{Y}_R &=& \sum_{k=1}^{K}\sum_{k^\prime=1, k^\prime\neq k}^{K}\mathbf{h}^{(k,k^\prime)} {\mathbf{s}^{(k,k^\prime)T }}+\mathbf{Z}_R \label{System11-a}\\
{\mathbf{y}^{(k, k^\prime)T}_{k}} &=& {\mathbf{g}^{(k,k^\prime)T}} \mathbf{X}_{R}+\mathbf{z}^{(k, k^\prime)T}_{k}, ~~ k \in \mathcal{I}_K. \label{System11-b}
\end{eqnarray}
\end{subequations}
where $\mathbf{z}^{(k,k^\prime)T}_k=\mathbf{v}^{(k,k^\prime)T}\mathbf{Z}_{k}$.
The principle of PNC is applied in relay decoding. Specifically, for each user pair $(k,k^\prime)$, the relay decodes a linear mixture of $\mathbf{s}^{(k,k^\prime)}$ and $\mathbf{s}^{(k^\prime,k)}$ as follows.
Denote by $\mathbf{H}^{(k,k^\prime)} \in \mathbb{C}^{N\times 4}$ the matrix formed by all the uplink channel vectors
except $\mathbf{h}^{(k,k^\prime)}$ and $\mathbf{h}^{(k^\prime, k)}$. Then, define the projection matrix of pair $(k,k^\prime)$ as $\mathbf{P}^{(k,k^\prime)} = \mathbf{I}_N - \mathbf{H}^{(k,k^\prime)} ({\mathbf{H}^{(k,k^\prime)H}} \mathbf{H}^{(k,k^\prime)})^{-1} {\mathbf{H}^{(k,k^\prime)H}} \in \mathbb{C}^{N\times N}$.
For each pair $(k,k^\prime)$, the relay projects the received signal vector ${\mathbf Y}_R$ onto the nullspace of ${\rm span}(\mathbf{H}^{(k,k^\prime)})$, yielding
\begin{equation}\label{PjYR1}
\begin{split}
&\mathbf{P}^{(k,k^\prime)}\mathbf{Y}_R
  =   \mathbf{P}^{(k,k^\prime)}\bigg(\sum_{k=1}^{K}\sum_{k^\prime=1, k^\prime\neq k}^{K}\mathbf{h}^{(k,k^\prime)} {\mathbf{s}^{(k, k^\prime)T}}+\mathbf{Z}_R\bigg) \\ 
 & = \mathbf{P}^{(k,k^\prime)}\big(\mathbf{h}^{(k ,k^\prime)} {\mathbf{s}^{(k,k^\prime)T}} + \mathbf{h}^{(k^\prime, k)} {\mathbf{s}^{(k^\prime, k)T} }\big) + \mathbf{P}^{(k,k^\prime)}\mathbf{Z}_R. 
\end{split}
\end{equation}

We now move to the relay-to-user phase modeled in \eqref{System11-b}. Similarly to $\mathbf{H}^{(k,k^\prime)}$, we denote $\mathbf{G}^{(k,k^\prime)} \in \mathbb{C}^{N\times 4}$ as the matrix formed by all the downlink channel vectors except $\mathbf{g}^{(k,k^\prime)}$ and $\mathbf{g}^{(k^\prime, k)}$. The projection matrix of pair $(k,k^\prime)$ in the downlink is then defined as $\mathbf{W}^{(k,k^\prime)} = \mathbf{I}_N - \mathbf{G}^{(k,k^\prime)}({\mathbf{G}^{(k,k^\prime)H} } \mathbf{G}^{(k,k^\prime)})^{-1} {\mathbf{G}^{(k,k^\prime)H}} \in \mathbb{C}^{N\times N}$. The relay sends out ${\bf F} \mathbf{Y}_R$ with ${\bf F}$ defined in \eqref{Sys_model_3_3} given by
\begin{equation}\label{RelayPrecoder_L1}
{\bf F} = \alpha \sum^{K}_{k=1} \sum^{K}_{k^\prime = k+1} \mathbf{W}^{(k,k^\prime)}\mathbf{P}^{(k,k^\prime)},
\end{equation}
where $\alpha$ is a scaling factor to meet the relay's power constraint.
In \eqref{RelayPrecoder_L1}, the index $k^\prime$ starts from $k+1$ since a project matrix $\mathbf{P}^{(k,k^\prime)}$ is used to extract the signals $\mathbf{s}^{(k,k^\prime)}$ and $\mathbf{s}^{(k^\prime,k)}$ simultaneously.
The received signal at user $k$ is given by
\begin{equation}\label{Yjk1}
\begin{split}
{\mathbf{y}^{(k,k^\prime)T}_k} &= {\mathbf{g}^{(k, k^\prime)T}} \sum_{l=1}^{K} \sum_{n=l+1}^{K} \mathbf{W}^{(l n)}\mathbf{P}^{(ln)}\mathbf{Y}_R + \mathbf{z}^{(k,k^\prime)T}_{k}  \\
&= {\mathbf{g}^{(k,k^\prime)T}} \mathbf{W}^{(k,k^\prime)} \mathbf{P}^{(k,k^\prime)} \left(\mathbf{h}^{(k, k^\prime)} {\mathbf{s}^{(k,k^\prime)T}}  + \right. \\
& ~~~ \left. \mathbf{h}^{(k^\prime, k)} {\mathbf{s}^{(k^\prime, k)T}}  + \mathbf{Z}_R \right)+ \mathbf{z}^{(k,k^\prime)T}_{k}.
\end{split}
\end{equation}
We note that $\mathbf{g}^{(k, k^\prime)}$, $\mathbf{W}^{(k,k^\prime)}$, $\mathbf{P}^{(k,k^\prime)}$, and $\mathbf{h}^{(k^\prime,k)}$ are independent of each other. Therefore, the equivalent user-to-user channel coefficient ${\mathbf{g}^{(k,k^\prime)T}} \mathbf{W}^{(k,k^\prime)}\mathbf{P}^{(k,k^\prime)} \mathbf{h}^{(k^\prime,k)}$ is non-zero with probability one, provided that $\mathbf{W}^{(k,k^\prime)}$ and $\mathbf{P}^{(k,k^\prime)}$ are of at least rank one. Then, each user $k$ receives one linear combination of the two signals in pair ($k,k^\prime$). By subtracting the self-interference, each user can decode the desired messages from the other two users, which achieves a per-user DoF of $d_{\rm user}= 2$, or equivalently, a total DoF of $d_{\mathrm{sum}} = 6$ can be achieved.
From \eqref{Yjk1}, we see that the symmetry exists between the design of the uplink and the design of the downlink. Given the design of the beamformer ${\bf u}^{(k,k^\prime)}$ and the projection matrix $\mathbf{P}^{(k,k^\prime)}$, the receive vector ${\bf v}^{(k,k^\prime)}$ and the projection matrix $\mathbf{W}^{(k,k^\prime)}$ in the downlink can be designed similarly, since ${\mathbf{g}^{(k,k^\prime)T}} \mathbf{W}^{(k,k^\prime)}$ can be simply regarded as the transpose of $\mathbf{P}^{(k,k^\prime)} \mathbf{h}^{(k^\prime,k)}$. Therefore, it suffices to focus on design of the uplink in what follows.

\begin{figure}[tp]
\begin{centering}
\includegraphics[scale=0.5]{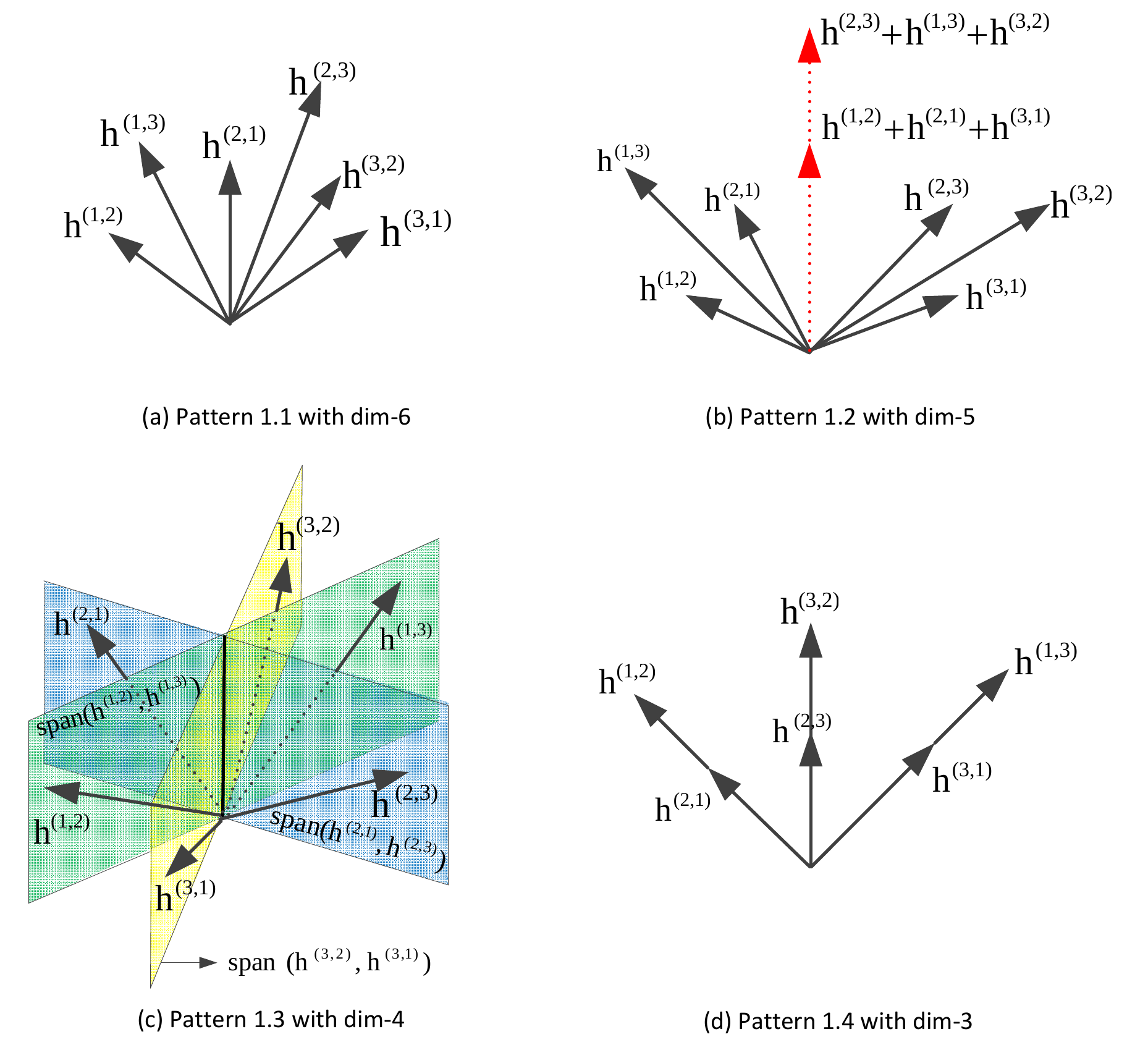}
\vspace{-0.1cm}
\caption{A geometric illustration of Patterns $1.1$ to $1.4$.}  \label{Fig_Pattern_L_1}
\end{centering}
\vspace{-0.3cm}
\end{figure}

We now describe four patterns involved (with a different $d_{\rm relay}$) in achieving the DoF capacity of the MIMO mRC with $K=3$.
Denote $\mathcal{U} \triangleq \{\mathbf{h}^{(k,k^\prime)}|  k \in \mathcal{I}_K, k^\prime \in \mathcal{I}_K, k^\prime \neq k\}$.
Let $\mathcal{U}\backslash\{\mathbf{h}^{(k,k^\prime)},\mathbf{h}^{(k^\prime, k)}\}$ be the vector set obtained by excluding $\mathbf{h}^{(k,k^\prime)}$ and $\mathbf{h}^{(k^\prime, k)}$ from $\mathcal{U}$.

\begin{enumerate}
  \item \textbf{Pattern 1.1:} $\mathcal{U}$ spans a subspace with dimension $6$ (dim-$6$) in $\mathbb{C}^{N}$.
  \item \textbf{Pattern 1.2:} $\mathcal{U}$ spans a subspace with dim-$5$ in $\mathbb{C}^{N}$; for any pair $(k,k^\prime)$, $\mathcal{U}\backslash\{\mathbf{h}^{(k,k^\prime)},\mathbf{h}^{(k^\prime, k)}\}$ spans a subspace of dim-$4$.
   \item \textbf{Pattern 1.3:} $\mathcal{U}$ spans a subspace with dim-$4$ in $\mathbb{C}^{N}$; the intersection of $\mathrm{span}(\mathbf{h}^{(1,2)}, \mathbf{h}^{(2,1)})$, $\mathrm{span}(\mathbf{h}^{(2,3)}, \mathbf{h}^{(3,2)})$, and $\mathrm{span}(\mathbf{h}^{(1,3)}, \mathbf{h}^{(3,1)})$ is of dim-$1$, i.e., these three planes go through a common line, so that $\mathcal{U}$ spans a subspace of dim-$4$.
   \item \textbf{Pattern 1.4:} $\mathcal{U}$ spans a subspace with dim-$3$ in $\mathbb{C}^{N}$; for any pair $(k,k^\prime)$, $\mathbf{h}^{(k,k^\prime)}$ and $\mathbf{h}^{(k^\prime, k)}$ span a subspace of dim-$1$.
\end{enumerate}

The above four patterns are geometrically illustrated in Fig.~\ref{Fig_Pattern_L_1}.
It can be verified that the projection matrices $\mathbf{P}^{(k,k^\prime)}$, $\forall k, k^\prime, k^\prime \neq k$, for Patterns $1.1$-$1.4$ are of at least rank one for sure. For example, $\mathbf{P}^{(k,k^\prime)}$ for Pattern $1.1$ is of at least rank two for sure. Hence the proposed signaling scheme achieves a total DoF of $6$. However,
a different pattern spans a subspace with a different number of dimensions,
which yields a different $d_{\rm relay}$ as shown in Table~\ref{Table OneCluster}. In general, a pattern with a higher $d_{\rm relay}$ is more efficient in utilizing the relay's signal space, and hence is more desirable in the signal alignment design. The requirement on $\frac{M}{N}$ to realize each specific pattern is given in the last column of Table~\ref{Table OneCluster}. Note that these requirements will be discussed in detail in Subsection~\ref{Proof_OneCluster}.
It is also worth noting that Pattern $1.2$ and Pattern $1.3$ have the same requirement on $\frac{M}{N}$, but Pattern $1.3$ achieves a higher $d_{\rm relay}$ than Pattern $1.2$. Thus, Pattern $1.2$ is ruled out by Pattern $1.3$ in the proposed signal alignment scheme.

\begin{table}[!t]
\centering
\caption{Patterns for the MIMO mRC with $K=3$}
\label{Table OneCluster}
\begin{IEEEeqnarraybox}[\IEEEeqnarraystrutmode\IEEEeqnarraystrutsizeadd{2pt}{1pt}]{v/c/v/c/v/c/v/c/v/c/v}
\IEEEeqnarrayrulerow\\
&\mbox{Pattern}&&\mbox{Dimension}&& d_{\mathrm{sum}} && d_{\mathrm{relay}} && \mbox{Requirement} &\\
\IEEEeqnarraydblrulerow\\
\IEEEeqnarrayseprow[3pt]\\
& 1.1 && 6 && 6 && 1 && \frac{M}{N}>0 &\\
\IEEEeqnarrayseprow[3pt]\\
\IEEEeqnarrayrulerow\\
\IEEEeqnarrayseprow[3pt]\\
& 1.2 && 5 && 6 && \frac{6}{5} && \frac{M}{N}>\frac{1}{3}&\\
\IEEEeqnarrayseprow[3pt]\\
\IEEEeqnarrayrulerow\\
\IEEEeqnarrayseprow[3pt]\\
& 1.3 && 4 && 6 && \frac{3}{2} && \frac{M}{N}>\frac{1}{3}&\\
\IEEEeqnarrayseprow[3pt]\\
\IEEEeqnarrayrulerow\\
\IEEEeqnarrayseprow[3pt]\\
& 1.4 && 3 && 6 && 2 && \frac{M}{N}>\frac{1}{2}&\\
\IEEEeqnarrayseprow[3pt]\\
\IEEEeqnarrayrulerow
\end{IEEEeqnarraybox}
\end{table}

\begin{figure}[tp]
\begin{centering}
\includegraphics[scale=0.5]{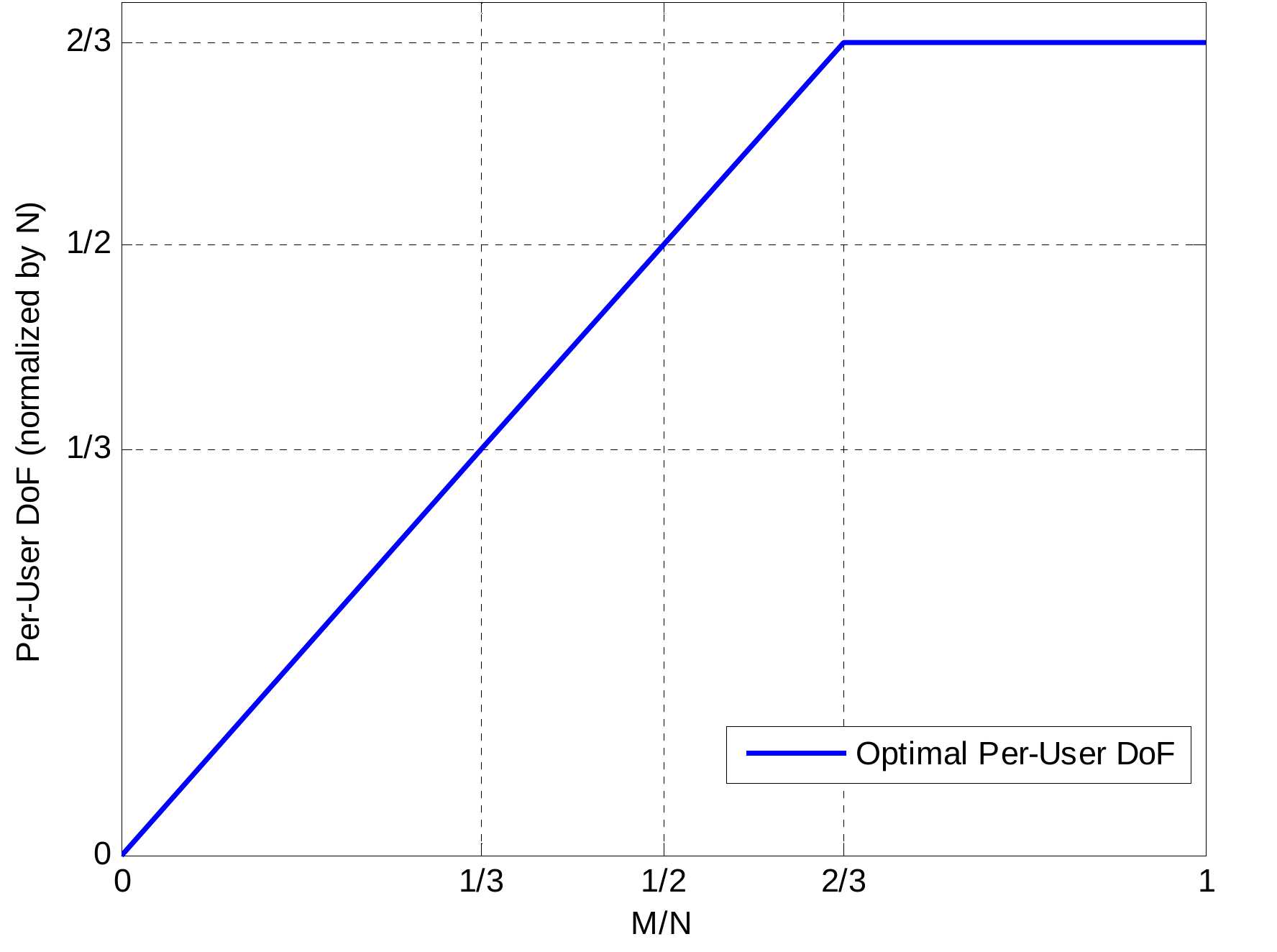}
\vspace{-0.1cm}
\caption{The DoF capacity for the MIMO mRC with $K=3$ against the antenna ratio $\frac{M}{N}$.}  \label{Fig_pairwise_L_1}
\end{centering}
\vspace{-0.3cm}
\end{figure}

\subsection{Main Result}
We now consider the general case that each user transmits multiple spatial data streams over a MIMO mRC, i.e., multiple units co-exist in the relay's signal space with each unit consisting of $K(K-1)$ spatial streams. Our goal is to construct units with the most efficient patterns to occupy the relay's signal space.
The main result is summarized in the following theorem.

\begin{theorem}\label{Theorem OneCluster}
For the MIMO mRC $(M,N,K)$ with $K = 3$, the DoF capacity per user is given by
\begin{equation}\label{Theorem_K_3}
d_{\mathrm{user}} =
\left\{
\begin{aligned}
& M, &\frac{M}{N} < \frac{2}{3}\\
&\frac{2N}{3}, &\frac{M}{N} \geq \frac{2}{3}.
\end{aligned}
\right.
\end{equation}
\end{theorem}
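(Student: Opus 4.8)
Here the converse is immediate: specializing the outer bound \eqref{bound_1} to $K=3$ gives $d_{\mathrm{user}}\le\min(M,2N/3)$, which equals $M$ when $M/N<2/3$ and $2N/3$ when $M/N\ge 2/3$. So the entire content of the theorem is achievability, and the plan is to exhibit, for every $(M,N)$, a linear scheme of the form fixed in Section~II attaining $\min(M,2N/3)$, built out of the units and patterns of the Preliminaries. The three moving parts are: (i) show that each of Patterns~1.1, 1.3, 1.4 can actually be instantiated by a suitable choice of the user beamformers $\{\mathbf{u}^{(k,k')}\}$, under exactly the condition on $M/N$ listed in Table~\ref{Table OneCluster}, with the induced projection matrices $\mathbf{P}^{(k,k')}$ non-degenerate so that \eqref{PjYR1} and \eqref{Yjk1} return clean network-coded combinations; (ii) for each range of $M/N$, pick the most efficient feasible pattern and count how many units $L$ can be packed; (iii) lift the design to the downlink via the uplink--downlink symmetry noted after \eqref{Yjk1}.

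For Step~(i) I would argue by generic-position / dimension counting on the equivalent channels $\mathbf{h}^{(k,k')}=\mathbf{H}_k\mathbf{u}^{(k,k')}$. The direction of $\mathbf{h}^{(k,k')}$ ranges over $\mathrm{span}(\mathbf{H}_k)$, an $M$-dimensional subspace of $\mathbb{C}^N$, and the plane $\mathrm{span}(\mathbf{h}^{(k,k')},\mathbf{h}^{(k',k)})$ can be placed anywhere inside $\mathrm{span}([\mathbf{H}_k\ \mathbf{H}_{k'}])$, whose generic dimension is $\min(2M,N)$. Pattern~1.1 only asks that six generic vectors be independent, which holds once $N\ge 6$. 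Pattern~1.4 asks that the two vectors of each pair be colinear, i.e.\ that there be a common direction in $\mathrm{span}(\mathbf{H}_k)\cap\mathrm{span}(\mathbf{H}_{k'})$; two generic $M$-dimensional subspaces of $\mathbb{C}^N$ meet in dimension $[2M-N]^+$, so this is feasible exactly when $M/N>\tfrac12$, matching the table; fixing one common direction per pair and back-solving for the beamformers gives three generically independent pair-directions, hence $\dim=3$. Pattern~1.3 asks that the three planes share a line, i.e.\ that there be a nonzero vector in $\mathrm{span}([\mathbf{H}_1\ \mathbf{H}_2])\cap\mathrm{span}([\mathbf{H}_2\ \mathbf{H}_3])\cap\mathrm{span}([\mathbf{H}_1\ \mathbf{H}_3])$; three generic $\min(2M,N)$-dimensional subspaces intersect in dimension $[6M-2N]^+$ when $2M<N$ (and in all of $\mathbb{C}^N$ otherwise), which is nonzero exactly when $M/N>\tfrac13$, again matching the table; pinning such a line and solving for the beamformers makes the total span dim-$4$. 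In each case one checks that neither $\mathbf{h}^{(k,k')}$ nor $\mathbf{h}^{(k',k)}$ lies in $\mathrm{span}(\mathbf{H}^{(k,k')})$, so $\mathbf{P}^{(k,k')}\mathbf{h}^{(k,k')}$ and $\mathbf{P}^{(k,k')}\mathbf{h}^{(k',k)}$ are nonzero and colinear, whence \eqref{PjYR1} delivers $c_1\mathbf{s}^{(k,k')}+c_2\mathbf{s}^{(k',k)}$ with $c_1,c_2\neq 0$; combined with the downlink mirror in \eqref{Yjk1}, every user cancels its own symbol and recovers the desired one.

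Step~(ii) is then a packing count. With $L$ units of a pattern whose span has dimension $D\in\{6,4,3\}$, the relay constraint is $DL\le N$, while each user emits and receives $K-1=2$ streams per unit, giving $2L\le M$; hence $L=\min\{M/2,\ N/D\}$ and $d_{\mathrm{user}}=2L$. For $M/N\le\tfrac13$ take Pattern~1.1 ($D=6$): $L=M/2$, so $d_{\mathrm{user}}=M$. For $\tfrac13<M/N\le\tfrac12$ take Pattern~1.3 ($D=4$): still $L=M/2$, $d_{\mathrm{user}}=M$. For $\tfrac12<M/N<\tfrac23$ take Pattern~1.4 ($D=3$): still $L=M/2$, $d_{\mathrm{user}}=M$. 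For $M/N\ge\tfrac23$ take Pattern~1.4 again: now $L=N/3$, so $d_{\mathrm{user}}=2N/3$. In every regime this equals $\min(M,2N/3)$, so together with the converse the theorem follows. (When $\min\{M/2,N/D\}$ is not an integer, the stated value is met up to rounding in the standard way, with no effect on the DoF.)

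I expect the crux to be Step~(i) for Pattern~1.3: one must verify both that the triple intersection of the three $\min(2M,N)$-dimensional subspaces has \emph{exactly} the dimension $[6M-2N]^+$ for generic channel realizations -- a transversality statement about $\mathrm{span}([\mathbf{H}_i\ \mathbf{H}_j])$ -- and that, once the common line is fixed and the six beamformers are solved for, none of the equivalent channel vectors accidentally lands in a span that would collapse some $\mathbf{P}^{(k,k')}$ below rank one. Patterns~1.1 and~1.4, the downlink via symmetry, and the counting in Step~(ii) are comparatively routine once this genericity bookkeeping is nailed down.
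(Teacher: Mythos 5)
Your converse, your Step (i), and the uplink--downlink reduction all match the paper's proof, but Step (ii) has a genuine gap: the packing count $L=\min\{M/2,\;N/D\}$ accounts only for the user-antenna and relay-space constraints and misses a third constraint imposed by the alignment itself, which is the binding one in both intermediate regimes, so the single-pattern constructions you claim there do not exist. For Pattern~1.3, all six equivalent channels of a unit are forced into the triple intersection $\mathrm{span}(\mathbf{H}_1,\mathbf{H}_2)\cap\mathrm{span}(\mathbf{H}_2,\mathbf{H}_3)\cap\mathrm{span}(\mathbf{H}_1,\mathbf{H}_3)$: subtracting the three representations of the common line pairwise yields vectors of $\mathrm{null}([\mathbf{H}_1,\mathbf{H}_2,\mathbf{H}_3])$ whose user-$k$ blocks are built from the beamformers $\mathbf{u}^{(k,k')}$, so each $\mathbf{h}^{(k,k')}$ lies in $\mathrm{span}(\mathbf{H}_1\mathbf{U}_{H_1},\mathbf{H}_2\mathbf{U}_{H_2},\mathbf{H}_3\mathbf{U}_{H_3})$, a subspace of dimension $2(3M-N)$ by Lemma~\ref{Lemma 4} (and in fact equal to that triple intersection). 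Since each unit occupies $4$ of those dimensions, at most $(3M-N)/2$ independent Pattern-1.3 units exist, which is strictly smaller than your $M/2$ throughout $\tfrac13<\tfrac{M}{N}<\tfrac12$. (Take $M=2$, $N=5$: you claim one Pattern-1.3 unit spanning $4$ dimensions, but the triple intersection has dimension $2$, so not even one unit fits per channel use.) Likewise a Pattern-1.4 unit needs, for each pair, a direction in $\mathrm{span}(\mathbf{H}_k)\cap\mathrm{span}(\mathbf{H}_{k'})$, which has dimension $2M-N$ by Lemma~\ref{Lemma 3}; hence at most $2M-N$ such units, and $2M-N<M/2$ exactly when $\tfrac{M}{N}<\tfrac23$, i.e., on all of $(\tfrac12,\tfrac23)$ (e.g., $M=3$, $N=5$ admits one unit, not $3/2$).

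This is precisely why the paper mixes patterns instead of using one per regime: for $\tfrac13<\tfrac{M}{N}\le\tfrac12$ it packs $(3M-N)/2$ Pattern-1.3 units and fills the remaining $N-2(3M-N)$ relay dimensions with Pattern-1.1 units; for $\tfrac{M}{N}>\tfrac12$ it packs $2M-N$ Pattern-1.4 units and fills the remaining $N-3(2M-N)$ dimensions with Pattern-1.3 units; both totals again evaluate to $d_{\mathrm{user}}=\min(M,\tfrac{2N}{3})$. So your final numbers are correct, but the constructions offered to achieve them are infeasible; repairing the argument requires the sharper per-pattern counts (Lemmas~\ref{Lemma 3} and~\ref{Lemma 4}) together with a check that units of different patterns span independent subspaces. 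Your Step (i), as stated, only certifies that a single unit of each pattern can be instantiated, which is not where the difficulty lies.
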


The per-user DoF capacity with respect to $\frac{M}{N}$ is shown in Fig.~\ref{Fig_pairwise_L_1}. We see that the per-user DoF of $d_{\rm user} = M$ is achieved for $\frac{M}{N} < \frac{2}{3}$, which means that the DoF is bounded by the number of antennas at the user ends.
On the other hand, when $\frac{M}{N} \geq \frac{2}{3}$, the DoF is bounded by the number of relay antennas.
Note that the DoF capacity of the three-user MIMO Y channel has been previously derived in \cite{Chaaban_ISIT2013}. However, we emphasize that the proposed signal alignment technique in our proof (cf., equations \eqref{OneCluster-9-0}-\eqref{Appdix-3-0} and the discussions therein) is very different from the one in \cite{Chaaban_ISIT2013}. Also, our proposed technique can be extended for the case of an arbitrary $K$, which is the major contribution of this paper.

\subsection{Proof of Theorem \ref{Theorem OneCluster}}\label{Proof_OneCluster}
We first note that $d_{\rm user}$ in \eqref{Theorem_K_3} coincides with the DoF outer bound in \eqref{bound_1} with $K=3$.
Thus, to prove Theorem~\ref{Theorem OneCluster}, it suffices to show the achievability of \eqref{Theorem_K_3}.
We start with a brief description of the overall transceiver design.
We need to jointly design the transmit beamformers $\{ {\bf u}^{(k,k^\prime)}_l\}$, the receive vectors $\{{\bf v}^{(k,k^\prime)}_l\}$, and the relay projection matrices $\{\mathbf{P}^{(k,k^\prime)}_l \}$ and $\{\mathbf{W}^{(k,k^\prime)}_l \}$ to efficiently utilize the relay's signal space.
As different from \eqref{RelayPrecoder_L1}, here
the relay's projection matrices $\mathbf{P}^{(k,k^\prime)}_l$ and $\mathbf{W}^{(k,k^\prime)}_l$ null the interference not only from the other pairs in unit $l$ but also from the other units.
Taking $\mathbf{P}^{(k,k^\prime)}_l$ as an example, we see that it projects a vector into the null space of ${\rm span}\big(\{ {\bf H}_{\bar{k}}{\bf u}^{(\bar{k},\bar{k}^\prime)}_{\bar{l}}|\forall \bar{l},\bar{k},\bar{k}^\prime; \bar{k}^\prime \neq \bar{k} \} \backslash \{ {\bf H}_{k}{\bf u}^{(k,k^\prime)}_l, {\bf H}_{k}{\bf u}^{(k^\prime,k)}_l\}\big)$.
Hence the relay beamforming matrix $\bf F$ given in \eqref{Sys_model_3_3} is expressed as
\begin{equation}\label{RelayPrecoder_L1-0}
{\bf F} = \alpha \sum^{L}_{l=1} \sum^{K}_{k=1} \sum^{K}_{k^\prime = k+1} \mathbf{W}^{(k,k^\prime)}_l \mathbf{P}^{(k,k^\prime)}_l,
\end{equation}
where $L$ denotes the number of the units. Based on that, in each unit, each user achieves a DoF of two, provided that the projection matrices $\mathbf{P}^{(k,k^\prime)}_l$ and $\mathbf{W}^{(k,k^\prime)}_l$ are at least of rank one.
Note that with the transmit beamformer ${\bf u}^{(k,k^\prime)}_l$ and the receive vector ${\bf v}^{(k,k^\prime)}_l$,
the equivalent channel regarding the spatial stream $\mathbf{s}^{(k,k^\prime)}_l$ in the uplink is denoted as $\mathbf{h}^{(k,k^\prime)}_{l}={\bf H}_{k}\mathbf{u}^{(k,k^\prime)}_l$, and the equivalent channel vector regarding the spatial stream $\mathbf{s}^{(k^\prime,k)}_l$ in the downlink is denoted by $\mathbf{g}^{(k,k^\prime)}_l={\bf G}^T_{k}\mathbf{v}^{(k,k^\prime)}_l$.
Next we derive the DoF given in Theorem~\ref{Theorem OneCluster} by dividing the overall range of $\frac{M}{N}$ into multiple intervals.

\subsubsection{Case of $\frac{M}{N} \leq \frac{1}{3}$}
In this case, the number of relay antennas $N$ is no less than the number of antennas of all the users, i.e., ${\rm span}\left(\{{\bf H}_k|\forall k \} \right)$ is of dim-$3M$ with probability one. This implies that the relay's signal space has enough dimensions to support full multiplexing at the user end, i.e., each user transmits $M$ spatial streams.
$M$ units with Pattern $1.1$ (as shown in Fig.~\ref{Fig_Pattern_L_1}(a)) can be constructed. The geometric structure in Fig.~\ref{Fig_Pattern_L_1}(a) indicates that each spatial stream in a unit occupies an independent direction in the relay's signal space. Then, in total the $6$ spatial streams in a unit span a subspace of dim-$6$.
As the directions of signals with Pattern $1.1$ are randomly drawn from the relay's signal space, the independence of different units can be guaranteed with probability one. Clearly, the projection matrix $\mathbf{P}^{(k,k^\prime)}_l$ is of at least rank one. Thus, each unit achieves a DoF of $6$. Considering all the $M$ units, we obtain that the achievable per-user DoF is $M$.\footnote{{A similar proof of $d_{\rm user}=M$ for $\frac{M}{N}\leq \frac{1}{3}$ can be found in \cite{Tian_IT2013}.}}

\subsubsection{Case of $\frac{1}{3} < \frac{M}{N} \leq \frac{1}{2}$}
From Table~\ref{Table OneCluster}, this case corresponds to Patterns $1.2$ and $1.3$.
As Pattern $1.3$ is more efficient than Pattern $1.2$\footnote{Pattern $1.2$ can be constructed by designing the transmit beamforming vectors in a unit as
\begin{eqnarray}\label{OneCluster-7}\nonumber
 {\bf H}_1 ({\bf u}^{(1,2)}_l + {\bf u}^{(1,3)}_l ) + {\bf H}_2 ({\bf u}^{(2,1)}_l + {\bf u}^{(2,3)}_l)
  + {\bf H}_3 ({\bf u}^{(3,1)}_l + {\bf u}^{(3,2)}_l)  = {\bf 0},
\end{eqnarray}
or equivalently
\begin{eqnarray}\label{OneCluster-7}\nonumber
 \left({\bf h}^{(1,2)}_l +{\bf h}^{(2,1)}_l +{\bf h}^{(3,1)}_l \right) +  \left({\bf h}^{(2,3)}_l +{\bf h}^{(1,3)}_l +{\bf h}^{(3,2)}_l \right) = {\bf 0},
\end{eqnarray}
which implies that the direction of ${\bf h}^{(1,2)}_l+{\bf h}^{(2,1)}_l+{\bf h}^{(3,1)}_l$ is parallel to the direction of ${\bf h}^{(2,3)}_l+{\bf h}^{(1,3)}_l+{\bf h}^{(3,2)}_l$ following Pattern $1.2$ in Fig.~\ref{Fig_Pattern_L_1}(b). Thus, the $6$ spatial streams in a unit span a subspace of dim-$5$.}
(i.e., the former achieves a higher $d_{\rm relay}$ than the latter), we focus on the construction of units following Pattern $1.3$.
Denote the intersection of $\mathrm{span}(\mathbf{H}_{1}, \mathbf{H}_{2})$ and $\mathrm{span}(\mathbf{H}_{3})$ by $\mathcal{S}^{(1.3)}$. The dimension of $\mathcal{S}^{(1.3)}$ is $3M-N > 0$.
We choose two vectors $\mathbf{u}_{3,l}$ and $\mathbf{u}^{(3,1)}_l$ such that $\mathbf{H}_{3}\mathbf{u}_{3,l}$ and $\mathbf{H}_{3}\mathbf{u}^{(3,1)}_l$ are two linearly independent vectors in $\mathcal{S}^{(1.3)}$. By definition, both $\mathbf{H}_{3}\mathbf{u}_{3,l}$ and $\mathbf{H}_{3}\mathbf{u}^{(3,1)}_l$ belong to $\mathrm{span}(\mathbf{H}_{1}, \mathbf{H}_{2}) = {\rm span}({\bf H}_1) \oplus {\rm span}({\bf H}_2)$. Thus, there uniquely exist $\{\mathbf{u}^{(1,3)}_l, \mathbf{u}^{(2,3)}_l\}$ and $\{\mathbf{u}_{1,l}, \mathbf{u}^{(2,1)}_l\}$ satisfying
\begin{subequations}\label{OneCluster-9-0}
\begin{eqnarray}
\mathbf{H}_{1}\mathbf{u}^{(1,3)}_l+\mathbf{H}_{2}\mathbf{u}^{(2,3)}_l+\mathbf{H}_{3}\mathbf{u}_{3,l} = \mathbf{0} \label{OneCluster-9-0-a}\\
\mathbf{H}_{1}\mathbf{u}_{1,l}+\mathbf{H}_{2}\mathbf{u}^{(2,1)}_l+\mathbf{H}_{3}\mathbf{u}^{(3,1)}_l = \mathbf{0}. \label{OneCluster-9-0-b}
\end{eqnarray}
\end{subequations}
Let $\mathbf{u}^{(3,2)}_l = \mathbf{u}_{3,l} - \mathbf{u}^{(3,1)}_l$ and $\mathbf{u}^{(1,2)}_l = \mathbf{u}_{1,l} - \mathbf{u}^{(1,3)}_l$. Together with \eqref{OneCluster-9-0}, we obtain
\begin{subequations}\label{OneCluster-9}
\begin{eqnarray}
\mathbf{H}_{1}\mathbf{u}^{(1,3)}_l+\mathbf{H}_{2}\mathbf{u}^{(2,3)}_l+\mathbf{H}_{3}(\mathbf{u}^{(3,2)}_l+\mathbf{u}^{(3,1)}_l) = \mathbf{0}\label{OneCluster-9-a}\\
\mathbf{H}_{1}(\mathbf{u}^{(1,2)}_l+\mathbf{u}^{(1,3)}_l)+\mathbf{H}_{2}\mathbf{u}^{(2,1)}_l+\mathbf{H}_{3}\mathbf{u}^{(3,1)}_l = \mathbf{0}.\label{OneCluster-9-b}
\end{eqnarray}
Subtracting \eqref{OneCluster-9-b} by \eqref{OneCluster-9-a}, we further obtain
\begin{eqnarray}\label{OneCluster-9-c}
\mathbf{H}_{1}\mathbf{u}^{(1,2)}_l+\mathbf{H}_{2}(\mathbf{u}^{(2,1)}_l-\mathbf{u}^{(2,3)}_l)-\mathbf{H}_{3}\mathbf{u}^{(3,2)}_l = \mathbf{0}.
\end{eqnarray}
\end{subequations}
We now show that three signal direction pairs $\{ {\bf H}_1\mathbf{u}^{(1,2)}_l,{\bf H}_2\mathbf{u}^{(2,1)}_l\}$, $\{{\bf H}_2\mathbf{u}^{(2,3)}_l, {\bf H}_3\mathbf{u}^{(3,2)}_l \}$, $\{ {\bf H}_3\mathbf{u}^{(3,1)}_l, {\bf H}_1\mathbf{u}^{(1,3)}_l \}$ form a unit with Pattern $1.3$ as shown in Fig.~\ref{Fig_Pattern_L_1}(c).
From \eqref{OneCluster-9-a}, we see that two signal pairs $(1,3)$ and $(2,3)$ span a subspace of dim-$3$, which implies that the plane spanned by signal pair $(1,3)$ and the plane spanned by signal pair $(2,3)$ go through a common line. Further, from \eqref{OneCluster-9-b} and \eqref{OneCluster-9-c}, we see that the plane spanned by signal pair of $(1,2)$ also goes through this common line, which implies that  $\{\mathbf{H}_{k}\mathbf{u}^{(k,k^\prime)}_l|\forall k, k^\prime \neq k\}$ span a subspace of dim-$4$.
Base on that, the dimension of ${\rm null}\big(\{{\bf H}_{1}\mathbf{u}^{(1,3)}_l, {\bf H}_{3}\mathbf{u}^{(3,1)}_l,{\bf H}_{2}\mathbf{u}^{(2,3)}_l,{\bf H}_{3}\mathbf{u}^{(3,2)}_l \} \big)\cap {\rm span}\big(\{\mathbf{H}_{k}\mathbf{u}^{(k,k^\prime)}_l|\forall k, k^\prime \neq k\}\big)$ is of dim-$1$.
Similarly, from \eqref{OneCluster-9-b} and \eqref{OneCluster-9-c}, the intersection of nullspace of any two of the three pairs in unit $l$ and the subspace spanned by signals in unit $l$ is of dim-$1$. Therefore, a linear combination of the signals for each signal pair can be decoded at the relay without interference.

We now describe how to construct multiple linearly independent units following Pattern $1.3$.
Let the columns of ${\bf U}_H \in \mathbb{C}^{3M\times (3M-N)}$ be a basis of ${\rm null}\left(\left[{\bf H}_1,{\bf H}_2,{\bf H}_3\right]\right)$. Partition ${\bf U}_H$ as ${\bf U}_H =\left[{\bf U}^T_{H_1}, {\bf U}^T_{H_2},{\bf U}^T_{H_3} \right]^T$ with ${\bf U}_{H_i} \in \mathbb{C}^{M\times (3M-N)}$. Then, ${\bf u}^{(1,3)}_l$, ${\bf u}^{(2,3)}_l$, and ${\bf u}_{3,l}$ in \eqref{OneCluster-9-0-a}, are respectively chosen as the $(2l-1)$-th column of ${\bf U}_{H_1}$, ${\bf U}_{H_2}$, and ${\bf U}_{H_3}$. Further, ${\bf u}_{1,l}$, ${\bf u}^{(2,1)}_l$, and ${\bf u}^{(3,1)}_l$ in \eqref{OneCluster-9-0-b} are respectively chosen as the $(2l)$-th column of ${\bf U}_{H_1}$, ${\bf U}_{H_2}$, and ${\bf U}_{H_3}$.
From \eqref{OneCluster-9-0}, we see that
${\rm span}\big({\bf H}_1{\bf u}^{(1,3)}_l, {\bf H}_1{\bf u}^{(2,3)}_l,{\bf H}_3{\bf u}_{3,l}, {\bf H}_1{\bf u}_{1,l}, {\bf H}_2{\bf u}^{(2,1)}_l, {\bf H}_3{\bf u}^{(3,1)}_l \big)$ is of dim-$4$, and
\begin{equation}\label{Appdix-3-0}
\begin{split}
& {\rm span}\big({\bf H}_1{\bf u}^{(1,3)}_l, {\bf H}_1{\bf u}^{(2,3)}_l,{\bf H}_3{\bf u}_{3,l}, {\bf H}_1{\bf u}_{1,l}, \\
& ~~~~~~~ {\bf H}_2{\bf u}^{(2,1)}_l, {\bf H}_3{\bf u}^{(3,1)}_l \big)\\
 =~~& {\rm span}\big({\bf H}_1{\bf u}^{(1,3)}_l,{\bf H}_1{\bf u}^{(1,2)}_l,{\bf H}_2{\bf u}^{(2,1)}_l,{\bf H}_2{\bf u}^{(2,3)}_l,\\
&~~~~~~~ {\bf H}_3{\bf u}^{(3,1)}_l,{\bf H}_3{\bf u}^{(3,2)}_l \big),
\end{split}
\end{equation}
by noting $\mathbf{u}^{(3,2)}_l = \mathbf{u}_{3,l} - \mathbf{u}^{(3,1)}_l$ and $\mathbf{u}^{(1,2)}_l = \mathbf{u}_{1,l} - \mathbf{u}^{(1,3)}_l$. Thus, each unit $l$ spans a subspace of dim-$4$.
From Lemma~\ref{Lemma 4} in Appendix~\ref{Appendix 1}, we see that the dimension of ${\rm span}\left({\bf H}_1{\bf U}_{H_1},{\bf H}_2{\bf U}_{H_2},{\bf H}_3{\bf U}_{H_3}\right)$ is $\min \left(2(3M-N),N\right)$. Therefore, we can construct $\min \left(\frac{3M-N}{2}, \frac{N}{4}\right)$ linearly independent units following Pattern $1.3$.\footnote{Here we assume that $\frac{3M-N}{2}$ is an integer. Otherwise, we use symbol extension \cite{Jafar08} to ensure that the dimension of the above intersection is dividable by two; see Appendix~\ref{Appendix 3} for details. Note that the symbol extension is used to achieved a fractional DoF throughout of the rest of this paper without further explicit notification.}

Suppose $2(3M-N)\leq N$, or equivalently, $\frac{M}{N}\leq \frac{1}{2}$. All $\frac{3M-N}{2}$ units span a subspace of $\frac{3M-N}{2}\times 4 = 2(3M-N)$ dimensions.
The remaining $N-2(3M-N)$ dimensions are used to construct $\frac{N-2(3M-N)}{6}$ units with Pattern $1.1$. Thus, the achievable per-user DoF is given by
\begin{eqnarray}\label{OneCluster-11}
d_{\mathrm{user}} = 3M-N + \frac{2(N-2(3M-N))}{6} = M.
\end{eqnarray}
Recall that the directions of signals with Pattern $1.1$ are randomly drawn from the relay's signal space, the independence of the units with Pattern $1.3$ and the units with Pattern $1.1$ can be guaranteed with probability one.
If the overall relay's signal space is occupied by units with Pattern $1.3$, the maximum achievable per-user DoF is $\frac{N}{4}\times 2 =\frac{N}{2}$. Therefore, the achievable per-user DoF is given by $\min (M, \frac{N}{2}) = M$.

\subsubsection{Case of $ \frac{M}{N} >\frac{1}{2}$}
In this case, $\frac{M}{N}$ is large enough to construct units with Pattern $1.4$. The intersection of ${\rm span}({\bf H}_k)$ and ${\rm span}({\bf H}_{k'})$ is of $2M-N >0$.
Let ${\bf h}^{(k,k^\prime)}_l$ be a vector in the intersection of ${\rm span}({\bf H}_k)$ and ${\rm span}({\bf H}_{k'})$. There exist $\{{\bf u}^{(k,k')}_l, {\bf u}^{(k',k)}_l\}$ satisfying
\begin{eqnarray}\label{OneCluster-12} \nonumber
{\bf H}_k {\bf u}^{(k,k')}_l = {\bf H}_{k'} {\bf u}^{(k',k)}_l = {\bf h}^{(k,k')}_l,~\forall k, k \neq k',
\end{eqnarray}
which implies that the two spatial streams of pair $(k,k^\prime)$ in a unit span a subspace of dim-$1$, i.e., two spatial streams of a pair are aligned in one direction as illustrated in Fig.~\ref{Fig_Pattern_L_1}(d). Then, in total the $6$ spatial streams in a unit occupy a subspace of dim-$3$.
In this way, we can construct $2M-N$ units with Pattern $1.4$,
in total spanning a subspace of dim-$3(2M-N)$.
According to Lemma~\ref{Lemma 3} in Appendix~\ref{Appendix 1}, we obtain that $\{{\bf h}^{(k,k')}_l| \forall l\}$ are linearly independent with probability one. Further, due to the randomness of ${\bf H}_k$, the independence of $\{{\bf h}^{(k,k')}_l| \forall l,k,k^\prime, k\neq k^\prime\}$ are guaranteed with probability one.
Again, the remaining $N-3(2M-N)$ dimensions can be used for constructing $\frac{N-3(2M-N)}{4}$ units with Pattern $1.3$. Thus an achievable per-user DoF is given by
\begin{eqnarray}\label{OneCluster-13} \nonumber
d_{\mathrm{user}} = 2(2M-N) + \frac{2(N-3(2M-N))}{4} = M.
\end{eqnarray}
When the overall relay's signal space is occupied by the units with Pattern $1.4$, a per-user DoF of $\frac{N}{3}\times 2 = \frac{2N}{3}$ is achieved. Therefore, the maximum achievable per-user DoF is given by $\min (M, \frac{2N}{3})$.\footnote{A similar proof of $d_{\rm user}=\frac{2N}{3}$ for $\frac{M}{N}\geq \frac{2}{3}$ can be found in \cite{Lee10,Tian_IT2013}.}

The above obtained DoF coincides with the DoF upper bound in \eqref{bound_2}, and therefore, this achievable DoF is exactly the DoF capacity of the channel, which concludes the proof of Theorem~\ref{Theorem OneCluster}.

\section{MIMO mRC with $K>3$}
In this section, we generalize Theorem~\ref{Theorem OneCluster} to the case of an arbitrary number of users.
We start with the case of $K=4$.

\subsection{Preliminaries}
Again, we start with some intuitions of the signal alignment by assuming that each user transmits one independent spatial stream to each of the other users in a unit. The relay's beamforming matrix is still given by \eqref{RelayPrecoder_L1}.

The following patterns are involved in deriving the achievable DoF to be presented later.
Also we omit the unit index $l$ for brevity in this subsection.
Denote $\mathcal{U} \triangleq \{\mathbf{h}^{(k,k^\prime)}| k \in \mathcal{I}_K, k^\prime \in \mathcal{I}_K; k\neq k^\prime\}$ with $ \mathcal{I}_K=\{1,2,3,4\}$, and $\mathcal{U}\backslash\{\mathbf{h}^{(k,k^\prime)},\mathbf{h}^{(k^\prime, k)}\}$ is the vector set obtained by excluding $\mathbf{h}^{(k,k^\prime)}$ and $\mathbf{h}^{(k^\prime, k)}$ from $\mathcal{U}$.
Let $\mathcal{U}_{\bar{i}} \triangleq \{\mathbf{h}^{(k,k^\prime)}| k \in \mathcal{I}_K, k^\prime \in \mathcal{I}_K; k\neq k^\prime; k \neq i, k^\prime \neq i\}$ and $\mathcal{U}_{\bar{i}}\backslash\{\mathbf{h}^{(k,k^\prime)},\mathbf{h}^{(k^\prime, k)}\}$ be the vector set obtained by excluding $\mathbf{h}^{(k,k^\prime)}$ and $\mathbf{h}^{(k^\prime, k)}$ from $\mathcal{U}_{\bar{i}}$.
\begin{enumerate}
  \item \textbf{Pattern 2.1:} $\mathcal{U}$ spans a subspace with dim-$12$ in $\mathbb{C}^{N}$.
  \item \textbf{Pattern 2.2:} $\mathcal{U}$ spans a subspace with dim-$9$ in $\mathbb{C}^{N}$; for any pair $(k,k^\prime)$, $\mathcal{U}\backslash\{\mathbf{h}^{(k,k^\prime)},\mathbf{h}^{(k^\prime, k)}\}$ spans a subspace with dim-$8$.
  \item \textbf{Pattern 2.3:}
For each $i$, $\mathcal{U}_{\bar{i}}$ spans a subspace of dim-$4$ in $\mathbb{C}^{N}$ following Pattern $1.3$.
  \item \textbf{Pattern 2.4:} $\mathcal{U}$ spans a subspace with dim-$6$ in $\mathbb{C}^{N}$; for any pair $(k,k^\prime)$, $\mathbf{h}^{(k,k^\prime)}$ and $\mathbf{h}^{(k^\prime, k)}$ span a subspace with dim-$1$.
\end{enumerate}
It can be readily shown that the projection matrix $\mathbf{P}^{(k,k^\prime)}$ corresponding to Patterns $2.1$ to $2.4$ are of at least rank one with probability one. Thus, Patterns $2.1$, $2.2$, and $2.4$ achieve a total DoF of $12$, while Pattern $2.3$ achieves a total DoF of $24$. Corresponding antenna requirement for each pattern is given in Table~\ref{Table K4}, which will be discussed in details in the Subsection~\ref{OneCluster-K4}.
It is worth mentioning that for a same requirement on $\frac{M}{N}$, some other patterns may possibly be constructed. However, they are ruled out due to a relatively low $d_{\rm relay}$, i.e., less efficiency in utilizing the relay's signal space.
Again, the downlink patterns are omitted due to the uplink/downlink symmetry.

\begin{table}[!t]
\centering
\caption{Patterns for the MIMO mRC with $K=4$}
\label{Table K4}
\begin{IEEEeqnarraybox}[\IEEEeqnarraystrutmode\IEEEeqnarraystrutsizeadd{2pt}{1pt}]{v/c/v/c/v/c/v/c/v/c/v}
\IEEEeqnarrayrulerow\\
&\mbox{Pattern}&&\mbox{Dimension}&& d_{\mathrm{sum}} && d_{\mathrm{relay}} && \mbox{Requirement} &\\
\IEEEeqnarraydblrulerow\\
\IEEEeqnarrayseprow[3pt]\\
& 2.1 && 12 && 12 && 1 && \frac{M}{N}>0 &\\
\IEEEeqnarrayseprow[3pt]\\
\IEEEeqnarrayrulerow\\
\IEEEeqnarrayseprow[3pt]\\
& 2.2 && 9 && 12 && \frac{4}{3} && \frac{M}{N}>\frac{1}{4}&\\
\IEEEeqnarrayseprow[3pt]\\
\IEEEeqnarrayrulerow\\
\IEEEeqnarrayseprow[3pt]\\
& 2.3 && 4 && 6 && \frac{3}{2} && \frac{M}{N}>\frac{1}{3}&\\
\IEEEeqnarrayseprow[3pt]\\
\IEEEeqnarrayrulerow\\
\IEEEeqnarrayseprow[3pt]\\
& 2.4 && 6 && 12 && 2 && \frac{M}{N}>\frac{1}{2}&\\
\IEEEeqnarrayseprow[3pt]\\
\IEEEeqnarrayrulerow
\end{IEEEeqnarraybox}
\end{table}

\subsection{Main Result}\label{subsection_B_K_4}
\begin{prop}\label{Theorem K4}
For the MIMO mRC $(M,N,K)$ with $K = 4$, the per-user DoF capacity of $d_{\rm user}=M$ is achieved when $ \frac{M}{N} \leq \frac{3}{8}$,
and the per-user DoF capacity of $d_{\rm user} =\frac{N}{2}$ is achieved when $\frac{M}{N} \geq \frac{7}{12}$. For $\frac{M}{N} \in (\frac{3}{8}, \frac{7}{12}]$,
an achievable per-user DoF is given by
\begin{equation}\nonumber
d_{\mathrm{user}} =
\left\{
\begin{aligned}
& \frac{3N}{8}, &\frac{3}{8} < \frac{M}{N} \leq \frac{1}{2}\\
&\frac{3M}{2}-\frac{3N}{8}, &\frac{1}{2} < \frac{M}{N} \leq \frac{7}{12}.
\end{aligned}
\right.
\end{equation}
\end{prop}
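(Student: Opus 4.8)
The plan is to follow the same unit-construction philosophy used in the proof of Theorem~\ref{Theorem OneCluster}, but now building units from the four patterns in Table~\ref{Table K4} and carefully partitioning the relay's $N$-dimensional signal space among them according to the value of $\frac{M}{N}$. For the matching converse we rely on the outer bound \eqref{bound_1} specialized to $K=4$, which gives $d_{\mathrm{user}}\le\min\!\left(M,\frac{N}{2}\right)$; this already coincides with the claimed $d_{\mathrm{user}}=M$ for $\frac{M}{N}\le\frac{1}{2}$ and $d_{\mathrm{user}}=\frac{N}{2}$ for $\frac{M}{N}\ge\frac{1}{2}$, so in those regimes it suffices to prove achievability. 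First I would treat $\frac{M}{N}\le\frac{1}{4}$: here $\mathrm{span}(\{\mathbf{H}_k\})$ has dimension $4M\le N$, so $M$ units with Pattern~$2.1$ can be drawn generically, each supporting a per-unit DoF of $12$, giving $d_{\mathrm{user}}=M$. For $\frac{1}{4}<\frac{M}{N}\le\frac{3}{8}$ I would mix Pattern~$2.2$ units (constructed, as in the $K=3$ Pattern~$1.2$ footnote, by forcing a single linear dependence $\sum_k\mathbf{H}_k(\sum_{k'\ne k}\mathbf{u}^{(k,k')}_l)=\mathbf{0}$ so that $12$ streams occupy dim-$9$) with leftover Pattern~$2.1$ units, and verify by a dimension count that the total reaches $d_{\mathrm{user}}=M$ exactly when $\frac{M}{N}\le\frac{3}{8}$, while $\frac{N}{\text{(dim per unit)/12}}$-type capacity of pure Pattern~$2.2$ packing caps it at $\frac{4N}{9}\cdot\frac{1}{...}$, i.e.\ at $\frac{3N}{8}$ once $\frac{M}{N}>\frac{3}{8}$.

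For the hard middle regime $\frac{3}{8}<\frac{M}{N}\le\frac{7}{12}$ the plan is to use Pattern~$2.3$, which is the genuinely new ingredient: Pattern~$2.3$ asks that for every discarded user $i$ the remaining three users' $6$ equivalent channels form a $K=3$ Pattern~$1.3$ structure (three planes through a common line) inside a dim-$4$ subspace. I would construct such a unit by choosing, for each triple of users, vectors in $\mathrm{null}([\mathbf{H}_{k_1},\mathbf{H}_{k_2},\mathbf{H}_{k_3}])$ exactly as in equations \eqref{OneCluster-9-0}--\eqref{Appdix-3-0}, but now these nullspaces are nonempty precisely when $3M>N$, i.e.\ $\frac{M}{N}>\frac{1}{3}$, which is why Pattern~$2.3$ requires $\frac{M}{N}>\frac{1}{3}$ in Table~\ref{Table K4}. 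A subtlety is that a single unit must satisfy the Pattern~$1.3$ condition simultaneously for all four choices of the excluded user, so I would show that a generic selection of the $(3M-N)$-dimensional nullspace bases makes all four sub-patterns hold at once, and that the resulting $12$ streams (six pairs) span a subspace whose dimension I need to pin down — the per-unit DoF bookkeeping in Table~\ref{Table K4} lists dim-$4$ and $d_{\mathrm{sum}}=6$ for Pattern~$2.3$, meaning effectively a sub-unit of three users at a time is what is packed. I would then count how many such (sub-)units fit: by the analogue of Lemma~\ref{Lemma 4}, the span of the relevant $\mathbf{H}_k\mathbf{U}_{H_k}$ blocks has dimension $\min(2(3M-N),N)$ per triple, which for $\frac{M}{N}\le\frac{1}{2}$ is $2(3M-N)$, and combining the contributions over all $\binom{4}{3}=4$ triples with care not to double-count yields the $\frac{3N}{8}$ (for $\frac{3}{8}<\frac{M}{N}\le\frac{1}{2}$) and then the linearly increasing $\frac{3M}{2}-\frac{3N}{8}$ branch once $\frac{M}{N}>\frac{1}{2}$ allows Pattern~$2.4$ (pairwise alignment, needing $2M>N$) to be blended in, filling the remaining dimensions up to $\frac{M}{N}=\frac{7}{12}$ where $d_{\mathrm{user}}=\frac{N}{2}$ is reached.

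Finally, for $\frac{M}{N}\ge\frac{7}{12}$ (equivalently at the point where the $\frac{3M}{2}-\frac{3N}{8}$ line hits $\frac{N}{2}$), I would show the relay space can be fully packed with Pattern~$2.4$ units (each pair's two streams aligned to one direction, $12$ streams in dim-$6$), so $d_{\mathrm{user}}=\frac{N}{\,6/12\,}\cdot\frac{1}{...}=\frac{N}{2}$, meeting the outer bound. Throughout, the linear-independence-across-units claims follow from the genericity of the channel matrices together with the two counting lemmas in Appendix~\ref{Appendix 1} (Lemma~\ref{Lemma 3} and Lemma~\ref{Lemma 4}), and fractional unit counts are handled by symbol extension as in the footnote after \eqref{Appdix-3-0}. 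The main obstacle I anticipate is the Pattern~$2.3$ construction: verifying that one and the same unit can meet the Pattern~$1.3$ (common-line) condition for all four excluded-user sub-configurations simultaneously, and correctly counting the number of linearly independent such units without over- or under-counting the shared dimensions — this is where the dimension ratio $\frac{3}{8}$ (rather than the naive $\frac{1}{3}$) enters, and getting that bookkeeping exactly right is the crux of the proof.
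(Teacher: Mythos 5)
Your overall architecture is the paper's: outer bound $\min(M,\tfrac{N}{2})$ from \eqref{bound_1}, then achievability by packing the relay's $N$ dimensions with units drawn from Patterns $2.1$--$2.4$, regime by regime in $\tfrac{M}{N}$. Your treatment of $\tfrac{M}{N}\le\tfrac14$, of the Pattern~$2.3$/$2.4$ blend on $(\tfrac12,\tfrac{7}{12}]$, and of the pure Pattern~$2.4$ packing for $\tfrac{M}{N}\ge\tfrac{7}{12}$ all match the paper. Your worry about a single $12$-stream unit having to satisfy the Pattern~$1.3$ condition for all four excluded users simultaneously is a non-issue you eventually resolve correctly: the paper packs three-user sub-units separately, one family per triple, which is why Table~\ref{Table K4} lists dimension $4$ and $d_{\mathrm{sum}}=6$ for Pattern~$2.3$.

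There is, however, a genuine gap in your second regime. You claim that mixing Pattern~$2.2$ with Pattern~$2.1$ achieves $d_{\mathrm{user}}=M$ for all $\tfrac{M}{N}\le\tfrac38$ and that pure Pattern~$2.2$ packing caps at $\tfrac{3N}{8}$. Neither is true. The $\tfrac{4M-N}{3}$ Pattern~$2.2$ units occupy $3(4M-N)$ relay dimensions, which exceeds $N$ as soon as $\tfrac{M}{N}>\tfrac13$; and a pure Pattern~$2.2$ packing ($12$ streams in $9$ dimensions, $d_{\mathrm{relay}}=\tfrac43$) saturates at $d_{\mathrm{user}}=\tfrac{N}{9}\times 3=\tfrac{N}{3}$, not $\tfrac{3N}{8}$. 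So your construction achieves $M$ only up to $\tfrac{M}{N}=\tfrac13$ and leaves the interval $(\tfrac13,\tfrac38]$ uncovered. The threshold $\tfrac38$ belongs to Pattern~$2.3$ ($d_{\mathrm{sum}}=6$ in $4$ dimensions, giving $\tfrac{N}{4}\times\tfrac34\times 2=\tfrac{3N}{8}$ per user when it fills the relay space), and the paper closes the interval $(\tfrac13,\tfrac12]$ by packing $\tfrac{3M-N}{2}$ Pattern~$2.3$ units per triple — $8(3M-N)$ dimensions over the four triples — and filling the remaining $N-8(3M-N)$ dimensions with Pattern~$2.2$, which yields exactly $M$ precisely while $8(3M-N)\le N$, i.e.\ $\tfrac{M}{N}\le\tfrac38$. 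One further point: the linear independence of the aligned subspaces coming from the four different triples does not follow from Lemma~\ref{Lemma 4} or Lemma~\ref{Lemma 3} alone; it is exactly what Lemma~\ref{Lemma 33} in Appendix~\ref{Appendix 1} is proved for, and your argument needs to invoke it.
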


\begin{figure}[tp]
\begin{centering}
\includegraphics[scale=0.5]{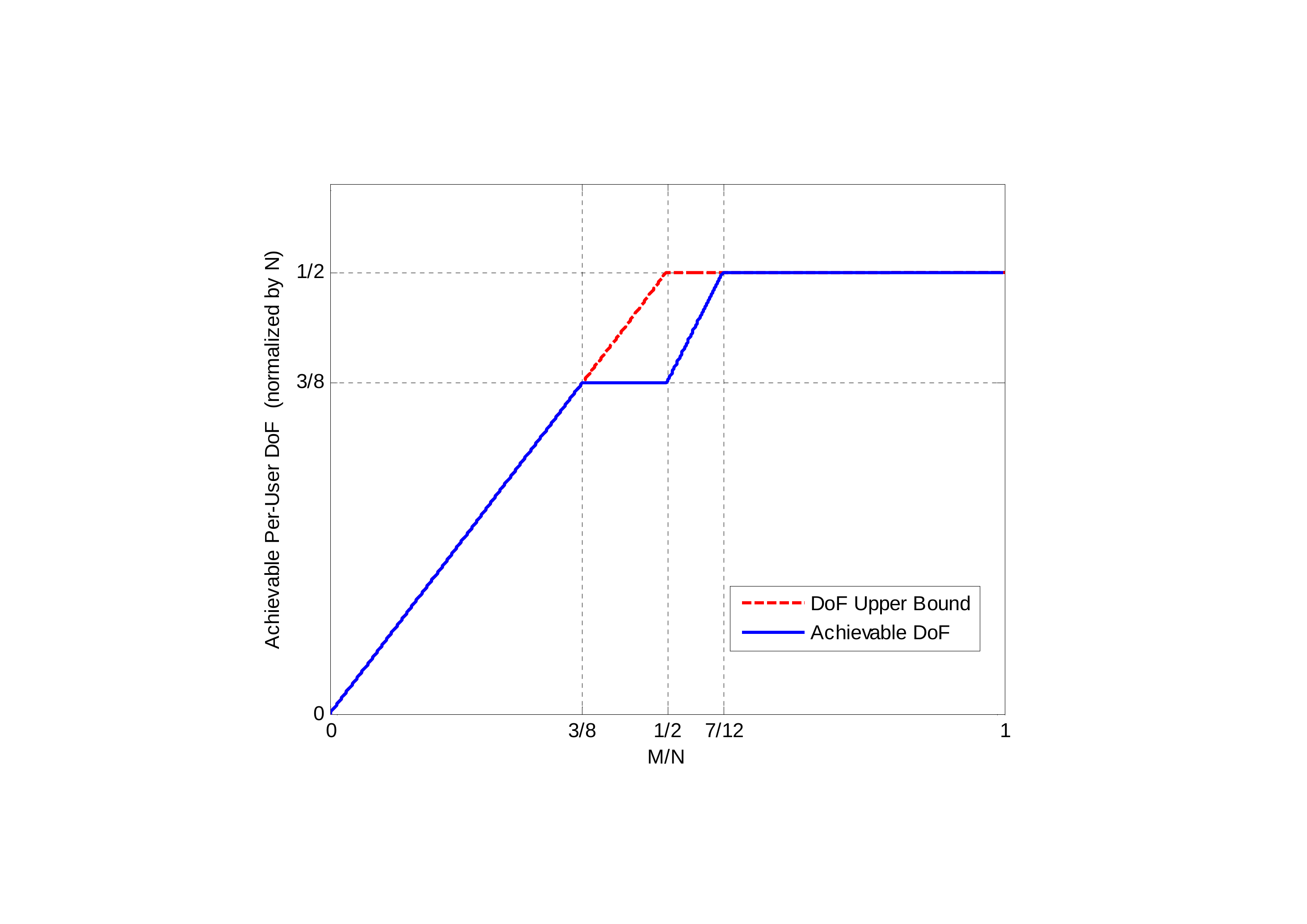}
\vspace{-0.1cm}
\caption{An achievable per-user DoF for the MIMO mRC with $K=4$ against the antenna ratio $\frac{M}{N}$.}  \label{L_1_K_4}
\end{centering}
\vspace{-0.3cm}
\end{figure}

The achievable per-user DoF for MIMO mRC with $K=4$ is illustrated in Fig.~\ref{L_1_K_4}. We observe that, different from the case of $K=3$, the DoF bound given in Section~\ref{Outer_Bound} can only be achieved in the ranges of $\frac{M}{N} \in(0, \frac{3}{8}]$ and $\frac{M}{N}\in [\frac{7}{12}, \infty)$; for $\frac{M}{N} \in (\frac{3}{8}, \frac{7}{12})$, there is a certain DoF gap between the achievable DoF and the capacity outer bound.

\subsection{Proof of Proposition \ref{Theorem K4}}\label{OneCluster-K4}
To prove Proposition~\ref{Theorem K4}, we consider four cases detailed below.
\subsubsection{Case of $\frac{M}{N} \leq \frac{1}{4}$}
In this case, since $N \geq 4M$, the relay's signal space has enough dimensions to support full multiplexing at the users, which implies that each user can transmit $M$ independent spatial streams, or equivalently,
$M$ units with Pattern $2.1$ can be constructed.
Therefore, a per-user DoF of $M$ can be achieved.

\subsubsection{Case of $\frac{1}{4} < \frac{M}{N} \leq \frac{1}{3}$}
As shown in Table~\ref{Table K4}, this case corresponds to Pattern $2.2$. As $4M-N>0$, the nullspace of ${\rm span}\big({\bf H}_1,{\bf H}_2,{\bf H}_3,{\bf H}_4 \big)$ is of dim-$(4M-N)$. Let the columns of ${\bf U}_H \in \mathbb{C}^{N \times (4M-N)}$ be a basis of ${\rm null}\big([{\bf H}_1,{\bf H}_2,{\bf H}_3,{\bf H}_4] \big)$. Partition ${\bf U}_H$ as ${\bf U}_H = [{\bf U}^T_{H_1}, {\bf U}^T_{H_2}, {\bf U}^T_{H_3}, {\bf U}^T_{H_4}]^T$.
From Lemma~\ref{Lemma 4}, ${\rm span}\big({\bf H}_1{\bf U}_{H_1},{\bf H}_2{\bf U}_{H_2},{\bf H}_3{\bf U}_{H_3},{\bf H}_4 {\bf U}_{H_4} \big)$ is of dim-$3(4M-N)$ for sure. Arbitrarily choose three columns of ${\bf U}_H$, denoted by $[\mathbf{u}^{1T}_l, \mathbf{u}^{(2,1)T}_l,\mathbf{u}^{(3,1)T}_l, \mathbf{u}^{(4,1)T}_l]^T$, $[\mathbf{u}^{(1,2)T}_l, \mathbf{u}^{2T}_l,\mathbf{u}^{(3,2)T}_l, \mathbf{u}^{(4,2)T}_l]^T$, and $[\mathbf{u}^{(1,3)T}_l, \mathbf{u}^{(2,3)T}_l,\mathbf{u}^{3T}_l, \mathbf{u}^{(4,3)T}_l]^T$, we have
\begin{subequations}\label{Equ_K4-4}
\begin{eqnarray}
\mathbf{H}_{1}\mathbf{u}^1_l+\mathbf{H}_{2}\mathbf{u}^{(2,1)}_l+\mathbf{H}_{3}\mathbf{u}^{(3,1)}_l + \mathbf{H}_{4}\mathbf{u}^{(4,1)}_l = \mathbf{0} \label{Equ_K4-4a}\\
\mathbf{H}_{1}\mathbf{u}^{(1,2)}_l+\mathbf{H}_{2}\mathbf{u}^2_l+\mathbf{H}_{3}\mathbf{u}^{(3,2)}_l + \mathbf{H}_{4}\mathbf{u}^{(4,2)}_l = \mathbf{0} \label{Equ_K4-4b}\\
\mathbf{H}_{1}\mathbf{u}^{(1,3)}_l+\mathbf{H}_{2}\mathbf{u}^{(2,3)}_l+\mathbf{H}_{3}\mathbf{u}^3_l + \mathbf{H}_{4}\mathbf{u}^{(4,3)}_l = \mathbf{0}.\label{Equ_K4-4c}
\end{eqnarray}
\end{subequations}
Let $\mathbf{u}^{(1,4)}_l=\mathbf{u}^1_l-\mathbf{u}^{(1,2)}_l -\mathbf{u}^{(1,3)}_l$, $\mathbf{u}^{(2,4)}_l=\mathbf{u}^2_l-\mathbf{u}^{(2,1)}_l +\mathbf{u}^{(2,3)}_l$, and $\mathbf{u}^{(3,4)}_l=\mathbf{u}^3_l-\mathbf{u}^{(3,1)}_l +\mathbf{u}^{(3,2)}_l$.
Subtracting \eqref{Equ_K4-4a} by \eqref{Equ_K4-4b} and \eqref{Equ_K4-4c}, we have
\begin{equation}\label{Equ_K4-5}
\begin{split}
& \mathbf{H}_{1}\mathbf{u}^{(1,4)}_l-\mathbf{H}_{2}\mathbf{u}^{(2,4)}_l-\mathbf{H}_{3}\mathbf{u}^{(3,4)}_l  \\
& +\mathbf{H}_{4}  (\mathbf{u}^{(4,1)}_l-\mathbf{u}^{(4,2)}_l - \mathbf{u}^{(4,3)}_l  ) = \mathbf{0}.
\end{split}
\end{equation}
We now show that each unit spans a subspace of dim-$9$ corresponding to Pattern $2.2$.
Eqs. \eqref{Equ_K4-4} and \eqref{Equ_K4-5} imply that one dimension is saved for the subspace spanned by the signals related to one user. The signal alignment shown in \eqref{Equ_K4-4} and \eqref{Equ_K4-5} is similar to the one used by Pattern $1.3$ shown in \eqref{OneCluster-9}. Then, $\{{\bf H}_k {\bf u}^{(k,k')}_l | \forall k,k', k'\neq k\}$ span a subspace of dim-$9$, while $\mathcal{U}_l\backslash\{\mathbf{h}^{(k,k^\prime)}_l,\mathbf{h}^{(k^\prime,k)}_l\}$ span a subspace of dim-$8$. Hence, we always have a nullspace of dim-$1$ to obtian the linear combination of the signals in each pair at the relay. In this way, we can construct $\frac{4M-N}{3}$ units with Pattern $2.2$, occupying a subspace of dim-$3(4M-N)$ in the relay's signal space.
The remaining $N-3(4M-N)$ dimensions of relay's signal space are used to construct $\frac{N-3(4M-N)}{12}$ units with Pattern $2.1$. Thus, an achievable per-user DoF is given by
\begin{equation}\label{Equ_K4-6}\nonumber
\frac{4M-N}{3}\times \frac{12}{4} + \frac{3(N-3(4M-N))}{12} = M.
\end{equation}
If the overall relay's signal space is occupied by the units with Pattern $2.3$, the maximum achievable per-user DoF of $\frac{N}{9} \times 3 = \frac{N}{3}$ is achieved. Therefore, the achievable per-user DoF is given by $\min(M,\frac{N}{3})=M$ for $\frac{1}{4} < \frac{M}{N} \leq \frac{1}{3}$.

\subsubsection{Case of $\frac{1}{3} < \frac{M}{N} \leq \frac{1}{2}$}
In this case, $\frac{M}{N}$ is large enough to construct the units with Pattern $2.3$ shown in Table~\ref{Table K4}.
We form the following four three-user groups: $\{ {\bf H}_1, {\bf H}_2, {\bf H}_3\}$, $\{ {\bf H}_1, {\bf H}_2, {\bf H}_4\}$, $\{ {\bf H}_2, {\bf H}_3, {\bf H}_4\}$, and $\{ {\bf H}_1, {\bf H}_3, {\bf H}_4\}$, and align the signals within each three-user group.
For each group, the signal alignment is conducted as Patten $1.3$ for $K=3$, where $6$ spatial streams occupy a relay's signal subspace of dim-$4$. Similarly to Pattern $1.3$, $\frac{3M-N}{2}$ units with Pattern $2.3$ are constructed and span a subspace of $4\times \frac{3M-N}{2} = 2(3M-N)$ dimensions. Considering the units from four groups, we have $2(3M-N)$ units which span a subspace of dim-$8(3M-N)$.
The independence of units can be proven by using the result given in Lemma~\ref{Lemma 33} in Appendix~\ref{Appendix 1}.
The remaining $N-8(3M-N)$ dimensions of the relay's signal space are used to construct the units with Pattern $2.2$, the achievable per-user DoF can be expressed as
\begin{equation}\label{Equ_K4-8}\nonumber
\frac{3M-N}{2} \times 3 \times 2 + \frac{3(N-8(3M-N))}{9} = M.
\end{equation}
If the overall relay's signal space is occupied by the units with Pattern $2.3$, we achieve the maximum per-user DoF of $\frac{N}{4} \times \frac{3}{4} \times 2= \frac{3N}{8}$. Hence, the achievable per-user DoF is denoted by $\min(M, \frac{3N}{8})$.

\subsubsection{Case of $\frac{M}{N} > \frac{1}{2}$}
In this case, as $2M-N>0$, the signals in each pair can be aligned in one direction to occupy a subspace of dim-$1$. In total, $2M-N$ units with Pattern $2.4$ can be constructed, which span a subspace of dim-$6(2M-N)$. Similarly, the remaining $N-6(2M-N)$ dimensions of the relay's signal space are used to construct the units with Pattern $2.3$. The achievable per-user DoF can be expressed as
\begin{equation}\label{Equ_K4-9}\nonumber
3(2M-N) + \frac{3(N-6(2M-N))}{8} = \frac{3M}{2} - \frac{3N}{8}.
\end{equation}
If the whole relay's signal space is occupied by the units with Pattern $2.4$, we have the maximum per-user DoF of $\frac{N}{6} \times 3 =\frac{N}{2}$.
The achievable per-user DoF is given by $\min(\frac{3M}{2} - \frac{3N}{8}, \frac{N}{2})$, which is equivalent to
\begin{equation}\label{Equ_K4-10}\nonumber
d_{\mathrm{user}} =
\left\{
\begin{aligned}
& \frac{3M}{2} - \frac{3N}{8}, &\frac{1}{2} <\frac{M}{N} \leq \frac{7}{12}\\
&\frac{N}{2}, &\frac{M}{N} > \frac{7}{12}.
\end{aligned}
\right.
\end{equation}
This completes the proof of Proposition~\ref{Theorem K4}.

\subsection{Achievable DoF for A General $K$}\label{OneCluster-General_K}
We now generalize the achievable DoF result to an arbitrary $K$. Denote
\begin{equation}\label{Equ_alphabeta}
\alpha_{t} =
\left(\begin{smallmatrix}
                                                                                 {K-1} \\
                                                                                 t-1
\end{smallmatrix}
                                                                             \right)
(t-1)~{\rm and}~
\beta_{t} = \left(\begin{smallmatrix}
                                                                                 K \\
                                                                                 t
\end{smallmatrix}
                                                                             \right) (t-1)^2.
\end{equation}
\begin{prop}\label{Theorem K_general}
For the MIMO mRC $(M,N,K)$, the per-user DoF capacity of $d_{{\rm user}}=M$ is achieved when $\frac{M}{N} \in \left(0,\frac{K-1}{K(K-2)} \right]$ and the per-user DoF capacity of $d_{{\rm user}}=\frac{2N}{K}$ is achieved when $\frac{M}{N} \in \left[\frac{1}{K(K-1)}+\frac{1}{2},\infty \right)$.
Further, for the remaining range of $\frac{M}{N}$, an achievable per-user DoF is given by
\begin{equation}\label{Equ_K4-15}
\begin{split}
d_{\rm user} = & \min \bigg( \frac{\alpha_{t}(tM-N)}{t-1} + \frac{\alpha_{t+1}\left(N-\frac{\beta_{t}(tM-N)}{t-1} \right) }{\beta_{t+1}}  , \frac{\alpha_{t}N}{\beta_{t}} \bigg),\\
& \frac{M}{N} \in \left(\frac{1}{t}, \frac{1}{t-1} \right],
\end{split}
\end{equation}
where $t = 2,3,\cdots,K-1$.\footnote{The result given in \eqref{Equ_K4-15} is applicable for a larger range than $\left(\frac{K-1}{K(K-2)}, \frac{1}{K(K-1)}+\frac{1}{2} \right)$ since it includes a union of $K-2$ intervals as $\left(\frac{1}{K-1}, \frac{1}{K-2}\right]\cup \left(\frac{1}{K-2}, \frac{1}{K-3}\right]\cup \cdots \cup \left(\frac{1}{2}, 1\right]$. }
\end{prop}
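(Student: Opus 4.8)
\textbf{Proof proposal for Proposition \ref{Theorem K_general}.}

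The plan is to generalize the proof strategy already used for $K=3$ (Theorem \ref{Theorem OneCluster}) and $K=4$ (Proposition \ref{Theorem K4}) by introducing a family of ``layered'' patterns indexed by a group size $t$, and then counting how many units of each pattern can be packed into the relay's $N$-dimensional signal space. First I would fix $t \in \{2,3,\dots,K-1\}$ and restrict attention to the interval $\frac{M}{N} \in \left(\frac{1}{t}, \frac{1}{t-1}\right]$. For such a ratio, any $t$ of the $K$ user channel matrices satisfy $tM - N > 0$ while $(t-1)M - N \le 0$, so the nullspace of $[{\bf H}_{k_1},\dots,{\bf H}_{k_t}]$ has dimension exactly $tM-N$ and is the largest $t'$-wise nullspace that is generically nontrivial. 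Mimicking \eqref{Equ_K4-4}--\eqref{Equ_K4-5}, each basis vector of this nullspace gives a linear dependence among the $t$ groups' transmit signals; by the same ``subtract and recombine'' manipulation used to pass from \eqref{OneCluster-9-0} to \eqref{OneCluster-9}, one shows that within each $t$-user group the $t(t-1)$ cross streams can be aligned so that they span only $\binom{t}{2}+\binom{t}{2}-\text{(savings)}$ dimensions — the precise bookkeeping is what produces the factor $\frac{\beta_t}{t-1}$ per group and the $\frac{\alpha_t}{t-1}$ DoF per group. Taking all $\binom{K}{t}$ possible $t$-user groups, and invoking the appropriate generalization of Lemma \ref{Lemma 33} (independence of units built from different groups), I would argue that $\frac{\alpha_t(tM-N)}{t-1}$ is an achievable per-user DoF from the ``efficient'' pattern alone, spanning $\frac{\beta_t(tM-N)}{t-1}$ relay dimensions.

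Next I would fill the leftover $N - \frac{\beta_t(tM-N)}{t-1}$ dimensions with units of the next-most-efficient pattern, namely the one indexed by $t+1$ (which has $d_{\rm relay} = \frac{\alpha_{t+1}}{\beta_{t+1}}$ acting like the generic ``Pattern $\ast.1$'' role when $\frac{M}{N}$ is not large enough to form its nullspace-based version — exactly as Pattern $1.1$ and Pattern $2.1$ were used as fillers in the earlier proofs). This contributes an additional $\frac{\alpha_{t+1}}{\beta_{t+1}}\bigl(N - \frac{\beta_t(tM-N)}{t-1}\bigr)$ DoF per user, giving the first term inside the $\min$ in \eqref{Equ_K4-15}. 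The second term, $\frac{\alpha_t N}{\beta_t}$, is the cap obtained when the entire relay space is tiled by the efficient $t$-pattern; taking the minimum of the two expressions yields a piecewise-linear curve, bounded by $M$ on one side (where the filler dominates) and by a multiple of $N$ on the other (where the efficient pattern saturates the relay). For the two extreme regimes, $\frac{M}{N} \le \frac{K-1}{K(K-2)}$ and $\frac{M}{N} \ge \frac{1}{K(K-1)}+\frac12$, I would check directly that \eqref{Equ_K4-15} (with $t=K-1$ on the large side, and with the all-generic Pattern for small $\frac{M}{N}$) evaluates to $M$ and $\frac{2N}{K}$ respectively, matching the outer bound \eqref{bound_1}, and hence these are DoF capacities; the arithmetic here is a routine substitution of $\alpha_t,\beta_t$ from \eqref{Equ_alphabeta}.

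The main obstacle I anticipate is twofold. First, establishing that the $t$-user alignment construction is feasible and \emph{generic} — i.e., that the chosen nullspace vectors yield patterns whose every relevant projection matrix $\mathbf{P}^{(k,k^\prime)}_l$ has rank at least one, so that each pair still gets a clean network-coded combination at the relay — requires a careful dimension-counting argument about intersections of many randomly-drawn subspaces, generalizing the ad hoc verifications done for Patterns $1.3$ and $2.2$. Second, and more delicate, is proving \emph{mutual independence} of all the units: units drawn from overlapping $t$-user groups share some channel matrices, so their equivalent directions are not independently random, and one must invoke a strengthened version of Lemma \ref{Lemma 33} (or an inductive argument on the group structure) to certify that the total occupied dimension is the sum $\frac{\beta_t(tM-N)}{t-1}$ plus the filler dimensions, with probability one. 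Handling fractional values of $\frac{\beta_t(tM-N)}{t-1}$, $\frac{4M-N}{3}$-type quantities, etc., would be deferred to symbol extension as in Appendix \ref{Appendix 3}. Once feasibility and independence are in hand, the DoF count is just the linear arithmetic summarized above.
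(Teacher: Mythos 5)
Your proposal follows essentially the same route as the paper's proof: partitioning $\frac{M}{N}\in(\frac{1}{K},1]$ into intervals $(\frac{1}{t},\frac{1}{t-1}]$, building nullspace-based alignment units within each of the $\binom{K}{t}$ $t$-user groups (each unit saving $t-1$ dimensions and hence occupying $(t-1)^2$ relay dimensions per group), invoking Lemma~\ref{Lemma 33} for independence across overlapping groups, filling the residual relay dimensions with Pattern $3.(t+1)$ units, taking the minimum with the saturation cap $\frac{\alpha_t N}{\beta_t}$, and verifying the two capacity-achieving extremes by direct substitution. The only cosmetic difference is your characterization of the $t+1$ filler as playing a ``generic'' role, whereas it is itself a nullspace-based pattern (the genuinely generic pattern only appears as filler when $t=K$); this does not affect the counting or the conclusion.
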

\begin{proof}
For the case of $\frac{M}{N} < \frac{1}{K}$, it is easy to obtain that $d_{\rm user}=M$, which is also the DoF capacity. On the other hand,
for $\frac{M}{N}\in (1,\infty)$, the achievable DoF is equal to the one achieved at $\frac{M}{N}=1$ as the number of relay antennas $N$ is the bottleneck. Thus, we only focus on the range of $\frac{M}{N} \in (\frac{1}{K}, 1]$ in the following proof. Moreover, we partition the range of $(\frac{1}{K}, 1]$ into intervals of $(\frac{1}{t}, \frac{1}{t-1}]$ with $t=2,3,\cdots,K$, and discuss the signal alignment design for $\frac{M}{N} \in (\frac{1}{t}, \frac{1}{t-1}]$ with an arbitrary $t$.

Note that, $\frac{M}{N} \in (\frac{1}{t}, \frac{1}{t-1}]$ imples $N<tM$,
which indicates that only the spatial streams from $t$ users can be aligned together. Based on that, the most efficient way of signal alignment is to split all the users into different $t$-user groups and perform signal alignment in each group. In this way, we have $\left(
                                                                               \begin{smallmatrix}
                                                                                 {K} \\
                                                                                 t \\
                                                                               \end{smallmatrix}
                                                                             \right)$ number of different $t$-user groups. Further, each user is included in $\left(
                                                                              \begin{smallmatrix}
                                                                                 {K-1} \\
                                                                                 t-1 \\
                                                                               \end{smallmatrix}
                                                                             \right)$ number of different $t$-user groups.
Denote the channel matrices for an arbitrarily chosen $t$-user group as $\left\{{\bf H}_{i_1},{\bf H}_{i_2},\cdots,{\bf H}_{i_t} \right\}$, the nullspace of ${\rm span}\left({\bf H}_{i_1},{\bf H}_{i_2},\cdots,{\bf H}_{i_t} \right)$ is of dim-$(tM-N)$ with probability one.
Similarly to Pattern $2.3$, an efficient way to align signals in one unit is let the spatial streams related to one user be aligned together (without loss of generality, we term this pattern as Pattern $3.t$). The beamformers in unit $l$ can be designed to satisfy the conditions given in \eqref{Equ_K4-11-a}-\eqref{Equ_K4-11-c} shown at the top of next page,
\newcounter{LangEq}
\begin{figure*}[!t]
\normalsize \setcounter{LangEq}{\value{equation}}
\setcounter{equation}{21}
\begin{subequations}\label{Equ_K4-11}
\begin{eqnarray}
 {\bf H}_{i_1} \sum^{t}_{i=2} a^{(1,i)}_l {\bf u}^{(1,i)}_l  + {\bf H}_{i_2} {\bf u}^{(2,1)}_l + \cdots + {\bf H}_{{i_{t-1}}} {\bf u}^{(t-1,1)}_l + {\bf H}_{i_t} {\bf u}^{(t,1)}_l = {\bf 0}, \label{Equ_K4-11-a}\\
 {\bf H}_{i_1} {\bf u}^{(1,2)}_l +  {\bf H}_{i_2} \sum^{t}_{i=1, i \neq 2} a^{(2,i)}_l {\bf u}^{(2,i)}_l  + \cdots + {\bf H}_{{i_{t-1}}} {\bf u}^{(t-1,2)}_l + {\bf H}_{i_t} {\bf u}^{(t,2)}_l = {\bf 0}, \label{Equ_K4-11-b}\\
\vdots  ~~~~~~~~~~~~~~~~~ \nonumber \\
 {\bf H}_{i_1} {\bf u}^{(1,t-1)}_l +  {\bf H}_{i_2} {\bf u}^{(2,t-1)}_l   + \cdots + {\bf H}_{{i_{t-1}}} \sum^{t}_{i=1, i \neq t-1} a^{(t-1,i)}_l {\bf u}^{(t-1,i)}_l  + {\bf H}_{i_t} {\bf u}^{(t,t-1)}_l = {\bf 0}, \label{Equ_K4-11-c} \\
  {\bf H}_{i_1}  {\bf u}^{(1,t)}_l  + {\bf H}_{i_2}  {\bf u}^{(2,t)}_l + \cdots + {\bf H}_{{i_{t-1}}} {\bf u}^{(t-1,t)}_l + {\bf H}_{i_t} \left({\bf u}^{(t,1)}_l -  \sum^{t-1}_{i=2} {\bf u}^{(t,i)}_l \right) = {\bf 0}. \label{Equ_K4-11-d}
\end{eqnarray}
\end{subequations}
\setcounter{equation}{\value{LangEq}}
\hrulefill
\end{figure*}
where
\begin{equation}\label{New_aij}\nonumber
a^{(i,j)}_l =
\left\{
\begin{aligned}
& 1, & i=1~{\rm or}~j=1\\
& -1, & {\rm otherwise}.
\end{aligned}
\right.
\end{equation}
Subtracting \eqref{Equ_K4-11-a} by equations from \eqref{Equ_K4-11-b} to \eqref{Equ_K4-11-c}, we obtain the equation given in \eqref{Equ_K4-11-d} shown at the top of next page.
Based on \eqref{Equ_K4-11}, we see that total $t-1$ dimensions can be saved for the subspace spanned by the signals in unit $l$.
In total, we can construct $\frac{tM-N}{t-1}$ units with Pattern $3.t$ in each group and the achievable per-user DoF in one unit is $t-1$. Thus, the achievable per-user DoF is $ \frac{\alpha_t(tM-N)}{t-1}$.
Note that the linear independence of the subspaces spanned by the units with Pattern $3.t$ can be proven using the result in Lemma~\ref{Lemma 33} in Appendix~\ref{Appendix 1}.
Moreover, for each $t$-user group, one unit spans a subspace of $t(t-1) -(t-1) = (t-1)^2$ dimensions. Considering all $\left(
                                                                               \begin{smallmatrix}
                                                                                 K \\
                                                                                 t \\
                                                                               \end{smallmatrix}
                                                                             \right)$ number of groups, all the units span a subspace of $  \frac{\beta_t(tM-N)}{t-1}$ dimensions.
The remaining $N- \frac{\beta_t(tM-N)}{t-1}$ dimensions of relay's signal space are used to construct the units with Pattern $3.(t+1)$. Then, the achievable per-user DoF can be expressed as
\setcounter{equation}{22}
\begin{equation}\label{Equ_K4-17}
d_{\rm user} =  \frac{\alpha_t(tM-N)}{t-1} + \frac{\alpha_{t+1}\left(N-  \frac{\beta_{t}(tM-N)}{t-1}\right)}{\beta_{t+1}} .
\end{equation}
When the relay's signal space is wholly occupied by units with Pattern $3.t$, we achieve a maximum achievable per-user DoF of $\frac{\alpha_tN}{\beta_t}$. Thus, an achievable per-user DoF is obtained as \eqref{Equ_K4-15}.

When $t=K$, we have only one group, which leads to $\alpha_K = K-1$ and $\beta_K=(K-1)^2$. Note that for $\frac{M}{N}<\frac{1}{K}$, we cannot do any signal alignment, $\alpha_{K+1}$ and $\beta_{K+1}$ defined in \eqref{Equ_K4-17} should be equal to $K-1$ and $K(K-1)$, respectively.
Substitute them into \eqref{Equ_K4-15}, we have $d_{\rm user}=\min (M, \frac{N}{K-1})=M$, which is the DoF capacity.

When $t=K-1$, we have $\alpha_{K-1} = (K-1)(K-2)$ and $\beta_{K-1}=K(K-2)^2$. Substitute $\alpha_K$, $\beta_K$, $\alpha_{K-1}$, and $\beta_{K-1}$ into \eqref{Equ_K4-15}, we obtain $d_{\rm user}=\min \left(M, \frac{(K-1)N}{K(K-2)} \right)=M$, which implies that the per-user DoF capacity of $d_{\rm user}=M$ can also be achieved for $\frac{M}{N} \in \left(\frac{1}{K-1},\frac{K-1}{K(K-2)} \right]$.

Similarly, we can also verify that the per-user DoF capacity of $\frac{2N}{K}$ can be achieved in $\frac{M}{N} \in \left[\frac{1}{K(K-1)}+\frac{1}{2},1 \right]$ by letting $t=2$. Then, we complete the proof Proposition~\ref{Theorem K_general}.
\end{proof}

When the number of the users $K$ tends to infinity, the following asymptotic DoF can be obtained from Proposition~\ref{Theorem K_general}.\footnote{When $K\rightarrow \infty$, $d_{\rm user}$ tends to $0$. Thus, we consider the total achievable DoF $d_{\rm sum}$ here.}

\begin{figure}[tp]
\begin{centering}
\includegraphics[scale=0.50]{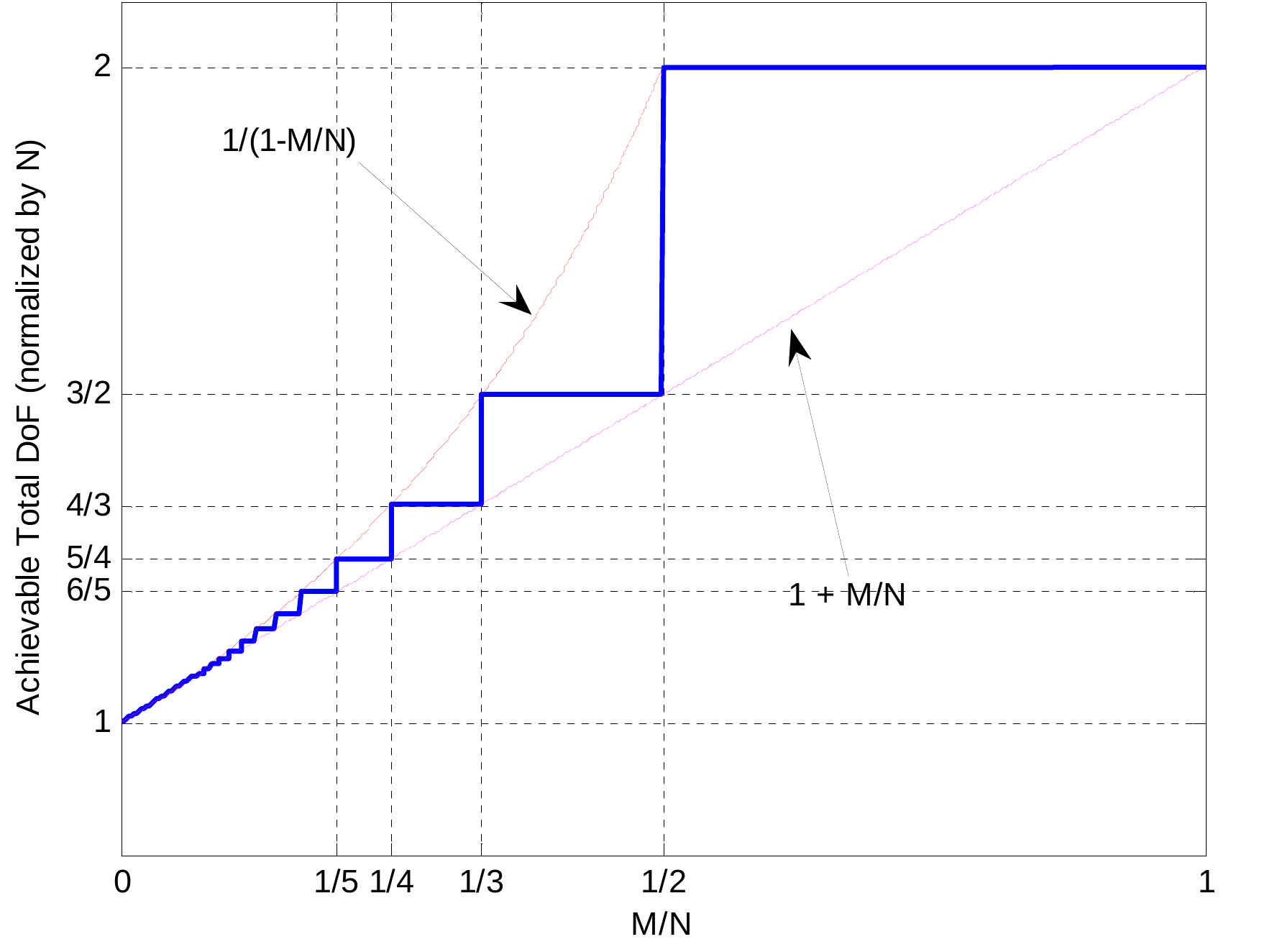}
\vspace{-0.1cm}
\caption{A total achievable DoF of the MIMO mRC with $K\rightarrow\infty$ against the antenna ratio $\frac{M}{N}$.}  \label{K_infty}
\end{centering}
\vspace{-0.3cm}
\end{figure}

\begin{corollary}\label{Theorem K_infinite}
For the MIMO mRC $(M,N,K)$ with $K\rightarrow \infty$, the total DoF of $d_{\rm sum}=2N$ is achieved when $\frac{M}{N} > \frac{1}{2}$ and $d_{\rm sum}=N$ is achieved as $\frac{M}{N}\rightarrow 0$.
For $\frac{M}{N} \in (0, \frac{1}{2}]$, the achievable total DoF contains discontinuities at $\frac{M}{N} = \frac{1}{t}, t=2,3,4,\cdots$. Specifically, when $\frac{M}{N} \in (\frac{1}{t}, \frac{1}{t-1}]$, a total DoF of $\frac{tN}{t-1}$ is achieved.
\end{corollary}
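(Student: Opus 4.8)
The plan is to obtain Corollary~\ref{Theorem K_infinite} directly from the achievable per-user DoF of Proposition~\ref{Theorem K_general} by letting $K\to\infty$ with the antenna ratio $\frac{M}{N}$ held fixed. First I would fix the ratio $\frac{M}{N}$ and let $t\ge 2$ be the integer with $\frac{M}{N}\in\bigl(\frac{1}{t},\frac{1}{t-1}\bigr]$ (the case $\frac{M}{N}>1$ being handled separately below), so that for every $K\ge t+1$ the branch of \eqref{Equ_K4-15} indexed by this $t$ is in force. Writing $A_t$ and $B_t$ for the first and second arguments of the minimum in \eqref{Equ_K4-15}, and using $d_{\rm sum}=K\,d_{\rm user}$ from \eqref{d_relay}, we have $d_{\rm sum}=\min\!\bigl(KA_t,\,KB_t\bigr)$, so the whole corollary reduces to evaluating these two quantities for large $K$.

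The key algebraic step is to simplify the binomial coefficients in $\alpha_t,\beta_t$ of \eqref{Equ_alphabeta}. From $\binom{K}{t}=\frac{K}{t}\binom{K-1}{t-1}$ one gets $\frac{\alpha_t}{\beta_t}=\frac{t}{K(t-1)}$ and, with $t+1$ in place of $t$, $\frac{\alpha_{t+1}}{\beta_{t+1}}=\frac{t+1}{Kt}$, as well as $\frac{\alpha_t}{t-1}=\binom{K-1}{t-1}$ and $\frac{\beta_t}{t-1}=(t-1)\binom{K}{t}$. Substituting these into \eqref{Equ_K4-15} and multiplying by $K$ gives
\begin{align*}
KB_t&=\frac{tN}{t-1},\\
KA_t&=(tM-N)\,K\binom{K-1}{t-1}+\tfrac{t+1}{t}\Bigl(N-(t-1)(tM-N)\binom{K}{t}\Bigr).
\end{align*}
Gathering the terms carrying the factor $tM-N$ in $KA_t$ and using $\binom{K}{t}=\frac{K}{t}\binom{K-1}{t-1}$ once more, one finds this contribution equals $(tM-N)K\binom{K-1}{t-1}\bigl(1-\frac{(t+1)(t-1)}{t^2}\bigr)=\frac{(tM-N)}{t^2}\,K\binom{K-1}{t-1}$, which diverges to $+\infty$ since $tM-N>0$ strictly on the open interval and $K\binom{K-1}{t-1}\to\infty$; the remaining part $\frac{t+1}{t}N$ of $KA_t$ is a bounded constant. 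Hence $KA_t>KB_t$ for all $K$ large enough, so $d_{\rm sum}=KB_t=\frac{tN}{t-1}$ for all such $K$, and in particular this value persists as $K\to\infty$. Taking $t=2$ gives $d_{\rm sum}=2N$ on $\bigl(\frac{1}{2},1\bigr]$; combined with the remark in Proposition~\ref{Theorem K_general} that the DoF for $\frac{M}{N}>1$ coincides with that at $\frac{M}{N}=1$, this yields $d_{\rm sum}=2N$ for all $\frac{M}{N}>\frac{1}{2}$, which moreover equals the outer bound $\min(KM,2N)=2N$ valid for large $K$ and is thus the DoF capacity. Letting $t\to\infty$, i.e.\ $\frac{M}{N}\to0$, gives $\frac{tN}{t-1}=N+\frac{N}{t-1}\to N$.

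It remains only to read off the discontinuities. On $\bigl(\frac{1}{t},\frac{1}{t-1}\bigr]$ the limiting DoF is $\frac{tN}{t-1}$, whereas the point $\frac{M}{N}=\frac{1}{t}$ lies in $\bigl(\frac{1}{t+1},\frac{1}{t}\bigr]$, where it equals $\frac{(t+1)N}{t}$; since $\frac{tN}{t-1}-\frac{(t+1)N}{t}=\frac{N}{t(t-1)}>0$, there is a downward jump of $\frac{N}{t(t-1)}$ at each $\frac{M}{N}=\frac{1}{t}$, $t=2,3,\dots$, which is exactly the asserted discontinuity structure. The only slightly delicate point is the claim that the first branch $KA_t$ dominates asymptotically, but this rests on nothing more than the elementary identity $1-\frac{(t+1)(t-1)}{t^2}=\frac{1}{t^2}>0$ together with $K\binom{K-1}{t-1}\to\infty$; one should additionally observe that at the left endpoint $\frac{M}{N}=\frac{1}{t}$ (where $tM-N=0$ and the above divergence fails) the point belongs to the next lower interval, so the argument is simply re-run with $t+1$ in place of $t$, keeping the two descriptions mutually consistent.
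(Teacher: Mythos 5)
Your proposal is correct and follows essentially the same route as the paper, which obtains the corollary by letting $K\rightarrow\infty$ in the achievable per-user DoF of Proposition~\ref{Theorem K_general} and multiplying by $K$; you merely make explicit the binomial identities $\frac{\alpha_t}{\beta_t}=\frac{t}{K(t-1)}$ and the divergence of the first argument of the minimum in \eqref{Equ_K4-15}, which the paper leaves as an unstated computation. The endpoint check at $\frac{M}{N}=\frac{1}{t}$ and the resulting jump of $\frac{N}{t(t-1)}$ agree with the discussion in Remark 1.
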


\emph{Remark} 1:
When $K \rightarrow \infty$, the number of spatial streams $K(K-1)$ tends infinity. The total achievable DoF is then bounded by the number of relay antennas $N$, which further implies that only a portion of users can realize data exchange.
The overall achievable total DoF with respect to $\frac{M}{N}$ is illustrated in Fig.~\ref{K_infty}. From Corollary~\ref{Theorem K_infinite}, we see that for each antenna setup $\frac{M}{N} = \frac{1}{t}$, the total achievable DoF jumps from $\frac{t+1}{t}$ to $\frac{t}{t-1}$. It is interesting to verify that the discontinuous points $(\frac{M}{N},\frac{d_{\rm sum}}{N})=(\frac{1}{t}, \frac{t+1}{t})$ are went through by the line $y = 1 + M/N$, while the discontinuous points $(\frac{M}{N},\frac{d_{\rm sum}}{N})=(\frac{1}{t}, \frac{t}{t-1})$ are enveloped by the curve $y = \frac{1}{1 - M/N}$. In this case, the achievable total DoF is nicely bounded by these two curves as shown in Fig.~\ref{K_infty}. Therefore, without loss of generality, we refer to $ \frac{N^2}{N - M}$ as an upper bound of the total achievable DoF and refer to $N + M$ as a lower bound of the total achievable DoF.

\section{Improved Achievable DoF Using Relay Antenna Deactivation}
In the previous sections, we have shown that the proposed beamforming design achieves the DoF capacity for the MIMO mRC with $K=3$. However, a certain gap occurs in the range of $\frac{M}{N}\in \left(\frac{K-1}{K(K-2)}, \frac{1}{K(K-1)}+\frac{1}{2}\right)$ when $K>3$.
In this section, we show that the obtained achievable DoF in Proposition~\ref{Theorem K_general} can be enhanced by the technique of relay antenna deactivation, i.e., to leave a portion of relay antennas disabled in the uplink and downlink transmissions.
We emphasize that the antenna deactivation technique in general cannot improve the DoF of the considered relay channel. The improvement presented below comes from the non-optimality of the signal alignment technique utilized in Section V.

To proceed, we first give the following property which can be directly obtained from Proposition~\ref{Theorem K_general}.
\begin{property}\label{property_1}
For a MIMO mRC with an antenna configuration of $(M,N)$, the obtained achievable DoF in Proposition~\ref{Theorem K_general} can always be represented by $\psi(\frac{M}{N})N$ where $\psi(\frac{M}{N})$ is a coefficient determined solely by $\frac{M}{N}$.
\end{property}

Property~\ref{property_1} implies that when the antenna configuration of a MIMO mRC varies from $(M,N)$ to $(\sigma M,\sigma N)$ where $\sigma>0$ is an arbitrary coefficient, the obtained achievable DoF by Proposition~\ref{Theorem K_general} changes from $\psi(\frac{M}{N}) N$ to $\psi(\frac{M}{N})\sigma N$. Then, we have the following lemma.

\begin{lemma}\label{lemma_1}
For the MIMO mRC $(M,N,K)$, assume that a certain DoF of $d_{\rm user}=\varphi N_0$ is achievable at $(M,N)=(M_0, N_0)$ where $M_0$ and $N_0$ are certain constant integers, and $\varphi \in (0,1]$ is a coefficient. Then $d_{\rm user}=\varphi N_0 \frac{M}{M_0}$ is achievable for any $M\leq M_0$ by disabling a fraction $1-\frac{M}{M_0}$ of all the relay antennas.
\end{lemma}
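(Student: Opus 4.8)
The plan is to exploit the scale invariance established in Property~\ref{property_1} together with the observation that deactivating relay antennas is equivalent to considering a MIMO mRC with a smaller relay antenna count, while leaving the user antenna count untouched. First I would fix the antenna configuration $(M,N)$ with $M\le M_0$ and note that $\frac{M}{N_0}\le\frac{M_0}{N_0}$; the goal is to realize the DoF promised at $(M_0,N_0)$ but scaled down by the factor $\frac{M}{M_0}$.

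The key steps, in order, are as follows. (i) Set $N_0' = N_0\frac{M}{M_0}$ and observe that the antenna ratio satisfies $\frac{M}{N_0'} = \frac{M}{N_0}\cdot\frac{M_0}{M} = \frac{M_0}{N_0}$, i.e.\ the pair $(M,N_0')$ has exactly the same antenna ratio as the pair $(M_0,N_0)$. (ii) By Property~\ref{property_1}, the achievable DoF depends only on the ratio $\frac{M}{N}$ through $\psi(\cdot)$ and scales linearly with $N$; since $d_{\rm user}=\varphi N_0$ is achievable at $(M_0,N_0)$, we have $\psi(\frac{M_0}{N_0})N_0=\varphi N_0$, hence $\psi(\frac{M_0}{N_0})=\varphi$, and therefore at $(M,N_0')$ the scheme of Proposition~\ref{Theorem K_general} achieves $d_{\rm user}=\psi(\frac{M}{N_0'})N_0'=\varphi N_0' = \varphi N_0\frac{M}{M_0}$. (iii) Finally, interpret the configuration $(M,N_0')$ as the original configuration $(M,N_0)$ with only $N_0'=N_0\frac{M}{M_0}$ of the $N_0$ relay antennas kept active: the relay simply ignores (zeros out) a fraction $1-\frac{M}{M_0}$ of its receive and transmit dimensions, which reduces its effective channel matrices $\mathbf{H}_k$ and $\mathbf{G}_k$ to the corresponding submatrices, and these submatrices still have entries drawn from a continuous distribution and hence retain the generic full-rank properties required by the construction in Proposition~\ref{Theorem K_general}. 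Combining (ii) and (iii) gives the claim, provided $N_0'$ is an integer; otherwise a standard symbol-extension argument (as already invoked repeatedly in the paper) removes the integrality restriction.

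The main obstacle I anticipate is the bookkeeping in step (iii): one must verify that restricting the relay to a subset of its antennas does not destroy any of the genericity assumptions used in the pattern constructions of Section~V (for instance, that the relevant intersections and nullspaces still have the dimensions predicted by Lemmas~\ref{Lemma 3}, \ref{Lemma 4}, and \ref{Lemma 33}). This is true because selecting rows of a matrix whose entries are drawn from a continuous distribution yields a matrix whose entries are still drawn from a continuous distribution, so all the ``with probability one'' statements carry over verbatim to the deactivated system; but this point should be stated explicitly. A secondary, purely technical, point is the integrality of $N_0\frac{M}{M_0}$, which we dispose of by the same symbol-extension device used elsewhere in the paper.
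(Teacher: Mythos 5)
Your proposal is correct and follows essentially the same route as the paper's proof: deactivate relay antennas so that the reduced configuration $(M,N_0')$ with $N_0'=N_0\frac{M}{M_0}$ has the same antenna ratio as $(M_0,N_0)$, then invoke Property~\ref{property_1} to conclude $d_{\rm user}=\varphi N_0'=\varphi N_0\frac{M}{M_0}$, with symbol extension handling non-integer $N_0'$. Your explicit verification in step (iii) that row-selection preserves the genericity assumptions is a welcome elaboration that the paper leaves implicit, but it does not change the argument.
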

\begin{proof}
Consider an antenna setup of $(M,N_0)$ with $M<M_0$.  As $\frac{M}{N_0}<\frac{M_0}{N_0}$, we can reduce the number of active relay antennas to $N^\prime=\frac{MN_0}{M_0}$ by disabling a fraction $1-\frac{M}{M_0}$ of all the relay antennas.\footnote{Here, we assume that $\frac{MN_0}{M_0}$ is an integer. Otherwise, the technique of symbol extension should be used to ensure that the number of disabled relay antennas is an integer.} Then, we have $\frac{M}{N^\prime}=\frac{M_0}{N_0}$.
Based on Property~\ref{property_1}, a DoF of $d_{\rm user}=\varphi N^\prime=\varphi N_0\frac{M}{M_0}$ can be achieved.
This completes the proof of Lemma~\ref{lemma_1}.
\end{proof}

Lemma~\ref{lemma_1} implies that for each achievable DoF $d$ (normalized by $N$) obtained in Proposition~\ref{Theorem K_general} at $\frac{M}{N}=a$, we obtain a line segment $y=\frac{d}{a}x$ containing new achievable DoF (normalized by $N$) in the range of $\frac{M}{N}\in(0,a]$. Take Fig.~\ref{Im_L_1_K_4} as an example. Consider that the point $(\frac{M}{N},\frac{d_{\rm user}}{N})=(\frac{7}{12},\frac{1}{2})$ is achievable.\footnote{Here, we say that a point $(\frac{M}{N},\frac{d_{\rm user}}{N})=(\vartheta_0,d_0)$ is achievable if a DoF of $d_0$ (normalized by $N$) is achieved at $\frac{M}{N}=\vartheta_0$.} Then, based on Lemma~\ref{lemma_1}, we obtain that all the points on the line segment connecting $(\frac{M}{N},\frac{d_{\rm user}}{N})=(0,0)$ and $(\frac{M}{N},\frac{d_{\rm user}}{N})=(\frac{7}{12},\frac{1}{2})$, i.e., $y=\frac{12}{14}x=\frac{6}{7}x$ with $x\in (0,\frac{7}{12}]$, are achievable.

We next improve the results in Proposition~\ref{Theorem K_general} using Lemma~\ref{lemma_1}.
To simplify the notation in \eqref{Equ_K4-15}, we denote
\begin{equation}\label{Equ_Imporve_1}
\begin{split}
 \gamma_{t,1} &= \frac{\alpha_t(tM-N)}{t-1} + \frac{\alpha_{t+1}\left(N-  \frac{\beta_t(tM-N)}{t-1}\right) }{\beta_{t+1}} \\
 \gamma_{t,2} &= \frac{\alpha_t N}{\beta_t}\\
 \theta_t & = \frac{t-1}{t \beta_t}+ \frac{1}{t},~t=2,3,\cdots,K-1.
\end{split}
\end{equation}
Note that $\gamma_{t,1}$ is the achievable DoF when the overall relay's signal space is occupied jointly by units following Pattern $3.t$ (see \eqref{Equ_K4-11}) and Pattern $3.(t+1)$, whereas $\gamma_{t,2}$ is the achievable DoF when the overall relay's signal space is occupied by Pattern $3.t$.
Particularly, for $\frac{M}{N}=\theta_t$, we have $\gamma_{t,1}=\gamma_{t,2}$, implying that the overall relay's signal space is occupied by only one pattern, i.e., Pattern $3.t$. In what follows, we refer to the point $(\frac{M}{N},\frac{d_{\rm user}}{N})=(\theta_t, \frac{\gamma_{t,2}}{N})$ ($\frac{\gamma_{t,2}}{N}$ is the normalized achievable DoF) as a corner point. For $K=4$,  $(\frac{M}{N},\frac{d_{\rm user}}{N})=(\frac{7}{12}, \frac{1}{2})$ is a \emph{corner point} as illustrated in Fig.~\ref{L_1_K_4}. Next, by applying Lemma~\ref{lemma_1}, we use the corner point $(\frac{M}{N},\frac{d_{\rm user}}{N})=(\theta_t, \frac{\gamma_{t,2}}{N})$ to improve the achievable DoF in the interval of $\frac{M}{N} \in (\theta_{t+1},\theta_t]$ in Proposition~\ref{Theorem K_general}, as presented in the following lemma.

\begin{lemma}\label{lemma_2}
For $\frac{M}{N} \in (\theta_{t+1},\theta_t]$, the improved achievable per-user DoF is given by $\frac{Mt \alpha_t}{t-1 + \beta_t}$ in the range of $\frac{M}{N} \in (\frac{ \alpha_{t+1}(t-1 + \beta_t)}{t\alpha_t \beta_{t+1}}, \theta_t)$.
\end{lemma}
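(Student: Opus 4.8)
The plan is to establish Lemma~\ref{lemma_2} as a direct consequence of Lemma~\ref{lemma_1} applied to the corner point $(\theta_t,\frac{\gamma_{t,2}}{N})$, together with a comparison against the baseline DoF of Proposition~\ref{Theorem K_general} in the interval $(\theta_{t+1},\theta_t]$. First I would recall from \eqref{Equ_Imporve_1} that at $\frac{M}{N}=\theta_t=\frac{t-1}{t\beta_t}+\frac{1}{t}$ we have $\gamma_{t,1}=\gamma_{t,2}=\frac{\alpha_t N}{\beta_t}$, so the point $(\frac{M}{N},\frac{d_{\rm user}}{N})=(\theta_t,\frac{\alpha_t}{\beta_t})$ is achievable. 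Writing this corner point in the form $(M_0,N_0)$ with $\frac{M_0}{N_0}=\theta_t$ and normalized DoF $\varphi=\frac{\alpha_t}{\beta_t}$, Lemma~\ref{lemma_1} immediately yields that for any $\frac{M}{N}\le \theta_t$ the DoF $d_{\rm user}=\varphi N_0\frac{M}{M_0}=\frac{\alpha_t}{\beta_t}\cdot\frac{M}{\theta_t}$ is achievable by disabling a fraction $1-\frac{M}{M_0}$ of the relay antennas. The plan is then to simplify $\frac{\alpha_t}{\beta_t \theta_t}$: substituting $\theta_t=\frac{t-1+\beta_t}{t\beta_t}$ gives $\frac{\alpha_t}{\beta_t}\cdot\frac{t\beta_t}{t-1+\beta_t}=\frac{t\alpha_t}{t-1+\beta_t}$, so the achievable per-user DoF along this line segment is exactly $\frac{Mt\alpha_t}{t-1+\beta_t}$, matching the claimed expression.

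Next I would verify the claimed range $\frac{M}{N}\in\bigl(\frac{\alpha_{t+1}(t-1+\beta_t)}{t\alpha_t\beta_{t+1}},\theta_t\bigr)$, i.e.\ show that this line segment strictly improves on the baseline DoF of Proposition~\ref{Theorem K_general} precisely on that sub-interval. The right endpoint is $\theta_t$, the corner point itself, where the two schemes coincide. For the left endpoint, note that in the interval $\frac{M}{N}\in(\theta_{t+1},\theta_t]$ the relevant baseline pattern is $3.(t+1)$ (possibly mixed with $3.(t+2)$), and in particular near the low end of the interval the dominant term is the $\gamma_{t+1,2}$-type bound $\frac{\alpha_{t+1}N}{\beta_{t+1}}$, which is a constant (flat in $\frac{M}{N}$). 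The new line $y=\frac{Mt\alpha_t}{t-1+\beta_t}$ exceeds this flat value exactly when $\frac{M}{N}>\frac{\alpha_{t+1}(t-1+\beta_t)}{t\alpha_t\beta_{t+1}}$; solving $\frac{(M/N)\,t\alpha_t}{t-1+\beta_t}=\frac{\alpha_{t+1}}{\beta_{t+1}}$ for $\frac{M}{N}$ produces exactly the stated left endpoint. I would also check that this crossover point indeed lies inside $(\theta_{t+1},\theta_t)$, i.e.\ $\theta_{t+1}<\frac{\alpha_{t+1}(t-1+\beta_t)}{t\alpha_t\beta_{t+1}}<\theta_t$, which follows from the monotonicity of the relevant quantities in $t$ and can be confirmed by substituting the binomial-coefficient expressions from \eqref{Equ_alphabeta}.

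The main obstacle I anticipate is not the invocation of Lemma~\ref{lemma_1} — that part is essentially bookkeeping — but rather the careful algebraic verification that the new achievable line genuinely dominates the piecewise-linear baseline of \eqref{Equ_K4-15} throughout the claimed interval and nowhere outside it, given that the baseline itself is a minimum of two affine pieces ($\gamma_{t+1,1}$ and $\gamma_{t+1,2}$) that switch at $\theta_{t+1}$. One must compare the line $\frac{Mt\alpha_t}{t-1+\beta_t}$ against both pieces and show the improvement region is exactly the interval stated, which requires unwinding the definitions \eqref{Equ_alphabeta} and \eqref{Equ_Imporve_1} and doing a somewhat delicate inequality chase among the $\alpha_t,\beta_t,\theta_t$. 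A secondary technical point is the integrality/symbol-extension caveat already flagged in the proof of Lemma~\ref{lemma_1}: the number $\frac{MN_0}{M_0}$ of active relay antennas need not be an integer, so I would note that symbol extension over multiple channel uses is applied to realize the fractional antenna deactivation, exactly as in the footnote to Lemma~\ref{lemma_1}, and that this does not affect the DoF. With these pieces in place, the lemma follows.
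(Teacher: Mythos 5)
Your proposal is correct and follows exactly the route the paper intends: the paper gives no explicit proof of Lemma~2, but derives it implicitly by applying Lemma~1 to the corner point $(\theta_t,\gamma_{t,2}/N)$, which yields the line $y=\frac{t\alpha_t}{t-1+\beta_t}x$ through the origin, with the left endpoint obtained as the intersection with the flat baseline $\frac{\alpha_{t+1}}{\beta_{t+1}}$ — precisely your computation. Your added check that the line also dominates the $\gamma_{t,1}$ piece on $(\frac{1}{t},\theta_t)$ is a detail the paper omits but which does hold (it reduces to $\binom{K}{t}>t+1$ for $2\le t\le K-2$), so your write-up is, if anything, more complete than the paper's.
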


Lemma~\ref{lemma_2} implies that the results in Proposition~\ref{Theorem K_general} can be improved in the interval of $\frac{M}{N}\in\left(\theta_{t}, \theta_{t-1}\right]$ with $t=3,4,\cdots,K-1$ by disabling a protion of relay antennas.
Take $K=4$ as an example, as illustrated in Fig.~\ref{Im_L_1_K_4}. We see that, in the range of $\frac{M}{N}\in(\theta_{3},\theta_{2})$, i.e., $\frac{M}{N}\in(\frac{3}{8},\frac{7}{12})$, the original achievable DoF is improved for $\frac{M}{N}\in(\frac{7}{16},\frac{7}{12})$, and the normalized improved achievable DoF is on the line segment of $y=\frac{\gamma_{2,2}/N}{\theta_2}x=\frac{12}{14}x=\frac{6}{7}x$.

Based on Lemma~\ref{lemma_2}, we have the following theorem.

\begin{theorem}\label{Imporve K_general}
For the MIMO mRC $(M,N,K)$, the per-user DoF capacity of $d_{\rm user} = M$ is achieved when $\frac{M}{N} \in \left(0,\frac{K-1}{K(K-2)} \right]$ and the per-user DoF capacity of
$d_{\rm user} = \frac{2N}{K}$ is achieved when $\frac{M}{N} \in \left[\frac{1}{K(K-1)}+\frac{1}{2},\infty \right)$.
For $\frac{M}{N} \in \left(\frac{K-1}{K(K-2)}, \frac{1}{K(K-1)}+\frac{1}{2} \right)$, an achievable per-user DoF is given by
\begin{equation}\nonumber
d_{\mathrm{user}} =
\left\{
\begin{aligned}
& \frac{N \alpha_{t+1}}{\beta_{t+1}}  & \frac{M}{N} \in \left(\frac{t}{(t+1) \beta_{t+1}}+ \frac{1}{t+1}, \tau_t \right]\\
&\frac{M t \alpha_{t}}{t-1 + \beta_{t}}, & \frac{M}{N} \in \left(\tau_t, \frac{t-1}{t \beta_{t}}+ \frac{1}{t}\right],
\end{aligned}
\right.
\end{equation}
where $\alpha_{t}$, $\beta_{t}$ are defined in \eqref{Equ_alphabeta}, $\tau_t=\frac{ \alpha_{t+1}(t-1 + \beta_{t})}{t \alpha_{t} \beta_{t+1}}$,
and  $t\in [2, K-2]$ is an integer.
\end{theorem}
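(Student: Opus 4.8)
The plan is to derive Theorem~\ref{Imporve K_general} by stitching together Proposition~\ref{Theorem K_general} and Lemma~\ref{lemma_2} over a partition of the antenna ratio into the corner intervals $(\theta_{t+1},\theta_t]$, $t=2,\dots,K-2$. First I would dispose of the two extreme ranges: the claims $d_{\rm user}=M$ on $(0,\frac{K-1}{K(K-2)}]$ and $d_{\rm user}=\frac{2N}{K}$ on $[\frac{1}{K(K-1)}+\frac12,\infty)$ are verbatim the corresponding parts of Proposition~\ref{Theorem K_general}, so nothing new is needed there. Then I would record the elementary identities for the corner points in \eqref{Equ_Imporve_1}, namely $\theta_t=\frac{t-1}{t\beta_t}+\frac1t=\frac{t-1+\beta_t}{t\beta_t}$: substituting $\alpha_t=\binom{K-1}{t-1}(t-1)$ and $\beta_t=\binom Kt(t-1)^2$ gives $\theta_{K-1}=\frac{K-1}{K(K-2)}$ and $\theta_2=\frac1{K(K-1)}+\frac12$, and $\theta_t\in(\frac1t,\frac1{t-1})$ (which is just $\binom Kt>1$). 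Hence $\theta_{t+1}<\frac1t<\theta_t$, and the half-open intervals $(\theta_{t+1},\theta_t]$ with $t\in[2,K-2]$ exactly tile the remaining range $(\frac{K-1}{K(K-2)},\frac1{K(K-1)}+\frac12]$.

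Next, for a fixed $t\in[2,K-2]$ I would split $(\theta_{t+1},\theta_t]$ at $\tau_t=\frac{\alpha_{t+1}(t-1+\beta_t)}{t\alpha_t\beta_{t+1}}$. On the upper piece $(\tau_t,\theta_t]$ the bound $d_{\rm user}=\frac{Mt\alpha_t}{t-1+\beta_t}$ is precisely the conclusion of Lemma~\ref{lemma_2}, which is obtained by feeding the achievable corner point $(\theta_t,\frac{\gamma_{t,2}}{N})=(\theta_t,\frac{\alpha_t}{\beta_t})$ of Proposition~\ref{Theorem K_general} into the relay-deactivation argument of Lemma~\ref{lemma_1}; that argument produces the origin line of slope $\frac{\gamma_{t,2}/N}{\theta_t}=\frac{t\alpha_t}{t-1+\beta_t}$, and $\tau_t$ is exactly the abscissa where this line meets the constant level $\frac{N\alpha_{t+1}}{\beta_{t+1}}$. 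On the lower piece $(\theta_{t+1},\tau_t]$ I would argue that Proposition~\ref{Theorem K_general} already yields $d_{\rm user}=\frac{N\alpha_{t+1}}{\beta_{t+1}}$: once $\tau_t\le\frac1t$ is known this piece lies inside $(\frac1{t+1},\frac1t]$, where \eqref{Equ_K4-15} reads $\min(\gamma_{t+1,1},\gamma_{t+1,2})$, and since $\gamma_{t+1,1}$ is affine in $\frac MN$, agrees with the constant $\gamma_{t+1,2}=\frac{N\alpha_{t+1}}{\beta_{t+1}}$ at $\frac MN=\theta_{t+1}$, and increases past that point, the minimum stays pinned at $\gamma_{t+1,2}$ on all of $[\theta_{t+1},\frac1t]$. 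Assembling the two pieces over all $t$, together with the two extreme ranges, gives the displayed piecewise formula.

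The hard part will be the arithmetic bookkeeping behind the claim that the intervals genuinely line up, i.e. $\theta_{t+1}<\tau_t\le\frac1t<\theta_t$ for every $t\in[2,K-2]$. The inequality $\tau_t\le\frac1t$ is equivalent to $\alpha_t\beta_{t+1}\ge\alpha_{t+1}(t-1+\beta_t)$, and $\theta_{t+1}<\tau_t$ is a cousin of it; after inserting the definitions of $\alpha_t,\beta_t$ and cancelling the common binomial factors these reduce to polynomial inequalities in $K$ and $t$ that I would settle by a direct estimate or a short induction on $t$. Conveniently, the same inequality $\alpha_t\beta_{t+1}\ge\alpha_{t+1}(t-1+\beta_t)$ also certifies that the deactivation line dominates the Proposition~\ref{Theorem K_general} curve on $(\tau_t,\theta_t]$ — the line and the affine branch $\gamma_{t,1}$ coincide at $\theta_t$ while the line lies higher at $\frac1t$ — so the word ``improved'' in the statement is justified. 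A minor loose end is the usual integrality caveat (quantities such as $\tau_tN$ need not be integers), which is handled by symbol extension exactly as in Sections~IV and~V.
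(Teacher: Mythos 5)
Your proposal is correct and follows essentially the same route as the paper, which derives Theorem~\ref{Imporve K_general} directly by combining Proposition~\ref{Theorem K_general} with the corner-point/antenna-deactivation argument of Lemmas~\ref{lemma_1} and~\ref{lemma_2} on each interval $(\theta_{t+1},\theta_t]$. The paper leaves the interval bookkeeping implicit, so your explicit verification that $\theta_{t+1}<\tau_t\le\frac{1}{t}<\theta_t$ and that the minimum in \eqref{Equ_K4-15} stays pinned at $\gamma_{t+1,2}$ on the lower piece is a faithful filling-in of the omitted details rather than a different argument.
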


\begin{figure}[tp]
\begin{centering}
\includegraphics[scale=0.5]{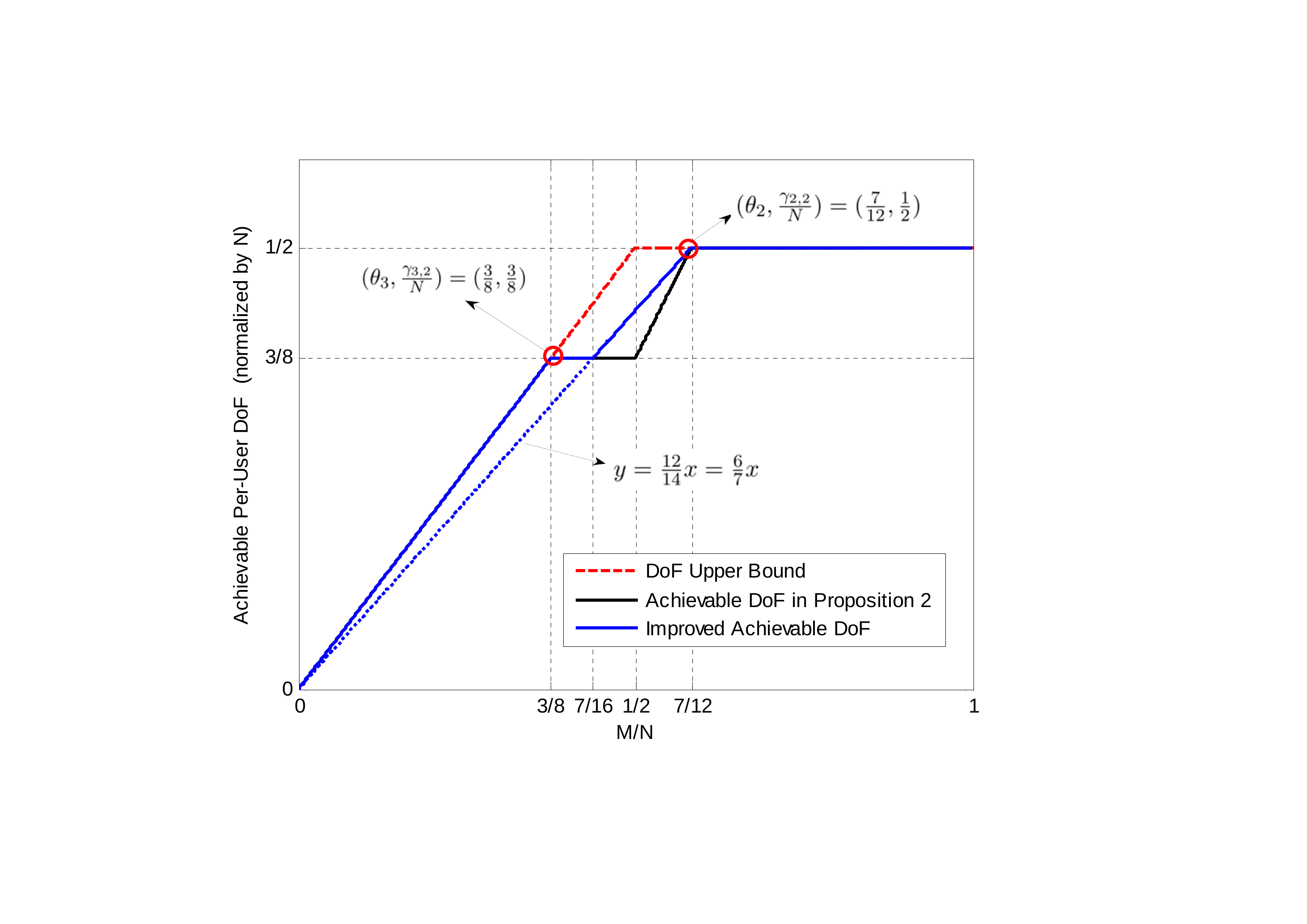}
\vspace{-0.1cm}
\caption{The improved achievable per-user DoF for MIMO mRC with $K=4$ against the antenna ratio $\frac{M}{N}$.}  \label{Im_L_1_K_4}
\end{centering}
\vspace{-0.3cm}
\end{figure}

\begin{corollary}\label{improved K_4}
For the MIMO mRC $(M,N,K)$ with $K=4$, the per-user DoF capacity of $d_{\rm user}=M$ is achieved when $0 < \frac{M}{N} \leq \frac{3}{8}$,
and the per-user DoF capacity of $d_{\rm user} =\frac{N}{2}$ is achieved when $\frac{M}{N} \geq \frac{7}{12}$. For $\frac{M}{N} \in (\frac{3}{8}, \frac{7}{12})$,
an achievable per-user DoF is given by
\begin{equation}\label{DoF_K4}
d_{\mathrm{user}} =
\left\{
\begin{aligned}
& \frac{3N}{8} &\frac{3}{8} < \frac{M}{N} \leq \frac{7}{16}\\
&\frac{6M}{7}, &\frac{7}{16} < \frac{M}{N} \leq \frac{7}{12}.\\
\end{aligned}
\right.
\end{equation}
\end{corollary}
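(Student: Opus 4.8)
The plan is to derive Corollary~\ref{improved K_4} as the $K=4$ specialization of Theorem~\ref{Imporve K_general} (together with Proposition~\ref{Theorem K_general}), so that the work reduces to evaluating the parameters $\alpha_t,\beta_t,\tau_t$ at the single admissible value of $t$ and then checking that the resulting branches fit together and meet the DoF capacity at the boundary ratios.

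First I would record the boundary ratios. For $K=4$ one has $\frac{K-1}{K(K-2)}=\frac38$ and $\frac{1}{K(K-1)}+\frac12=\frac{7}{12}$, so the first and last clauses of Theorem~\ref{Imporve K_general} immediately give $d_{\rm user}=M$ for $\frac{M}{N}\in(0,\frac38]$ and $d_{\rm user}=\frac{2N}{K}=\frac{N}{2}$ for $\frac{M}{N}\ge\frac{7}{12}$, which are exactly the two extreme regimes claimed in the corollary.

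Next I would treat the middle regime $\frac{M}{N}\in(\frac38,\frac{7}{12})$. Since $t$ ranges over $[2,K-2]=\{2\}$ when $K=4$, only $t=2$ contributes. From \eqref{Equ_alphabeta} I compute $\alpha_2=\binom{3}{1}\cdot1=3$, $\beta_2=\binom{4}{2}\cdot1=6$, $\alpha_3=\binom{3}{2}\cdot2=6$, and $\beta_3=\binom{4}{3}\cdot2^2=16$. These give $\tau_2=\frac{\alpha_3(t-1+\beta_t)}{t\alpha_t\beta_{t+1}}=\frac{6\cdot7}{2\cdot3\cdot16}=\frac{7}{16}$, while $\frac{t}{(t+1)\beta_{t+1}}+\frac{1}{t+1}=\frac{2}{48}+\frac13=\frac38$ and $\frac{t-1}{t\beta_t}+\frac1t=\frac{1}{12}+\frac12=\frac{7}{12}$. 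Substituting into Theorem~\ref{Imporve K_general} then yields $d_{\rm user}=\frac{N\alpha_3}{\beta_3}=\frac{3N}{8}$ on $(\frac38,\frac{7}{16}]$ and $d_{\rm user}=\frac{Mt\alpha_t}{t-1+\beta_t}=\frac{6M}{7}$ on $(\frac{7}{16},\frac{7}{12}]$, which is precisely the piecewise formula \eqref{DoF_K4}.

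Finally I would verify consistency at the break points and note the improvement is genuine: at $\frac{M}{N}=\frac38$ we have $\frac{3N}{8}=M$, matching the left branch; at $\frac{M}{N}=\frac{7}{16}$ we have $\frac{6M}{7}=\frac{3N}{8}$, so the two middle pieces agree; and at $\frac{M}{N}=\frac{7}{12}$ we have $\frac{6M}{7}=\frac{N}{2}$, matching the capacity branch. As a sanity check on Lemma~\ref{lemma_2}, one verifies $\frac{6M}{7}\ge\frac{3N}{8}$ for $\frac{M}{N}\ge\frac{7}{16}$ and $\frac{6M}{7}\ge\frac{3M}{2}-\frac{3N}{8}$ for $\frac{M}{N}\le\frac{7}{12}$ (the latter since $\frac{6M}{7}-(\frac{3M}{2}-\frac{3N}{8})=\frac{3N}{8}-\frac{9M}{14}$), so $\frac{6M}{7}$ indeed dominates the unimproved DoF of Proposition~\ref{Theorem K4} throughout $(\frac{7}{16},\frac{7}{12})$. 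I expect no real obstacle here — all the substantive content is upstream in Theorem~\ref{Imporve K_general} and Lemma~\ref{lemma_2} (the relay antenna-deactivation / line-segment argument resting on Property~\ref{property_1} and Lemma~\ref{lemma_1}); the corollary is bookkeeping, and the only point requiring care is aligning the half-open intervals of the theorem with those stated in the corollary and confirming the endpoint values coincide with the DoF capacity.
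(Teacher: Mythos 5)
Your proposal is correct and follows exactly the paper's route: the paper obtains Corollary~\ref{improved K_4} by specializing Theorem~\ref{Imporve K_general} to $K=4$ (so $t=2$ is the only admissible index), and your computations $\alpha_2=3$, $\beta_2=6$, $\alpha_3=6$, $\beta_3=16$, $\tau_2=\frac{7}{16}$, with interval endpoints $\frac{3}{8}$ and $\frac{7}{12}$, reproduce \eqref{DoF_K4} exactly. Your endpoint consistency checks and the verification that $\frac{6M}{7}$ dominates the unimproved DoF of Proposition~\ref{Theorem K4} on $\left(\frac{7}{16},\frac{7}{12}\right)$ are correct additions that the paper leaves implicit.
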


Corollary~\ref{improved K_4} is obtained from Theorem~\ref{Imporve K_general} by letting $K=4$, with the DoF curve illustrated in Fig.~\ref{Im_L_1_K_4}.
It is interesting to see that the obtained achievable DoF curve is piecewise linear, depending on the number of user antennas $M$ and the number of relay antennas $N$ alternately. This result is similar to the DoF capacity of the interference channel obtained in \cite{Wang_IT2011}. The piecewise linearity implies antenna redundancy. Specifically, for $\frac{M}{N} \in (0, \frac{3}{8}] \cup (\frac{7}{16}, \frac{7}{12}) $, the derived achievable DoF is bounded by the number of user antennas, implying antenna redundancy at the relay; for $\frac{M}{N} \in (\frac{3}{8},  \frac{7}{16}] \cup (\frac{7}{12},1] $, the derived achievable DoF is bounded by the number of relay antennas, which implies that the antenna redundancy occurs at the user side.

It is worth noting that the author in \cite{Wang_IT2014} (which is parallel to and independent of the work in this paper) derived the DoF for the MIMO Y channel with $K=4$. The result in \cite{Wang_IT2014} slightly outperforms the achievable DoF in Corollary~\ref{improved K_4} in the interval of $\frac{M}{N}\in \left(\frac{3}{8},\frac{1}{2}\right)$. This implies that the proposed scheme in this paper is in general suboptimal for $K\geq 4$. As the signal alignment technique in \cite{Wang_IT2014} is limited to the case of $K=4$, the DoF capacity of the MIMO Y channel with an arbitrary $K$ still remains a challenging open problem worthy of future research endeavor.

When the user number $K\rightarrow \infty$, we obtain an asymptotic DoF given in the following corollary.
\begin{corollary}\label{improved K_infinite}
For the MIMO mRC $(M,N,K)$ with $K \rightarrow\infty$, the total DoF of $d_{\rm sum}=2N$ is achieved when $\frac{M}{N} > \frac{1}{2}$ and $d_{\rm sum}=N$ is achieved as $\frac{M}{N}\rightarrow 0$.
For $\frac{M}{N} \in (0, \frac{1}{2}]$, the achievable total DoF is piecewise linear with respect to either $M$ or $N$.
Specifically, for $t=2,3,\cdots,\infty$,
we have $d_{\rm sum}=\frac{(t+1)N}{t}$ for $\frac{M}{N} \in \left(\frac{1}{t+1}, \frac{(t+1)(t-1)}{t^3} \right]$, and $d_{\rm sum}=\frac{M t^2}{t-1}$ for $\frac{M}{N} \in \left(\frac{(t+1)(t-1)}{t^3}, \frac{1}{t} \right]$.
\end{corollary}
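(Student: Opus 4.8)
The plan is to derive Corollary~\ref{improved K_infinite} as the term-by-term $K\to\infty$ limit of the closed-form achievable DoF in Theorem~\ref{Imporve K_general}. Recall from \eqref{Equ_alphabeta} that $\alpha_t=\binom{K-1}{t-1}(t-1)$ and $\beta_t=\binom{K}{t}(t-1)^2$. Using only the elementary identity $\binom{K}{t}=\frac{K}{t}\binom{K-1}{t-1}$ (and the same with $t$ replaced by $t+1$), one obtains the exact ratios $\frac{\alpha_t}{\beta_t}=\frac{t}{(t-1)K}$ and $\frac{\alpha_{t+1}}{\beta_{t+1}}=\frac{t+1}{tK}$, which isolate all the $K$-dependence. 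Since $d_{\rm sum}=K\,d_{\rm user}$, I would multiply each branch of the achievable DoF formula in Theorem~\ref{Imporve K_general} by $K$: the first branch gives $K\cdot\frac{N\alpha_{t+1}}{\beta_{t+1}}=\frac{(t+1)N}{t}$ exactly, and the second gives $K\cdot\frac{Mt\alpha_t}{t-1+\beta_t}=Mt\cdot\frac{K\alpha_t}{\beta_t}\cdot\frac{1}{1+(t-1)/\beta_t}\to Mt\cdot\frac{t}{t-1}=\frac{Mt^2}{t-1}$, since $\beta_t\to\infty$. These are precisely the two DoF values in the statement.

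Next I would locate the breakpoints. The lower endpoint $\frac{t}{(t+1)\beta_{t+1}}+\frac{1}{t+1}$ of the first branch tends to $\frac{1}{t+1}$ and the upper endpoint $\frac{t-1}{t\beta_t}+\frac{1}{t}$ of the second branch tends to $\frac{1}{t}$, both because $\beta_{t+1},\beta_t\to\infty$. The common interior breakpoint factors as $\tau_t=\frac{\alpha_{t+1}}{\beta_{t+1}}\cdot\frac{t-1+\beta_t}{t\alpha_t}=\frac{t+1}{tK}\big(\frac{t-1}{t\alpha_t}+\frac{(t-1)K}{t^2}\big)$, using $\frac{\beta_t}{\alpha_t}=\frac{(t-1)K}{t}$; since $\alpha_t\to\infty$ the first bracketed term vanishes and $\tau_t\to\frac{(t+1)(t-1)}{t^3}$. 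Hence for every fixed $t\ge2$ and every $K$ large enough that $t\le K-2$, the two subintervals of Theorem~\ref{Imporve K_general} converge to $\big(\frac{1}{t+1},\frac{(t+1)(t-1)}{t^3}\big]$ and $\big(\frac{(t+1)(t-1)}{t^3},\frac{1}{t}\big]$, which together tile $(0,\tfrac12]$ as $t$ runs over $2,3,\ldots$, exactly as claimed.

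The two extreme regimes come from the capacity part of Theorem~\ref{Imporve K_general}. For a fixed ratio $\frac{M}{N}>\tfrac12$, once $K$ is large enough that $\frac{1}{K(K-1)}+\tfrac12<\frac{M}{N}$ we are in the capacity regime $d_{\rm user}=\frac{2N}{K}$, so $d_{\rm sum}=2N$; this agrees with the $t=2$ second branch evaluated at $\frac{M}{N}=\tfrac12$, where $\frac{Mt^2}{t-1}=4M=2N$, so the limiting curve is continuous across $\frac{M}{N}=\tfrac12$. As $\frac{M}{N}\to0$ the relevant index $t$ grows without bound, and both $\frac{(t+1)N}{t}$ and $\frac{Mt^2}{t-1}$ restricted to $\big(\frac{1}{t+1},\frac{1}{t}\big]$ tend to $N$ (at the breakpoints they equal $\frac{t+1}{t}N$ and $\frac{t}{t-1}N$), which is the nested-limit statement $d_{\rm sum}/N\to1$. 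A final check that the value at the right end of the second branch of interval $t$ equals the value at the left end of the first branch of interval $t-1$ (both $\frac{t}{t-1}N$) confirms that the limiting curve is continuous and piecewise linear, reproducing Fig.~\ref{K_infty}.

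I do not expect a genuine obstacle here: the argument is purely a term-by-term limit in $K$ of a formula already established in Theorem~\ref{Imporve K_general}, with all $K$-dependence controlled by the two explicit binomial ratios above. The only point that requires care is the bookkeeping — verifying that for a fixed ratio $\frac{M}{N}$ the admissible index range $t\in[2,K-2]$ eventually selects the correct $t$ as $K\to\infty$, so that the limiting curve is well defined on all of $(0,\tfrac12]$ — and noting that the $\frac{M}{N}\to0$ assertion is an outer limit in which $t\to\infty$, rather than a claim at a single antenna ratio.
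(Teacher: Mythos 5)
Your proposal is correct and is exactly the derivation the paper intends (the paper states Corollary~\ref{improved K_infinite} without proof, as a direct $K\to\infty$ specialization of Theorem~\ref{Imporve K_general}): the binomial-ratio identities $\frac{\alpha_t}{\beta_t}=\frac{t}{(t-1)K}$ and $\frac{\alpha_{t+1}}{\beta_{t+1}}=\frac{t+1}{tK}$, the limits $\tau_t\to\frac{(t+1)(t-1)}{t^3}$ and $\theta_t\to\frac{1}{t}$, and the conversion $d_{\rm sum}=K\,d_{\rm user}$ all check out. Your added care about the order of limits and the continuity check at the corner points is a useful tightening of what the paper leaves implicit.
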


The DoF curve for Corollary~\ref{improved K_infinite} is illustrated in Fig.~\ref{Im_K_infty}. We see that the normalized achievable DoF for $\frac{M}{N}\in (0,\frac{1}{2}]$ is enveloped by the curve of $y=\frac{1}{1-M/N}$.
Further, the range of $ (0,\frac{3}{8}]$ is partitioned into an infinite number of intervals, namely, $ \left(\frac{(t+2)t}{(t+1)^3}, \frac{(t+1)(t-1)}{t^3} \right] $ for $t=2,3,\cdots,\infty$. A new pattern arises for efficient signal alignment when $\frac{M}{N}$ moves to a new interval.
Similarly, the achievable DoF in the ranges of $\frac{M}{N}\in \left(\frac{(t+2)t}{(t+1)^3},\frac{1}{t+1} \right]$ and $\frac{M}{N}\in \left(\frac{1}{t+1}, \frac{(t+1)(t-1)}{t^3} \right] $ implies antenna redundancy at the user side and at the relay side, respectively.

\begin{figure}[tp]
\begin{centering}
\includegraphics[scale=0.5]{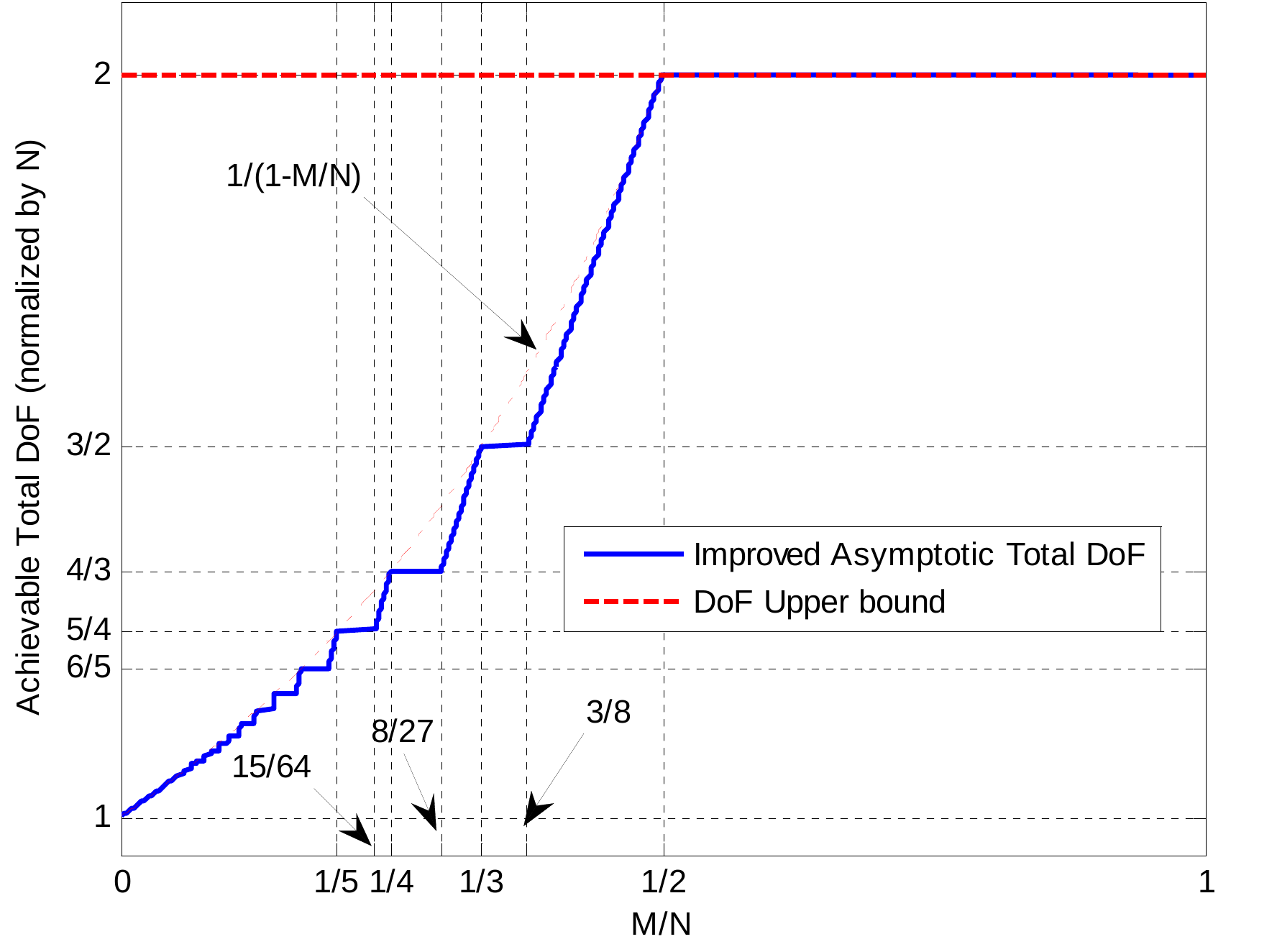}
\vspace{-0.1cm}
\caption{The improved total achievable DoF for the  MIMO mRC with $K\rightarrow \infty$ against the antenna ratio $\frac{M}{N}$.}  \label{Im_K_infty}
\end{centering}
\vspace{-0.3cm}
\end{figure}

\emph{Remark} 2: Before leaving this section, we provide some further comments on the antenna deactivation technique used in Lemma~\ref{lemma_1}.
In general, disabling a portion of relay antennas will reduce the DoF capacity of the considered channel. However, in some cases,
the obtained achievable DoF by relay antenna disablement coincides with the DoF capacity of the channel. For example, for $K=4$ as shown in Fig.~\ref{Im_L_1_K_4}, our result for $\frac{M}{N}\in \left[\frac{1}{2},\frac{7}{12} \right)$ is in fact the DoF capacity of the considered relay channel; see \cite{Wang_IT2014}. An intuitive explanation is that for $\frac{M}{N}\in \left[\frac{1}{2},\frac{7}{12} \right)$, the number of antennas at each user end is the performance bottleneck of the channel, which implies antenna redundance at the relay. That means, disabling a certain portion of relay antennas does not incur any DoF penalty in this case.

\section{Conclusion}

In this paper, we studied an achievable DoF of the MIMO mRC for an arbitrary number of users with any antenna setups. A novel and systematic way of beamforming design was proposed to realize different kind of signal space alignments, which were then used to implement PNC. It was shown that the proposed signal alignment scheme achieves the DoF capacity of the MIMO mRC with $K=3$. For the case of $K>3$, we showed that our proposed signal alignment scheme achieves the DoF capacity of the MIMO mRC for $\frac{M}{N} \in \left(0,\frac{K-1}{K(K-2)} \right]$ and $\frac{M}{N} \in \left[\frac{1}{K(K-1)}+\frac{1}{2},\infty \right)$. This result has a broader range of $\frac{M}{N}$ compared to the existing achievability of the DoF capacity in \cite{Tian_IT2013}.
The asymptotic achievable DoF  when the number of users tending to infinity was also analyzed. The derived achievable DoF in this work can in general serve as a lower bound of the DoF capacity of the considered MIMO mRC. Future research interests include the optimal precoding design of the MIMO mRC in finite SNR and the extension of our DoF analysis to more complicated scenarios, such as clustered MIMO mRCs.

\appendices
\section{Some Useful Lemmas}\label{Appendix 1}
Let $\mathbf{A}_i \in \mathbb{C}^{N\times M}$, for $i = 1, 2, \cdots, K$, be independent random matrices with $M \leq N$.
Assume $KM > N$. Denote $\mathbf{A} = [\mathbf{A}_1,\cdots, \mathbf{A}_K]$, and let the columns of $\mathbf{U} \in C^{KM\times (KM-N)}$ be a basis of $\mathrm{null}(\mathbf{A})$. Further, we represent $\mathbf{U}$ as $\mathbf{U} = [\mathbf{U}_1^T, \cdots, \mathbf{U}_K^T]^T$, where $\mathbf{U}_i\in C^{M\times (KM-N)}$. Denote $\mathbf{\tilde{A}} = [\mathbf{A}_1\mathbf{U}_1,\cdots, \mathbf{A}_K\mathbf{U}_K] \in C^{N\times K(KM-N)}$. We have the following result.
\begin{lemma}\label{Lemma 4} The rank of matrix $\mathbf{\tilde{A}}$ is $\min\left((K-1)(KM-N),N\right)$ with probability one.
\end{lemma}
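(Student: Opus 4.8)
The plan is to reduce the rank statement about $\tilde{\mathbf A}$ to two facts: an upper bound that holds deterministically from the nullspace construction, and a matching lower bound that holds generically (with probability one) by a dimension-counting / perturbation argument. First I would establish the upper bound $\mathrm{rank}(\tilde{\mathbf A}) \le (K-1)(KM-N)$. Write $r \triangleq KM - N = \dim \mathrm{null}(\mathbf A)$. By definition $\mathbf A\mathbf U = \sum_{i=1}^{K} \mathbf A_i \mathbf U_i = \mathbf 0$, so the $K$ blocks $\mathbf A_i\mathbf U_i$ satisfy the single matrix relation $\sum_i \mathbf A_i\mathbf U_i = \mathbf 0$. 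Viewing $\tilde{\mathbf A}$ as the horizontal concatenation of these $K$ blocks, each of which has at most $r$ columns, and noting that the $K$ blocks span a space that is annihilated by one nontrivial linear combination (the all-ones combination), one gets that the column space of $\tilde{\mathbf A}$ is contained in the span of the first $K-1$ blocks after eliminating $\mathbf A_K\mathbf U_K = -\sum_{i<K}\mathbf A_i\mathbf U_i$. Hence $\mathrm{rank}(\tilde{\mathbf A}) \le (K-1) r$. Combined with the trivial bound $\mathrm{rank}(\tilde{\mathbf A})\le N$, this gives $\mathrm{rank}(\tilde{\mathbf A}) \le \min((K-1)r, N)$.

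Next I would prove the matching lower bound generically. The key observation is that the entries of $\tilde{\mathbf A}$ are (rational, hence real-analytic) functions of the entries of the $\mathbf A_i$ on the open dense set where $\mathbf A$ has full row rank $N$ (so that $\mathrm{null}(\mathbf A)$ has the stated dimension $r$ and $\mathbf U$ can be chosen to depend analytically, at least locally, on $\mathbf A$). The rank of a matrix with real-analytic entries is lower semicontinuous and equals its generic value off a proper analytic subset; therefore it suffices to exhibit one choice of $\mathbf A_1,\dots,\mathbf A_K$ for which $\mathrm{rank}(\tilde{\mathbf A}) = \min((K-1)r, N)$. I would construct such an example by hand: pick the $\mathbf A_i$ to be "generic-looking" structured matrices (e.g.\ built from disjoint blocks of identity and zero) so that $\mathrm{null}(\mathbf A)$ decomposes into $r$ explicitly describable vectors, each supported on a controlled subset of the blocks; then compute $\mathbf A_i\mathbf U_i$ directly and verify, by a combinatorial count of which coordinate directions are hit, that the only linear dependencies among the columns of $\tilde{\mathbf A}$ are the $r$ forced ones coming from $\sum_i \mathbf A_i\mathbf U_i = \mathbf 0$ (giving corank exactly $r$ relative to the trivial bound $Kr$, i.e.\ rank $(K-1)r$), capped at $N$ when $(K-1)r > N$. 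An alternative to an explicit example is an inductive argument on $K$, peeling off one matrix at a time and using the independence of $\mathbf A_K$ from $(\mathbf A_1,\dots,\mathbf A_{K-1})$ to argue that adding the block $\mathbf A_K\mathbf U_K$ increases the rank by the maximal possible amount until the ceiling $N$ is reached.

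The main obstacle, and the step deserving the most care, is the lower bound — specifically making rigorous the claim that the forced relation $\sum_i \mathbf A_i\mathbf U_i=\mathbf 0$ is the \emph{only} source of rank deficiency for generic channels. One has to rule out "accidental" extra dependencies among the $Kr$ columns of $\tilde{\mathbf A}$; the semicontinuity/analyticity framework reduces this to a single nonvanishing-polynomial check, but producing the witness point (or running the induction cleanly, keeping track of how $\mathbf U$ changes when one $\mathbf A_i$ is perturbed) is the technically delicate part. A secondary nuisance is the bookkeeping when $(K-1)r$ and $N$ straddle each other and when $r$ fails to divide things evenly, but as elsewhere in the paper this is handled by the symbol-extension convention and does not affect the rank count. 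Once the witness is in place, the conclusion $\mathrm{rank}(\tilde{\mathbf A}) = \min((K-1)(KM-N), N)$ with probability one follows immediately.
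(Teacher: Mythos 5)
Your upper bound is fine: the identity $\sum_{i=1}^K \mathbf{A}_i\mathbf{U}_i=\mathbf{0}$ supplies $KM-N$ independent column dependencies in $\tilde{\mathbf{A}}$, so $\mathrm{rank}(\tilde{\mathbf{A}})\le\min\left((K-1)(KM-N),N\right)$. The problem is the lower bound, which is the entire content of the lemma and which you do not actually prove: you reduce it (correctly, via lower semicontinuity of rank for an analytically varying family) to exhibiting a single witness configuration with no ``accidental'' extra dependencies, and then stop at ``I would construct such an example by hand.'' That witness construction, or the induction on $K$ you mention as an alternative, is precisely where all the difficulty of the lemma sits; as written the argument is a plan rather than a proof.

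The paper closes this step by a direct characterization of $\mathrm{null}(\tilde{\mathbf{A}})$ that needs no witness. Take any $\mathbf{v}=[\mathbf{v}_1^T,\ldots,\mathbf{v}_K^T]^T$ with $\tilde{\mathbf{A}}\mathbf{v}=\mathbf{0}$. Then $\mathbf{A}\,\mathrm{diag}(\mathbf{U}_1,\ldots,\mathbf{U}_K)\mathbf{v}=\mathbf{0}$, so $\mathrm{diag}(\mathbf{U}_1,\ldots,\mathbf{U}_K)\mathbf{v}$ lies in $\mathrm{null}(\mathbf{A})=\mathrm{span}(\mathbf{U})$, i.e.\ $\mathbf{U}_i\mathbf{v}_i=\mathbf{U}_i\mathbf{x}$ for all $i$ and a common $\mathbf{x}$. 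In the regime $(K-1)(KM-N)\le N$ one has $KM-N\le M$, so each $\mathbf{U}_i\in\mathbb{C}^{M\times(KM-N)}$ has full column rank with probability one and hence a left inverse, forcing $\mathbf{v}_i=\mathbf{x}$ for every $i$. Thus $\mathrm{null}(\tilde{\mathbf{A}})$ consists exactly of the vectors $[\mathbf{x}^T,\ldots,\mathbf{x}^T]^T$ and has dimension $KM-N$, and rank--nullity gives $\mathrm{rank}(\tilde{\mathbf{A}})=K(KM-N)-(KM-N)=(K-1)(KM-N)$; the remaining regime $(K-1)(KM-N)>N$ follows from the cap at $N$. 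If you want to salvage your route, you must either carry out this nullspace argument or actually exhibit the witness; note also that your genericity reduction tacitly requires the (easy, but worth stating) observation that $\mathrm{rank}(\tilde{\mathbf{A}})$ is independent of the choice of basis $\mathbf{U}$ of $\mathrm{null}(\mathbf{A})$, since replacing $\mathbf{U}$ by $\mathbf{U}\mathbf{T}$ multiplies $\tilde{\mathbf{A}}$ on the right by the invertible matrix $\mathrm{diag}(\mathbf{T},\ldots,\mathbf{T})$.
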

\begin{proof}
We first assume $(K-1)(KM-N) \leq N$, or equivalently, $(K-1)M \leq N$. Let $\mathbf{v} = [\mathbf{v}_{1}^T, \cdots, \mathbf{v}_{K}^T]^T$ be an arbitrary vector in $\mathrm{null}(\mathbf{\tilde{A}})$, where $\mathbf{v}_{i} \in \mathbb{C}^{(KM-N)\times 1}$. Then, we obtain $\mathbf{0} = \mathbf{\tilde{A}}\mathbf{v} = \mathbf{A}\mathrm{diag}\left(\mathbf{U}_{1}, \cdots, \mathbf{U}_{K}\right)\mathbf{v}$. Thus, $\mathrm{diag}\left(\mathbf{U}_{1}, \cdots, \mathbf{U}_{K}\right)\mathbf{v}$ belongs to $\mathrm{null}(\mathbf{A})$. As $\mathbf{U}$ spans $\mathrm{null}(\mathbf{A})$, there exists $\mathbf{x} \in \mathbb{C}^{(KM-N)\times 1}$ such that $\mathrm{diag}\left(\mathbf{U}_{1}, \cdots, \mathbf{U}_{K}\right)\mathbf{v} = \mathbf{Ux}$, or equivalently, $\mathbf{U}_{i}\mathbf{v}_{i} = \mathbf{U}_{i}\mathbf{x}$, for $\forall i$. From the randomness of $\mathbf{A}$, $\mathbf{U}_{i}$ is of rank $\mathrm{min}(M, KM-N)$ with probability one. By assumption of $(K-1)(KM-N) \leq N$, we obtain $KM-N \leq M$. Thus, the left inverse of $\mathbf{U}_{i}$ exists with probability one. Then, $\mathbf{U}_{i}\mathbf{v}_{i} = \mathbf{U}_{i}\mathbf{x}$ implies $\mathbf{v}_{i} = \mathbf{x}$, for $\forall i$, or equivalently, $\mathbf{v} = [\mathbf{x}^T,\cdots, \mathbf{x}^T]^T$. Since $\mathbf{x}$ can be any vector in $\mathbb{C}^{(KM-N)\times 1}$, the nullspace of $\mathbf{\tilde{A}}$ is of rank $KM-N$ with high probability. By using the rank-nullity theorem of linear algebra, we conclude that $\mathbf{\tilde{A}}$ is of rank $K(KM-N)-(KM-N) = (K-1)(KM-N)$ with probability one. What remains is the case of $(K-1)(KM-N) > N$. Note that, when $(K-1)(KM-N) = N$, $\mathrm{span}(\mathbf{\tilde{A}})$ is of dimension $N$ with probability one. Further increasing $M$ cannot increase the dimension of $\mathrm{span}(\mathbf{\tilde{A}})$ (as $\mathbf{\tilde{A}}$ only has $N$ rows), which concludes the proof.
\end{proof}

\begin{lemma}\label{Lemma 3}
The subspace of $\mathrm{span}(\mathbf{A}_1)\cap\mathrm{span}(\mathbf{A}_2)$, i.e., the intersection of $\mathrm{span}(\mathbf{A}_1)$ and $\mathrm{span}(\mathbf{A}_2)$, has a dimension of $(2M-N)^+$ with probability one.
\end{lemma}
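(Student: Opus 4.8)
The plan is to derive the result from the subspace dimension (inclusion--exclusion) formula together with the generic full-rank property of continuously distributed random matrices. First I would write
\[
\dim\big(\mathrm{span}(\mathbf{A}_1)\cap\mathrm{span}(\mathbf{A}_2)\big)
= \dim\mathrm{span}(\mathbf{A}_1) + \dim\mathrm{span}(\mathbf{A}_2)
- \dim\big(\mathrm{span}(\mathbf{A}_1)+\mathrm{span}(\mathbf{A}_2)\big),
\]
and observe that $\mathrm{span}(\mathbf{A}_1)+\mathrm{span}(\mathbf{A}_2)=\mathrm{span}\big([\mathbf{A}_1,\mathbf{A}_2]\big)$, so the problem reduces to computing three ranks.

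Next I would invoke the standing assumption that the entries of $\mathbf{A}_1$ and $\mathbf{A}_2$ are drawn from a continuous distribution and that $\mathbf{A}_1,\mathbf{A}_2$ are independent. This gives $\mathrm{rank}(\mathbf{A}_i)=M$ for $i=1,2$ with probability one, and, since $[\mathbf{A}_1,\mathbf{A}_2]$ is an $N\times 2M$ matrix whose entries again follow a continuous distribution, $\mathrm{rank}\big([\mathbf{A}_1,\mathbf{A}_2]\big)=\min(2M,N)$ with probability one. Substituting these values yields
\[
\dim\big(\mathrm{span}(\mathbf{A}_1)\cap\mathrm{span}(\mathbf{A}_2)\big)
= 2M-\min(2M,N) = (2M-N)^{+}
\]
with probability one, which is exactly the claim.

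There is essentially no deep obstacle here; the only point needing a little care is the almost-sure attainment of the maximal rank $\min(2M,N)$ by $[\mathbf{A}_1,\mathbf{A}_2]$. I would justify this in the standard way: a matrix of size $N\times 2M$ has rank strictly less than $r=\min(2M,N)$ precisely when all of its $r\times r$ minors vanish, and at least one such minor is a polynomial in the entries that is not identically zero (exhibit one deterministic instance, e.g.\ a truncated identity block, on which it is nonzero); hence the rank-deficient locus lies in the zero set of a nonzero polynomial and has Lebesgue measure zero, so it has probability zero under any continuous distribution. Combining this with the dimension formula above completes the argument.
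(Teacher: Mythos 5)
Your proof is correct and complete. Note that the paper itself does not prove this lemma at all: it simply states that the result was established as Lemma~1 of \cite{Lee10} and omits the details. Your argument --- Grassmann's dimension identity $\dim(\mathcal{S}_1\cap\mathcal{S}_2)=\dim\mathcal{S}_1+\dim\mathcal{S}_2-\dim(\mathcal{S}_1+\mathcal{S}_2)$ combined with the observation that $\mathrm{span}(\mathbf{A}_1)+\mathrm{span}(\mathbf{A}_2)=\mathrm{span}([\mathbf{A}_1,\mathbf{A}_2])$ and the almost-sure attainment of maximal rank by matrices with continuously distributed, independent entries --- is the standard route and is essentially the argument behind the cited result. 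Your justification of the generic-rank step via the non-vanishing of at least one $r\times r$ minor (a nonzero polynomial whose zero set has Lebesgue measure zero) is exactly the right level of rigor; the only implicit hypothesis you rely on is $M\le N$, which the appendix assumes, so that $\mathrm{rank}(\mathbf{A}_i)=M$ rather than $\min(M,N)$. In short, you have supplied a self-contained proof where the paper only supplies a pointer.
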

\begin{proof}
This result has been proven in \emph{Lemma 1} of \cite{Lee10}. We omit the details here for brevity.
\end{proof}

For $\frac{M}{N}\in \big(\frac{1}{t},\frac{1}{t-1}\big]$, denote by
$\{j_1,j_2,\cdots,j_t\}$ $t$ distinct indexes chosen from $1,2,\cdots,K$ with $j_1<j_2<\cdots<j_t$.
Let ${\bf A}_{[j]}=[{\bf A}_{j_1},{\bf A}_{j_2},\cdots,{\bf A}_{j_t}]$.
We say that ${\bf A}_{[j]}$ is different from ${\bf A}_{[j^\prime]}$ for
any $j\neq j^\prime$ if there is at least one ${\bf A}_i$ in ${\bf A}_{[j]}$ not contained in ${\bf A}_{[j^\prime]}$. In total, we have $J=\left(
                                                                               \begin{smallmatrix}
                                                                                 {K} \\
                                                                                 t \\
                                                                               \end{smallmatrix}
                                                                             \right)$ different choices of ${\bf A}_{[j]}$.
Let the columns of ${\bf U}_{[j]}=[\mathbf{U}_{j_1}^T, \mathbf{U}_{j_2}^T,\cdots, \mathbf{U}_{j_t}^T]^T$ be a basis of ${\rm null}({\bf A}_j)$ with $\mathbf{U}_{j_i}\in C^{M\times (tM-N)}$ for $\forall i$, and $\mathcal{S}_j = {\rm span}\left(\{{\bf A}_{j_i}{\bf U}_{j_i}|i=1,2,\cdots,t\}\right)$. Then

\begin{lemma}\label{Lemma 33}
$\mathcal{S}_1 \oplus \mathcal{S}_2 \oplus \cdots \oplus \mathcal{S}_J $ is a subspace of dimension of $\mathrm{min}\left(J(t-1)(tM-N), N \right)$ with probability one.
\end{lemma}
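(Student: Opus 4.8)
The plan is to separate the two factors in the claimed dimension: first pin down $\dim\mathcal S_j$ for each single group, then show that the $J$ subspaces $\mathcal S_1,\dots,\mathcal S_J$ sit in general position inside $\mathbb{C}^N$. For the first step, fix a group $[j]=\{j_1,\dots,j_t\}$ and apply Lemma~\ref{Lemma 4} with the role of $K$ played by $t$ and $(\mathbf A_1,\dots,\mathbf A_K)$ replaced by $(\mathbf A_{j_1},\dots,\mathbf A_{j_t})$; this is legitimate because $\frac MN>\frac1t$ forces $tM>N$. Then $\mathcal S_j$ is exactly the column span of the matrix $[\mathbf A_{j_1}\mathbf U_{j_1},\dots,\mathbf A_{j_t}\mathbf U_{j_t}]$, so $\dim\mathcal S_j=\min\big((t-1)(tM-N),N\big)$ with probability one. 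Since $\frac MN\le\frac1{t-1}$ gives $(t-1)M\le N$, we have $\delta:=(t-1)(tM-N)\le(t-1)M\le N$, hence $\dim\mathcal S_j=\delta$ almost surely. As $\sum_j\mathcal S_j\subseteq\mathbb C^N$, this already gives the upper bound $\dim(\mathcal S_1\oplus\cdots\oplus\mathcal S_J)\le\min(J\delta,N)$ almost surely, and only the matching lower bound remains.

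For the lower bound, I would organize the argument as an induction on the number of groups accumulated: order the groups arbitrarily, set $\mathcal T_m=\mathcal S_1+\cdots+\mathcal S_m$, and prove $\dim\mathcal T_m=\min(m\delta,N)$ a.s., the case $m=1$ being the first paragraph. Granted $\dim\mathcal T_{m-1}=\min((m-1)\delta,N)$ a.s., the inductive step amounts to showing that $\mathcal T_{m-1}$ and $\mathcal S_m$ are in general position, i.e. $\dim(\mathcal T_{m-1}\cap\mathcal S_m)=\big[\dim\mathcal T_{m-1}+\delta-N\big]^+$, from which $\dim\mathcal T_m=\min(\dim\mathcal T_{m-1}+\delta,N)=\min(m\delta,N)$ (and nothing is needed once $\dim\mathcal T_{m-1}=N$).

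The general-position step is the crux and the main obstacle, because $\mathcal T_{m-1}$ and $\mathcal S_m$ are \emph{not} independent: the groups generally share constituent matrices $\mathbf A_i$, so Lemma~\ref{Lemma 3} does not apply directly, and no ordering of the groups can guarantee that group $[m]$ introduces a matrix fresh to $[1],\dots,[m-1]$ once $m>K-t+1$. I would therefore resolve it by a genericity argument instead of conditioning. On the Zariski-open, dense locus $\Omega$ where every $[\mathbf A_{j_1},\dots,\mathbf A_{j_t}]$ has full row rank $N$ and every $\mathcal S_j$ has dimension $\delta$ (dense by Lemma~\ref{Lemma 4}), the assignment $(\mathbf A_1,\dots,\mathbf A_K)\mapsto(\mathcal S_1,\dots,\mathcal S_J)$ is a morphism into a product of Grassmannians, so $(\mathbf A_1,\dots,\mathbf A_K)\mapsto\dim(\sum_j\mathcal S_j)$ is lower semicontinuous and equals its maximal value $r^{\star}$ on a dense open subset of $\Omega$; since the channel entries have a continuous density, the complement has measure zero. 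Hence it suffices to exhibit \emph{one} admissible tuple with $\dim(\sum_j\mathcal S_j)=\min(J\delta,N)$, which forces $r^{\star}=\min(J\delta,N)$ and proves the lemma. A convenient candidate is to take the $KM$ columns of $\mathbf A_1,\dots,\mathbf A_K$ to be distinct points on a rational normal curve in $\mathbb C^N$, so that any $N$ columns are independent and every circuit is an $(N{+}1)$-subset carrying a fully supported dependency; using the elementary identity $\mathcal S_j=\sum_{i}\big(\mathrm{span}(\mathbf A_{j_i})\cap\mathrm{span}(\{\mathbf A_{j_{i'}}:i'\neq i\})\big)$, one then reads off from the circuit structure that each $\mathcal S_j$ has dimension $\delta$ and that the $\mathcal S_j$'s overlap only as forced by the ambient dimension $N$.

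I expect the verification that this explicit configuration really makes $\sum_j\mathcal S_j$ as large as $\min(J\delta,N)$ to be where the actual work lies; it is a clean but not entirely trivial combinatorial–linear-algebra statement about dependencies among points of a rational normal curve. An alternative that avoids producing a distinguished configuration is to mimic the nullspace-tracking proof of Lemma~\ref{Lemma 4} directly on the concatenated matrix $\mathbf B=\big[\,\mathbf A_{j_1}\mathbf U_{j_1}^{[1]},\dots,\mathbf A_{j_t}\mathbf U_{j_t}^{[1]}\ \big|\ \cdots\ \big|\ \mathbf A_{j_1}\mathbf U_{j_1}^{[J]},\dots,\mathbf A_{j_t}\mathbf U_{j_t}^{[J]}\,\big]$: regrouping its columns by the underlying matrix index exhibits a linear map sending $\mathrm{null}(\mathbf B)$ into $\mathrm{null}([\mathbf A_1,\dots,\mathbf A_K])$, and controlling the kernel of that map (which, by the full column rank of each $\mathbf U_{j_i}$ in the regime $tM-N\le M$, has only a structured, "per-group" part) bounds $\dim\mathrm{null}(\mathbf B)$ from above by exactly $Jt(tM-N)-\min(J\delta,N)$, which is the desired conclusion.
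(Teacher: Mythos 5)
Your opening reduction is sound: invoking Lemma~\ref{Lemma 4} group-by-group to get $\dim\mathcal S_j=(t-1)(tM-N)=\delta$ a.s., and observing that only the lower bound $\dim(\sum_j\mathcal S_j)\ge\min(J\delta,N)$ needs work, matches what the paper implicitly does. You also correctly identify the crux — the $\mathcal S_j$'s are built from overlapping sets of channel matrices, so no independence or fresh-matrix argument closes the induction. However, your resolution leaves a genuine gap. The semicontinuity/genericity reduction is valid as far as it goes (the dimension of $\sum_j\mathcal S_j$ is indeed lower semicontinuous on the constant-rank locus and attains its maximum on a dense open set), but it only converts the lemma into the statement "there exists one admissible configuration with $\dim(\sum_j\mathcal S_j)=\min(J\delta,N)$" — and that statement is the entire content of the lemma. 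You propose a rational-normal-curve (Vandermonde) witness and explicitly defer its verification ("where the actual work lies"); until that combinatorial verification is carried out, nothing has been proved. The same applies to your alternative sketch: "controlling the kernel" of the map from $\mathrm{null}(\mathbf B)$ into $\mathrm{null}([\mathbf A_1,\dots,\mathbf A_K])$ is precisely where the cross-group coupling bites, and you do not show that the kernel has only the claimed per-group part.

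For comparison, the paper takes a different and more concrete route. It characterizes each constituent subspace $\mathcal S_{ji}$ implicitly: a vector $\mathbf a_{ji}\in\mathcal S_{ji}$ is exactly one satisfying a linear system $\mathbf B_{ji}\mathbf q_{ji}=\mathbf 0$ built from identity blocks and the relevant $\mathbf A$'s (encoding simultaneous membership in $\mathrm{span}(\mathbf A_{j_3})$ and $\mathrm{span}(\mathbf A_{j_1},\mathbf A_{j_2})$, say). A nontrivial linear dependence $\sum_{j,i}\mathbf a_{ji}=\mathbf 0$ among the subspaces then corresponds to a nonzero null vector of a single large block matrix $\mathbf M$ stacking all the $\mathbf B_{ji}$'s plus a coupling row; the lemma reduces to $\mathbf M$ having full column rank, which is argued from the randomness of the $\mathbf A_i$'s and the dimension count $J(t-1)(tM-N)\le N$. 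This replaces your "find and verify a witness" step with a single probability-one rank assertion about an explicit structured random matrix. If you want to complete your version, you would need to either carry out the Vandermonde circuit analysis in full, or adopt something like the paper's stacked-system formulation to control the kernel directly.
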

\begin{proof}
We first focus on the case of $N\geq J(t-1)(tM-N)$.
Denote $\tilde{{\bf A}}_{[j]}=[{\bf A}_{j_1}{\bf U}_{j_1},{\bf A}_{j_2}{\bf U}_{j_2},\cdots,{\bf A}_{j_{t-1}}{\bf U}_{j_{t-1}}]$ and $\tilde{{\bf A}}=[\tilde{{\bf A}}_{[1]},\tilde{{\bf A}}_{[2]},\cdots,\tilde{{\bf A}}_{[J]} ]$.
By noting ${\bf A}_{[j]} {\bf U}_{[j]}=\sum^t_{i=1}{\bf A}_{j_i}{\bf U}_{j_i}={\bf 0}$,
we have $\mathcal{S}_j={\rm span}(\tilde{{\bf A}}_{[j]})$.
To prove Lemma \ref{Lemma 33}, it suffices to show that $\tilde{{\bf A}} \in \mathbb{C}^{N\times{J(t-1)(tM-N)}}$ is of full column rank. For notation convenience, we focus on the case of $K=4$ and $t=3$. The proof can be readily extended to the case of an arbitrary $K$ and $t$.

For $K=4$, in total, we have $4$ different choices of ${\bf A}_{[j]}=[{\bf A}_{j_1},{\bf A}_{j_2},{\bf A}_{j_3}]$.
Let $\mathcal{S}_{j1} =\mathrm{span}({\bf A}_{j_3}{\bf U}_{j_3})= \mathrm{span}(\mathbf{A}_{j_1},\mathbf{A}_{j_2}) \cap \mathrm{span}(\mathbf{A}_{j_3}) $, $\mathcal{S}_{j2} =\mathrm{span}({\bf A}_{j_2}{\bf U}_{j_2})= \mathrm{span}(\mathbf{A}_{j_1},\mathbf{A}_{j_3}) \cap \mathrm{span}(\mathbf{A}_{j_2}) $.
Then we have $\mathcal{S}_{j}=\mathcal{S}_{j1}\oplus \mathcal{S}_{j2}$, for $j=1,2,3,4$. Also, we have $\mathcal{S}_{j1}\cap \mathcal{S}_{j2}=\{0\}$ by noting the channel randomness and the fact of $2(3M-N)<N$ (implied by $N\geq J(t-1)(3M-N)$).
Consider an $N\times 1$ vector $\mathbf{a}_{j1} \in \mathcal{S}_{j1}$, for $j=1,2,3,4$. By definition of $\mathcal{S}_{j1}$, we have $\mathbf{a}_{j1}\in \mathrm{span}(\mathbf{A}_{j_1},\mathbf{A}_{j_2}) $ and $\mathbf{a}_{j1}\in \mathrm{span}(\mathbf{A}_{j_3}) $.
Thus, there exist unique $\mathbf{q}^1_{j1}$, $\mathbf{q}^2_{j1}$  and $\mathbf{q}^3_{j1}$ such that
\begin{equation}\nonumber
\mathbf{a}_{j1} = \mathbf{A}_{j_1} \mathbf{q}^1_{j1} +  \mathbf{A}_{j_2} \mathbf{q}^2_{j1} = \mathbf{A}_{j_3} \mathbf{q}^3_{j1},
\end{equation}
or equivalently
\begin{equation}\label{Eq1}
{\bf B}_{j1}{\bf q}_{j1}
 = \mathbf{0},
\end{equation}
where ${\bf B}_{j1}=\left[ \begin{array}{cccc}
\mathbf{I}_N & \mathbf{A}_{j_1} & \mathbf{A}_{j_2} & \mathbf{0} \\
\mathbf{I}_N & \mathbf{0} & \mathbf{0} & \mathbf{A}_{j_3} \end{array} \right]$ and ${\bf q}_{j1}=\left[\mathbf{a}^T_{j1},\mathbf{q}^{1T}_{j1},\mathbf{q}^{2T}_{j1},\mathbf{q}^{3T}_{j1} \right]^T$.
Similarly, for $\mathbf{a}_{j2} \in \mathcal{S}_{j2}$, we have
\begin{equation}\label{Eq11}
{\bf B}_{j2}{\bf q}_{j2}
 = \mathbf{0},
\end{equation}
where ${\bf B}_{j2}=\left[ \begin{array}{cccc}
\mathbf{I}_N & \mathbf{A}_{j_1} & \mathbf{A}_{j_3} & \mathbf{0} \\
\mathbf{I}_N & \mathbf{0} & \mathbf{0} & \mathbf{A}_{j_2} \end{array} \right]$ and ${\bf q}_{j2}=\left[\mathbf{a}^T_{j2},\mathbf{q}^{1T}_{j2},\mathbf{q}^{2T}_{j2},\mathbf{q}^{3T}_{j2} \right]^T$.
To prove that $\tilde{{\bf A}}$ is of full column rank, it suffices to show that these is no non-zero $\{\mathbf{a}_{ji}\}$, for $j=1,2,3,4$ and $i=1,2$, such that
\begin{equation}\label{Eq2}
\sum^4_{j=1}\sum^{2}_{i=1} \mathbf{a}_{ji} = \mathbf{0}.
\end{equation}
Equivalently, by combining \eqref{Eq1} and \eqref{Eq11}, we need to show that there is no non-zero ${\bf q}$ satisfying
\begin{equation}\label{Eq3}
\left[
  \begin{array}{cccc}
    {\bf B}_1 & {\bf 0} & {\bf 0} & {\bf 0} \\
    {\bf 0} & {\bf B}_2 & {\bf 0} & {\bf 0} \\
    {\bf 0} & {\bf 0} & {\bf B}_3 & {\bf 0} \\
    {\bf 0} & {\bf 0} & {\bf 0} & {\bf B}_4 \\
    {\bf D} & {\bf D} & {\bf D} & {\bf D}
  \end{array}
\right] {\bf q} = {\bf M} {\bf q} = {\bf 0},
\end{equation}
where ${\bf B}_j={\rm diag}({\bf B}_{j1},{\bf B}_{j2})$ for $\forall j$, and ${\bf D}=[{\bf C},{\bf C}]$ with ${\bf C}=[ \mathbf{I}_N, \mathbf{0}, \mathbf{0}, \mathbf{0}]$,
${\bf q} = \left[{\bf q}^T_{11},{\bf q}^T_{12},{\bf q}^T_{21},{\bf q}^T_{22},{\bf q}^T_{31},{\bf q}^T_{32},{\bf q}^T_{41},{\bf q}^T_{42} \right]^T$.
Noting that $\mathbf{A}_i$, for $i = 1, 2, 3,4$, are independent random matrices and $J(t-1)(tM-N)\leq N$, we obtain that $\mathbf{M}\in \mathbb{C}^{(2J(t-1)+1)N\times{J(t-1)(tM+N)}}$ is of full column rank with probability one, which implies that no non-zero $\bf q$ exists to meet \eqref{Eq3}. 

What remains is the case of $N<J(t-1)(tM-N)$. In this case, $\mathcal{S}_1 \oplus \mathcal{S}_2 \oplus \cdots \oplus \mathcal{S}_{J} $ spans the overall signal space of dim-$N$ for sure, which concludes the proof.
\end{proof}

\section{An Example of Symbol Extension}\label{Appendix 3}
Consider a MIMO mRC with $K=3$, $M=2$, and $N=5$. From the discussions in Subsection~\ref{Proof_OneCluster}, as $\frac{1}{3}<\frac{M}{N} < \frac{1}{2}$, we need to construct units following Patterns $1.1$ and $1.3$ to occupy the overall relay's signal space.
The number of independent units with Pattern $1.3$ can be constructed is given by $\frac{3M-N}{2} = \frac{1}{2}$, which is not an integer.
We use the symbol extension technique to avoid the construction of a fractional number of units as follows. Considering two channel uses of \eqref{System-1},
we rewrite the received signal at the relay node as
\begin{eqnarray}\label{Appdix-3-1}
\bar{{\bf y}}_R = \sum^3_{k=1} \bar{{\bf H}}_k \bar{{\bf x}}_k + \bar{{\bf z}}_R,
\end{eqnarray}
where $\bar{{\bf H}}_k = \left[
                           \begin{array}{cc}
                             {\bf H}_k & {\bf 0} \\
                             {\bf 0} & {\bf H}_k \\
                           \end{array}
                         \right] \in \mathbb{C}^{N^\prime \times M^\prime}$ and $\bar{{\bf x}}_k = [{\bf x}_k(2t+1), {\bf x}_k(2t+2)]^T \in \mathbb{C}^{M^\prime \times 1}$ with $N^\prime=2N$ and $M^\prime=2M$. Then the number of units with Pattern $1.3$ is equal to $\frac{3M^\prime-N^\prime}{2} = 1$ and the proposed beamforming design in Subsection~\ref{Proof_OneCluster} can be applied directly.
Note that the two ${\bf H}_k$ in $\bar{{\bf H}}_k$ can be either the same (implying the channel is invariant) or different from each other (implying the channel is time-varying).

%


\begin{thebibliography}{99}

\bibitem{ZhangMobicom06} S. Zhang, S. Liew, and P. P. Lam, \textquotedblleft
Hot topic: physical-layer network coding,\textquotedblright\ in \textit{Proc. ACM MobiCom'06}, Los Angeles, USA, Sept. 2006.

\bibitem{Petar07} P. Popovski and H. Yomo, \textquotedblleft Physical
network coding in two-way wireless relay channels,\textquotedblright\
in \textit{Proc. IEEE ICC'07}, Glasgow, Scotland, June 2007.

\bibitem{ZhangJSAC09} R. Zhang, Y.-C. Liang, C. C. Chai, and S. Cui,
\textquotedblleft Optimal beamforming for two-way multi-antenna relay
channel with analogue network coding,\textquotedblright\ \textit{IEEE J.
Select. Area. Comm.}, vol. 27, no. 5, pp. 699-712, June 2009.

\bibitem{NamIT10} W. Nam, S. Chung, and Y. H. Lee, \textquotedblleft Capacity of
the Gaussian two-way relay channel to within 1/2 bit,\textquotedblright\
\textit{IEEE Trans. Inf.\ Theory}, vol. 56, no. 11, pp. 5488-5494, Nov. 2010.

\bibitem{HJYangIT11} H. J. Yang, J. Chun, and A. Paulraj, \textquotedblleft
Asymptotic capacity of the separated MIMO two-way relay
channel,\textquotedblright\ \textit{IEEE Trans. Inf. Theory}, vol. 57, no.
11, pp. 7542-7554, Nov. 2011.

\bibitem{YangIT11} T. Yang, X. Yuan, P. Li, I. B. Collings, and J. Yuan,
\textquotedblleft A new physical-layer network coding scheme with eigen-direction alignment precoding for
MIMO two-way relaying,\textquotedblright\ \textit{IEEE Trans. Commun.}, vol. 61, no. 3, March 2013.

\bibitem{YuanIT13} X. Yuan, T. Yang, and I. B. Collings, \textquotedblleft Multiple-input multiple-output two-way relaying: A space-division approach,\textquotedblright\ \textit{IEEE Trans. Inf. Theory}, vol. 59, no. 10, pp. 6421-6440, Oct. 2013.


\bibitem{Sezgin_ISIT2009} A. Sezgin, M. A. Khajehnejad, A. S. Avestimehr, and B. Hassibi, \textquotedblleft Approximate capacity region of the two-pair bidirectional Gaussian relay network,\textquotedblright\ in \textit{IEEE ISIT'09}, Seoul, Korea, June 2009.

\bibitem{Chen_TWC2009} M. Chen and A. Yener, \textquotedblleft Multiuser two-way relaying: Detection and interference management strategies,\textquotedblright\ \textit{IEEE Trans. Wireless Commun.}, vol. 8, no. 8, pp. 4296-4305, Aug. 2009.

\bibitem{Jang_CL2010} Y. Jang and Y. Lee, \textquotedblleft Performance analysis of user selection for multiuser two-way amplify-and-forward relay,\textquotedblright\ \textit{ IEEE. Commun. Letters}, vol.14, no.11, pp. 1086-1088, Nov. 2010.
\bibitem{gunduz51} D. G$\mathrm{\ddot{u}}$nd$\mathrm{\ddot{u}}$z, A. Yener, A. Goldsmith, and H. V. Poor, \textquotedblleft The multiway relay channel,\textquotedblright\
\textit{IEEE Trans. Inf. Theory}, vol. 59, no. 1, pp. 51-63, Jan. 2013.


\bibitem{Chaaban_IT2013}
A. Chaaban, A. Sezgin, and A. S. Avestimehr, \textquotedblleft Approximate sum-capacity of the Y-channel,\textquotedblright\ \textit{IEEE Trans. Inf. Theory}, vol. 59, no. 9, pp. 5723-5740, Sept. 2013.

\bibitem{Lee_TWC2013} K. Lee, N. Lee, and I. Lee, \textquotedblleft Achievable degrees of freedom on MIMO two-way relay interference channels,\textquotedblright\ \textit{IEEE Trans. Wireless Commun.}, vol. 12, no. 4, pp. 1472-1480, April 2013.

\bibitem{Lee10} N. Lee, J. B. Lim, and J. Chun, \textquotedblleft Degrees of freedom of the MIMO Y channel: Signal space alignment for network coding,\textquotedblright\ \textit{IEEE Trans. Inf. Theory}, vol. 56, no. 7, pp. 3332-3342, July 2010.

\bibitem{Lee12} K. Lee, N. Lee, and I. Lee, \textquotedblleft Achievable degrees of freedom on \emph{K}-user Y channels, \textquotedblright\
\textit{IEEE Trans. Wireless Commun.}, vol. 11, no. 3, pp. 1210-1219, Mar. 2012.

\bibitem{Chaaban_ISIT2013} A. Chaaban, K. Ochs, and A. Sezgin, \textquotedblleft The degrees of freedom of the MIMO Y-channel,\textquotedblright\ in \textit{Proc. IEEE ISIT'2013}, Istanbul, Turkey, July 2013.

\bibitem{Wang_IT2014} C. Wang, \textquotedblleft Beyond one-way communication degrees of freedom of multiway relay MIMO interference networks,\textquotedblright\ submitted to \textit{IEEE Trans. Inf. Theory}, Jan. 2014, [Online]. Available: http://arxiv.org/abs/1401.5582.

\bibitem{Tian_IT2013} Y. Tian and A. Yener, \textquotedblleft Degrees of freedom for the MIMO multi-way relay channel,\textquotedblright\ \textit{IEEE Trans. Inf. Theory}, vol. 60, no. 5, pp. 2495-2511, May 2014.

\bibitem{Yuan_IWC2013} X. Yuan, \textquotedblleft MIMO multiway relaying with clustered full data exchange: Signal space alignment and degrees of freedom,\textquotedblright\ to appear in \textit{IEEE Trans. Wireless Commun.}, DOI: 10.1109/TWC.2014.2323243, June 2014.

\bibitem{Jafar08} V. R. Cadambe and S. A. Jafar, \textquotedblleft Interference alignment and degrees of freedom of the $K$-user interference channel,\textquotedblright\ \textit{IEEE Trans. Inf. Theory}, vol. 54, no. 8, pp. 3425-3441, Aug. 2008.

\bibitem{Wang_IT2011} C. Wang, T. Gou, and S. A. Jafar, \textquotedblleft Subspace alignment chains and the degrees of freedom of the three-user MIMO interference channel,\textquotedblright\ \textit{IEEE Trans. Inf. Theory}, vol. 60, no. 5, pp. 2432-2479, May 2014.



\end{thebibliography}

\end{document}